\newif\iflncs
\newif\ifnotes

\newif\ifblind\blindfalse

\notesfalse
\lncsfalse

\iflncs
\RequirePackage{amsmath}
\documentclass{llncs} 
\usepackage{StyleSheets/template_lncs}
\usepackage{StyleSheets/global}
\else
\documentclass[runningheads, 11pt]{article}
\usepackage{times}
\usepackage{physics}
\usepackage{StyleSheets/template}
\usepackage{StyleSheets/global}
\fi

\theoremstyle{plain}
\newtheorem{problemcustom}{Problem}

\usepackage{global}
\usepackage{graphicx} 
\usepackage{amsfonts}
\usepackage{amsthm}
\usepackage[operators,sets]{cryptocode}
\usepackage[margin=1in]{geometry}
\usepackage{float}

\title{Public Key Encryption from High-Corruption \\ Constraint Satisfaction Problems}
\author{Isaac M Hair \thanks{\texttt{isaacmhair@gmail.com}} \\ UCSB, UCLA \and Amit Sahai \thanks{\texttt{sahai@cs.ucla.edu} This research was supported in part from a Simons Investigator Award, DARPA SIEVE award, NTT Research, NSF grant 2333935, BSF grant 2022370, a Xerox Faculty Research Award, a Google Faculty Research Award, an Okawa Foundation Research Grant, and the Symantec Chair of Computer Science. This material is based upon work supported by the Defense Advanced Research Projects Agency through Award HR00112020024.} \\ UCLA}

\begin{document}

\maketitle
\noteswarning
\thispagestyle{empty}


\begin{abstract}
    We give a public key encryption scheme with plausible quasi-exponential security based on the conjectured intractability of two constraint satisfaction problems (CSPs), both of which are instantiated with a corruption rate of $1 - o(1)$. First, we conjecture the hardness of a new large alphabet random predicate CSP (LARP-CSP) defined over an arbitrary but strongly expanding factor graph, where the vast majority of predicate outputs are replaced with random outputs. Second, we conjecture the hardness of the standard $k$XOR problem defined over a random factor graph, again where the vast majority of parity computations are replaced with random bits. In support of our hardness conjecture for LARP-CSPs, we give a variety of lower bounds, ruling out many natural attacks including all known attacks that exploit non-random factor graphs.

    Our public key encryption scheme is the first to leverage high corruption CSPs while simultaneously achieving a plausible security level far above quasi-polynomial. At the heart of our work is a new method for planting cryptographic trapdoors based on the \emph{label extended factor graph} for a CSP. 
    
    Along the way to achieving our result, we give the first uniform construction of an error-correcting code that has an expanding, low density generator matrix while simultaneously allowing for efficient decoding from a $1 - o(1)$ fraction of corruptions.
    
\end{abstract}

\newpage

\setcounter{page}{1}

\section{Introduction}



Public key encryption (PKE) \cite{goldwasser1984probabilistic} is a fundamental cryptographic primitive that enables secure transmission of data between two parties over an insecure channel with no prior communication. Despite several decades of research, there are very few sources of computational hardness from which we know how to build PKE, mainly: number-theoretic problems \cite{diffie1976new, rivest1978method, rabin1979digitalized}, and coding-theoretic/lattice problems \cite{mceliece1978public, alekhnovich2003more, regev2009lattices}. It's known that the number-theoretic problems are broken by large scale quantum computers \cite{shor1999polynomial}, which leaves the troublesome possibility that a mathematical breakthrough on the coding-theoretic/lattice problems could render almost all PKE schemes insecure.
As such, in the words of Applebaum, Barak, and Wigderson \cite{applebaum10},

\begin{center}
    \emph{``A major goal of cryptography is to base public-key encryption on assumptions that are weaker, or at least different, than those currently used.''}
\end{center}

The major challenge in designing any public key encryption scheme is finding methods for planting  ``trapdoors'' that allow for decryption, while still ensuring hardness of breaking the encryption.  Despite decades of research, we have few techniques. Developing new techniques, which is the focus of this work, spurs research in both the cryptographic and algorithms communities.
The value of research in this direction is multifaceted. First and foremost, finding a broader set of hardness conjectures that imply PKE gives us more confidence in the existence of this primitive, even in a post-quantum world. On a more fundamental level, this type of research gives us a better understanding of the mathematical nature of PKE, and the relationship between PKE and average-case hardness of NP problems. 

\paragraph{Adopting a high-error CSP Perspective on PKE.}

In this paper we build a public key encryption scheme from a new source of hardness: \emph{very high corruption} constraint satisfaction problems (CSPs).\footnote{Our work has led to follow-up work that also uses our perspective \cite{Man26}.} Constraint satisfaction problems are a foundational topic in average-case complexity (see \cite{krokhin2017constraint} for a survey), and we have a plethora of lower bounds \cite{ben1999random, impagliazzo1999lower, grigoriev2001linear, schoenebeck2008linear, nordstrom2014biased, feldman2015complexity, applebaum2016algebraic, applebaum2016algebraic, mori2016lower, kothari2017sum, chlamtavc2017approximation, wein2019kikuchi, zhuk2020proof, mikvsa2024generalized} that give a well defined region in which CSPs appear to be computationally intractable.

The CSPs of interest to us will take the following form. We are given a set of $n$ variables and a set of $m = \text{quasipoly}(n)$ constraints, each of which depends on a size $k = \text{polylog}(n)$ subset of the variables. The task is to determine whether there is a hidden ``planted'' assignment to the variables which satisfies an unusually large number of the constraints, or whether the constraints do not respect any planted assignment. Our CSPs will have a \emph{corruption rate} of $1 - o(1)$, meaning that (in the case there does exist a planted solution) each constraint is replaced with a corrupted version of itself independently with probability $1 - o(1)$. The security of our PKE scheme will follow from the intractability of two specific CSPs in this high corruption regime:
\begin{enumerate}
    \item A new \emph{large alphabet, random predicate} CSP (LARP-CSP), where the factor graph for constraints and variables is arbitrary but required to be strongly expanding. LARP-CSPs are closely related to Goldreich's pseudorandom generator \cite{goldreich2011candidate} and planted hypergraph problems \cite{dhawan2025detection}. We give several lower bounds, ruling out many natural attacks on LARP-CSPs \emph{including all known attacks which exploit structured factor graphs} \cite{oliveira2018expander}. Intuitively, the conjectured hardness of LARP-CSPs comes from the massive entropy contributed by the random predicates, and LARP-CSPs appear to be computationally intractable even in the \emph{corruption-free} regime. (But for our PKE scheme all LARP-CSPs will have a $1 - o(1)$ corruption rate.)
    \item The standard $k$XOR problem, where the factor graph for constraints and variables is chosen at random. The main difference between our $k$XOR problem and more typical versions used for cryptogrpahy is that our corruption rate will be $1 - o(1)$, as opposed to inverse polynomial.
\end{enumerate}

\paragraph{A New Way to Plant Trapdoors.}  In this paper we introduce the idea of using the \emph{label extended factor graph} for one CSP to plant a trapdoor in another CSP. Arguments based on label extended factor graphs are common in average-case complexity \cite{steurer2010complexity, guruswami2011lasserre, bhaskara2012polynomial, alev2017approximating, chlamtavc2017approximation}, but our work is the first to leverage them for buildin cryptographic trapdoors\footnote{Our work has already led to follow-up work~\cite{Man26} that makes use of the label extended factor graph for planting trapdoors in the context of a different CSP.}.

Along the way to designing our trapdoor technique and decryption algorithm, we give the first \emph{uniform} construction of an error correcting code with an expanding, low density generator matrix that allows for efficient decoding from a $1 - o(1)$ fraction of corruptions.
A previous  work by Oliveira, Santhanam, and Tell \cite{oliveira2018expander} achieved this goal only non-constructively: they proved the \emph{existence} of expander graphs that could be interpreted as giving rise to such a code, but did not show how to construct it.

\paragraph{Beyond Quasi-Polynomial Security.}
There are a couple existing PKE schemes \cite{applebaum10, yu2016cryptography} that can be re-framed as relying on the conjectured intractability of high-corruption constraint satisfaction problems. However, all of these schemes achieve a security level of at best $\lambda^{O(\log\lambda)}$, which is quasi-polynomial. 
The issue comes from a \textbf{brute-force decryption barrier.} One way to interpret decryption in these schemes is as executing an exponential-time guess-and-check algorithm on a logarithmic size hidden component in the ciphertext. Because the size of the component itself is only logarithmic, an adversary can always guess its location in time $\lambda^{O(\log \lambda)}$.

Our PKE scheme is the first to leverage high-corruption CSPs without resorting to a brute-force-type decryption algorithm, enabling us to achieve a significantly higher security level. In particular, our public key encryption scheme  achieves a plausible \emph{quasi-exponential} (see \cite{sahai2026note}) security level of $\exp{2^{\log^{\Omega(1)} \lambda}}$, where $\lambda^{O(1)}$ is the running time of the algorithms in our scheme. This is significantly larger\footnote{We refer the reader to~\cite{sahai2026note} for a discussion of why quasi-exponential security is a natural notion, and why it is more similar to subexponential security than to quasi-polynomial security.} than quasi-polynomial $\lambda^{O(\log \lambda)}$ and comes close to sub-exponential security, which would be of the form $\exp{\lambda^c}$ for a constant $c \in (0, 1)$.

\paragraph{On the ``Win-Win'' Perspective for Hardness Conjectures for Cryptography.}
One of the great advances in  cryptography that occurred with the advent of probabilistic encryption~\cite{goldwasser1984probabilistic} and related breakthroughs is what is known as the ``\emph{win-win}'' perspective with regard to hardness assumptions (see, e.g.,~\cite{goldwasser2015cryptographic} and references therein). This posits that, ideally, hardness conjectures that are useful for cryptography should be longstanding  conjectures in mathematics, like the hardness of integer factorization. When this is so, then either we get a secure cryptosystem, or we refute a longstanding  conjecture! This is a beautiful perspective, especially from the point of view of designing cryptosystems with an eye toward real-world deployment. 

But what about theoretical investigations into foundations of cryptography, like the present research? Indeed, if we look historically, we find that many longstanding mathematical conjectures central to cryptography actually \emph{arose} from cryptography, perhaps most notably the Diffie Hellman problem, where the Diffie Hellman problem was based on the longstanding conjecture that Discrete Logarithm is hard. Similarly, the Decisional Unbalanced Expansion (DUE) Assumption from~\cite{applebaum10} had not been posed earlier, and was far from a longstanding conjecture in mathematics, but it too was inspired by the longstanding conjecture that the Densest Subgraph Problem is hard on average.

Nevertheless, we argue that there is a strong ``win-win'' flavor to these hardness conjectures, including our new large-alphabet random-predicate CSP hardness conjecture, which itself is inspired by the longstanding conjecture that CSPs are hard to solve on average. Namely, by posing these hardness conjectures, that are proven to be useful cryptographically, we create a scientific environment where \emph{breaking these conjectures would provide critically needed understanding}
about what freedom we have in co-designing cryptographic schemes and the hardness conjectures they rely on. 

Indeed, this is not speculation: the  goal of building indistinguishability obfuscation schemes was achieved \cite{jain2021indistinguishability} only through a nearly decade-long research community process of exploring many hardness conjectures that turned out to be broken (e.g. Annihilation Attacks~\cite{miles2016annihilation} breaking the hardness conjecture underlying the first cryptographic iO scheme~\cite{garg2016candidate}). In our context, one win becomes ``deepening our understanding of the hardness of natural average-case problems'', while the other stays ``having a secure cryptosystem.''
Indeed, recent evidence suggests how useful it can be to posit general recipes for hardness like the low degree conjecture of Hopkins~\cite{
hopkins2018statistical} even when the conjecture is subject to attacks~\cite{buhai2025quasi}. 

We believe that adopting this wider ``win-win'' perspective, already implicitly advocated by~\cite{applebaum10}, is crucial to developing deeper and more diverse algorithmic foundations for cryptography. 

\subsection{Large Alphabet Random Predicate CSPs} \label{sec:LARP-CSPintro}
In this subsection we discuss our large alphabet random predicate CSPs, and present several lower bounds. We start with a few basic definitions (see Section \ref{sec:prelims} for details). An $(m, n, k)$-matrix is a matrix $\mat H \in \mathbb{F}_2^{m \times n}$ such that each row has exactly $k$ nonzero entries. We say that such a matrix is a $(\gamma, t)$-expander if for every row-induced submatrix $\mat H'$ containing $t'$ rows with $1 \leq t' \leq t$, there are at least $\gamma k t'$ distinct columns containing a nonzero entry of $\mat H'$. For values of $\gamma$ close to $1$, this means that every subset of at most $t$ rows has nearly all of its nonzero entries lying in distinct columns, i.e. we have strong boundary expansion. We use $N_{\mat H}(i, j)$ to denote the \emph{column index} of the $j$th nonzero entry in the $i$th row of $\mat H$.

Below we give our hardness conjecture for large alphabet random predicate CSPs. See Section \ref{sec:conjectures} Conjecture \ref{conj:LARP-CSP} for a slightly more general form.

\begin{conjecture}[High Corruption LARP-CSP Conjecture] \label{conj:LARP-CSPinformal}
    Let $\mat H$ be any $(1-o(1), n^{1 - o(1)})$-expanding $(m, n, k)$-matrix,\footnote{For our PKE scheme to be secure, we only need this conjecture to hold for a specific distribution over expanders. We state the conjecture in a more general form to encourage algorithmic research.} where $k = (\log n)^{\Theta(1)}$ and $m \leq n^{o(k)}$, and let $\alpha(n)$ be any function that grows as $1 - o(1)$. Let $\Sigma, \Gamma$ be any alphabets satisfying $\vert\Sigma\vert = (nm)^{\log^{\Theta(1)}(nm)}$ and $\vert\Gamma\vert \leq \vert\Sigma\vert^{3k/4}$. Sample $m$ random functions $f_i : \Sigma^k \rightarrow \Gamma$, and let $\mathcal{F}$ be the set of all $f_i$.\footnote{Technically, our ``predicate functions $f_i$'' do not fit the definition of a predicate, because they have non-binary output. But they can be re-phrased in terms of a predicate $f_i'$ by setting $f_i'(\sigma_1, \ldots, \sigma_k) = 1$ if and only if $f_i(\sigma_1, \ldots, \sigma_k) = \mat b_i$. The problem is now to determine if there exists an assignment $\mat s \in \Sigma^n$ such that approximately a $1 - \alpha$ fraction of the predicates evaluate to 1.} Then no $\mathrm{poly}(|\Sigma|^{k})$ 
    size algorithm can distinguish, with advantage more than $1/4$, between the following two distributions.
    \begin{enumerate}
        \item Null distribution: $(\mat H, \mathcal{F}, \mat b)$, where $\mat b \in \Gamma^m$ is sampled at random. \label{item:nulldistrintro}
        \item Planted distribution: $(\mat H, \mathcal{F}, \mat b)$, where $\mat b \in \Gamma^m$ is sampled by picking $\mat s \in \Sigma^n$ at random and setting
        \[\mathbf{b}_i=\begin{cases} \text{Random element of $\Gamma$,} & \text{with probability $\alpha = 1 - o(1)$. } \\  f_i(\mat s_{N_{\mat H}(i, 1)}, \ldots, \mat s_{N_{\mat H}(i, k)}), & \text{ otherwise.} \end{cases}\] \label{dist:LARP-CSPplantedintro}
    \end{enumerate}
\end{conjecture}

In other words, the conjecture states that for any \emph{highly expanding} matrix $\mat H$, if we sample constraints defined by pairs $(f_i, \mat b_i)$ conditioned on obeying a planted assignment, but then corrupt each constraint with probability $1 - o(1)$, the result is indistinguishable from a set of random constraints, for distinguishers of size polynomial in the description of the random functions $f_i$. Note that the domain for each $f_i$ will in fact be \emph{super-polynomially} larger than the size of the factor graph. So the entropy contribution of a \emph{single} totally random constraint $(f_i, \mat b_i)$ is super-polynomially larger than the entropy contribution that would have been present if the \emph{entire} factor graph had been chosen at random.

Observe that Conjecture \ref{conj:LARP-CSPinformal} is closely related to Goldreich's pseudorandom generator (Goldreich's PRG) \cite{goldreich2011candidate}. The main differences are that in Goldreich's PRG the seed $\mat s$ is a Boolean vector, and there is no corruption of the output vector $\mat b$. We show in Section \ref{sec:lowdegreealgorithms} that Conjecture \ref{conj:LARP-CSPinformal} can also be viewed in terms of a planted hypergraph problem, where the observed hypergraph is defined over the vertex set $[n] \times \Sigma$, and we add an edge of arity $k$ to represent for each $i \in [m]$ the preimages of $\mat b_i$ for the function $f_i$.

There are a few papers which explicitly make use of hardness assumptions where the problem is defined over an arbitrary, non-random expanding matrix/hypergraph, albeit in the setting of polynomial stretch. To give a non-exhaustive summary,
\begin{enumerate}
    \item Applebaum and Raykov (\cite{applebaum2016fast}, Assumption 1) used the existence of local pseudorandom generators with arbitrary expanding factor graphs to get highly efficient constructions of pseudorandom functions. Their assumption remains unbroken in part because of their \emph{choice of predicate}.
    \item Ghosal, Hair, Jain, and Sahai (\cite{ghosal2025using}, Conjecture 1.3) used the assumed hardness of noisy $k$LIN over arbitrary expanding factor graphs to help build public key encryption from the planted clique conjecture. Their assumption sidesteps known attacks in part because each \emph{coefficient} of each linear predicate is chosen at random, which injects some extrinsic entropy into the problem.
    \item Elimelech and Huleihel \cite{elimelech2025detecting} explore statistical-computational gaps for a variety of planted detection problems, with one of their contributions being a set of lower bounds for detecting arbitrary planted structures in random graphs.
\end{enumerate}

Below we give lower bounds in support of the High Corruption LARP-CSP Conjecture. \emph{All of our lower bounds will be for the corruption-free version of the problem in the conjecture.}

\paragraph{Low Complexity Embedding Attacks.}
Oliveira, Santahnam, and Tell showed that in the setting of quasi-polynomial stretch, Goldreich's PRG is broken when instantiated with a very carefully chosen expanding factor graph \cite{oliveira2018expander}. Could similar attacks impact  Conjecture \ref{conj:LARP-CSPinformal}?

In Section \ref{sec:lowcomplexembed}, we define the notion of a \emph{low complexity embedding attack}, which captures and generalizes the Oliveira-Santhanam-Tell attacks. Then we show that \emph{the entropy of the random functions alone is enough for the decision problem in Conjecture \ref{conj:LARP-CSPinformal} to unconditionally resist attacks of this form, even in the errorless regime.} As far as are aware, ours is the first formalization of any lower bound against this family of attacks.

The Oliveira-Santhanam-Tell family of attacks works as follows. First exhibit a specific expanding matrix $\mat H$ and a specific predicate $f$ such that both are described by a small AC$^0[+]$ circuit. Then \emph{no matter the choice of seed}, the output $\mat b$ of Goldreich's PRG when instantiated with $\mat H$ and $f$ will be the truth table of a small AC$^0[+]$ circuit. Using \emph{natural property algorithms} (see e.g. \cite{razborov1987lower, smolensky1987algebraic, razborov1994natural, carmosino2016learning}) we can distinguish such a vector $\mat b$ from a random vector, with at least constant advantage.

Another way to view the attacks is as follows. Let $\mathcal{L}$ be the set of all truth tables for small AC$^0[+]$ circuits. The distinguisher works by simply measuring the Hamming distance from an observed vector $\mat b$ to the closest member of $\mathcal{L}$ and returning ``planted'' or ``null'' based on whether the distance is small or large, respectively. In the context of the LARP-CSP Conjecture, we generalize this attack in several ways, the most important of which are:
\begin{itemize}
    \item We allow the adversary to pick any expanding matrix $\mat H$ \emph{along with any subset $\mathcal{L}$} against which to compute the distance, with the only requirement being that $\mathcal{L}$ is fixed before the functions in $\mathcal{F}$ are fixed (without this requirement every assumption would be broken). 
    \item We allow the adversary to define \emph{any coordinate-wise embeddings} that will be applied to an observed vector $\mat b$ before measuring its distance to the closest member of $\mathcal{L}$. Critically, the adversary is allowed to choose the coordinate-wise embeddings \emph{after gaining knowledge} of the functions in $\mathcal{F}$.
\end{itemize}
In Theorem \ref{thm:lowerboundembedding} we show that all attacks of this form will have distinguishing advantage at most $\vert\Sigma\vert^{-\Omega(k)}$, even for the corruption-free version of  the LARP-CSP Conjecture, \emph{and even when allowing the adversary unbounded running time to pick the best subset $\mathcal{L}$, to pick the best coordinate-wise embeddings, and to compute the distance to $\mathcal{L}$}.

\paragraph{Low Degree Polynomial Algorithms.}
Many efficient algorithms for planted detection problems can be represented in terms of a low degree polynomial over the reals. In fact, low degree polynomials give the best known algorithms for planted clique, sparse PCA, community detection, p-spin optimization, and many versions of the planted constraint satisfaction problem \cite{kunisky2019notes}.

The connection between low degree polynomials and efficient algorithms is not merely circumstantial. The best low degree polynomial algorithm for any hypothesis testing problem is the \emph{low degree truncation} of a certain optimal hypothesis testing algorithm given by the Neyman-Pearson Lemma (see \cite{kunisky2019notes} for a discussion). Recent papers \cite{feldman2017statistical, brennan2020statistical} suggest a nearly tight correspondence between lower bounds against low degree polynomial algorithms and against \emph{statistical query algorithms}. Under the Pseudocalibration Conjecture (\cite{hopkins2017power} Conjecture 1.2), lower bounds against low degree polynomial algorithms for certain ``well behaved'' hypothesis testing problems imply lower bounds against \emph{sum-of-squares algorithms} with comparable degree.

For these reasons, proving lower bounds against low degree polynomials is commonly used to support computational hardness conjectures for problems in average case complexity \cite{bandeira2019computational, kunisky2021hypothesis, bandeira2021spectral, wein2022optimal, wein2023average, ding2023low, kothari2023planted, yu2024counting, luo2024computational, montanari2025equivalence} and cryptography \cite{bogdanov2023public, bogdanov2024low}.

In Section \ref{sec:lowdegreealgorithms} Theorem \ref{thm:lowdegreealgorithms}, we rule out degree-$n^{0.99}$ polynomial algorithms for the problem in the LARP-CSP Conjecture, even in its corruption-free form. In many ways degree-$\leq n^{0.99}$ polynomials are a proxy for ``combinatorial'' algorithms that run in time $\exp{n^{0.99}/\log^{O(1)}n}$ \cite{kunisky2019notes}, so if degree-$\leq n^{0.99}$ polynomials are ineffective for the hypothesis testing problem this gives strong evidence that $\exp{n^{0.99}/\log^{O(1)}n}$ time ``combinatorial'' algorithms are also ineffective. Denoting by $\lambda$ the running time of the algorithms in our PKE scheme, the lower bound becomes $\exp{2^{\log^{\Omega(1)}\lambda}}$.

\paragraph{Polynomial Calculus Refutations.}

The polynomial calculus proof system was introduced by Clegg, Edmonds, and Impagliazzo as one way to formalize linear-algebraic algorithms such as Gaussian elimination and basic Gröbner basis computations\footnote{In fact, Clegg, Edmonds, and Impagliazzo referred to the polynomial calculus proof system as the ``Gröbner proof system.''} \cite{clegg1996using}. The ability to model these linear-algebraic algorithms makes polynomial calculus a very different algorithmic framework from e.g. low degree polynomial algorithms, sum-of-squares algorithms, and statistical query algorithms. Lower bounds against polynomial calculus refutations have been used to suggest hardness for various CSPs including Goldreich's PRG and random SAT instances \cite{ben1999random, impagliazzo1999lower, nordstrom2014biased, applebaum2016algebraic, mikvsa2024generalized}.

For the corruption-free version of the problem in the LARP-CSP Conjecture, the adversary is given a tuple $(\mat H, \mathcal{F}, \mat b)$, and the goal is to certify that there does \emph{not exist} a secret vector $\mat s \in \Sigma^n$ such that $\mat b_i = f_i(\mat s_{N_{\mat H}(i, 1)}, \ldots, \mat s_{N_{\mat H}(i, k)})$ for all $i \in [m]$, i.e. the system is \emph{unsatisfiable}. The first step is to initialize a system of polynomials over a finite field (typically $\mathbb{F}_2$) that encode for all $i \in [m]$ the equation $\mat b_i = f_i(\mat s_{N_{\mat H}(i, 1)}, \ldots, \mat s_{N_{\mat H}(i, k)})$. Then the adversary chooses any sequence of \emph{derivations} that are permitted by the polynomial calculus proof system. The goal is to derive the equation $1 = 0$, which is only possible if the starting system was unsatisfiable. Using this approach we get a distinguisher with one-sided error, because the adversary can return ``null distribution'' if the refutation is successful and otherwise return ``unsure.'' This type of one-sided error is present in other common frameworks such as the sum of squares hierarchy.

In Section \ref{sec:polynomialcalculus} Theorem \ref{thm:polycalc}, we show that unsatisfiable $(\mat H, \mathcal{F}, \mat b)$ tuples do not have polynomial calculus refutations over $\mathbb{F}_2$ of size less than $\exp{n^{0.99}}$, which translates into a size lower bound of $\exp{2^{\log^{\Omega(1)}\lambda}}$. \emph{This holds even when the adversary is permitted unbounded time to choose the best embedding from $\Sigma$ to a vector over $\mathbb{F}_2$ for each variable representing the vector $\mat s$.} Our proof techniques are inspired by a work of Applebaum and Lovett \cite{applebaum2016algebraic}, in which they prove similar lower bounds for Goldreich's PRG.

\subsection{Random $k$XOR} \label{sec:kXORintro}
Below we give our hardness conjecture for random $k$XOR. A slightly more general form is stated in Section~\ref{sec:conjectures} as Conjecture \ref{conj:kXOR}.

\begin{conjecture}[High Corruption Random $k$XOR Conjecture] \label{conj:kXORinformal}
    Let $\mat H$ be a random $(m, n, k)$-matrix, where $k = (\log n)^{\Theta(1)}$ and $m \leq n^{k/3}$, and let $\beta(n)$ be any function that grows as $1 - o(1)$. Then no poly$(m)$ size algorithm can distinguish, with advantage more than $1/4$, between the following two distributions.
    \begin{enumerate}
        \item Null distribution: $(\mat H, \mat b)$, where $\mat b \in \mathbb{F}_2^m$ is sampled at random.
        \item Planted distribution: $(\mat H, \mat b)$, where $\mat b \in \mathbb{F}_2^m$ is sampled by picking $\mat s \in \mathbb{F}_2^n$ at random and setting
        \[\mathbf{b}_i=\begin{cases} \text{Random element of $\mathbb{F}_2$,} & \text{with probability $\beta = 1-o(1)$. } \\ \mat s_{N_{\mat H}(i, 1)} + \ldots + \mat s_{N_{\mat H}(i, k)}, & \text{ otherwise.} \end{cases}\] \label{item:plantedkXORcase}
    \end{enumerate}
\end{conjecture}

Constant-corruption $k$XOR is a classic problem in learning theory for all choices of $m$, e.g. $m = n^{O(1)}, m = n^{\log^{O(1)}n},$ and $m = \exp{n^{O(1)}}$ \cite{feldman2006new, valiant2012finding, chen2024algorithms}, and it is widely believed to be computationally intractable for $\exp{n^{1 - \Omega(1)}}$ size adversaries. For perspective, the best known algorithms for the related constant-noise LPN problem (which is just $k$XOR but where the rows of $\mat H$ are dense random vectors) run in barely subexponential time: Blum, Kalai, and Wasserman gave an algorithm running in time $2^{O(n/\log n)}$ when $m = 2^{\Omega(n/\log n)}$ \cite{blum2003noise}, and Lyubashevsky gave an algorithm running in time $2^{O(n/\log\log n)}$ when $m = n^{1 + \varepsilon}$ \cite{lyubashevsky2005parity}.

\paragraph{Lower Bounds for the High Corruption Random $k$XOR Conjecture.}
Applebaum, Barak, and Wigderson \cite{applebaum10} showed that whenever the matrix $\mat H$ is chosen at random and is of height $n^{(1/2 - \Omega(1))k}$ (which is the case for our problem), the smallest subset of coordinates of $\mat b$ that have any bias towards zero will be of size $t = n^{\Omega(1)} = \exp{2^{\log^{\Omega(1)} \lambda}}$. This means that the problem in Conjecture \ref{conj:kXORinformal} fools all local algorithms and linear tests. They also point out that we can apply a result of Braverman \cite{braverman2008polylogarithmic} to show that any $AC^0$ circuit of size $\exp{n^{o(1)}}$ has distinguishing advantage $o(1)$ for the problem in Conjecture \ref{conj:kXORinformal}. This translates into a lower bound of $\exp{2^{\log^{\Omega(1)}\lambda}}$ on the size of any AC$^0[+]$ circuit with non-vanishing distinguishing advantage.

Grigoriev \cite{grigoriev2001linear}, Schoenebeck \cite{schoenebeck2008linear}, and Kothari, Mori, O'Donnell, and Witmer \cite{kothari2017sum} gave successively stronger/more general sum-of-squares lower bounds for CSPs, including random $k$XOR as an important special case. In our setting, these results show that any sum of squares algorithm for the problem in Conjecture \ref{conj:kXORinformal} with constant distinguishing advantage must have degree $n^{\Omega(1)}$. These algorithms run in $\exp{n^{\Omega(1)}} = \exp{2^{\log^{\Omega(1)}\lambda}}$ time.

Wein, El Alaoui, and Moore \cite{wein2019kikuchi} gave a unified framework for analyzing belief propagation and approximate message passing algorithms from statistical physics, showing that these algorithms encounter the same barrier as sum-of-squares when applied to random $k$XOR. As such these results imply that any belief propagation/approximate message passing algorithm that has constant distinguishing advantage for the problem in Conjecture \ref{conj:kXORinformal} must run in time $\exp{2^{\log^{\Omega(1)}\lambda}}$. Feldman, Perkins, and Vempala \cite{feldman2015complexity} gave lower bounds for a powerful class of statistical query algorithms that also matches this lower bound.

\subsection{Our Results and Techniques} \label{sec:ourresults}

Our main result is to construct a semantically secure public key encryption scheme from CSPs with a corruption rate of $1 - o(1)$. We prove the following theorem in Section \ref{sec:puttingalltogether}.

\begin{theorem} \label{thm:pkeschemeinformal}
    Suppose that Conjecture \ref{conj:LARP-CSPinformal} (High Corruption LARP-CSP) and Conjecture \ref{conj:kXORinformal} (High Corruption $k$XOR) both hold. Then there is a semantically secure public key encryption scheme.
\end{theorem}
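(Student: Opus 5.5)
The plan is to build the scheme in three layers and then argue security by a short hybrid argument, one hybrid per conjecture.

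\textbf{The code.} First I will use the label extended factor graph of a LARP-CSP instance as a trapdoor, and turn it into an explicit code. Concretely, the secret key generation samples a $(1-o(1), n^{1-o(1)})$-expanding $(m,n,k)$-matrix $\mat H$ together with a planted assignment $\mat s \in \Sigma^n$. The expanding matrix is taken from the uniform construction of a low-density-generator code decodable from a $1-o(1)$ fraction of corruptions, which the paper promises. The key structural point is the following. Once $\mat s$ is known, the label extended graph on $[n]\times\Sigma$ collapses, for each constraint $i$, to the single tuple $(\mat s_{N_{\mat H}(i,1)},\ldots,\mat s_{N_{\mat H}(i,k)})$. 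This lets the key holder pick out, inside a much larger random-looking binary matrix, a hidden copy of the decodable expanding code. Publicly, this copy is indistinguishable from a random sparse $(m',n',k)$ matrix.

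\textbf{The scheme.} The public key is a $k$XOR-type sparse matrix $\mat H'$ whose rows are indexed by pairs (constraint, label). The planted rows are the ones consistent with $\mat s$, and they form the decodable code. All other rows are random, with the LARP-CSP data $(\mathcal F,\mat b)$ hiding which rows are which. To encrypt a bit $\mu$:
\begin{itemize}
\item for $\mu=0$, output $\mat H'\mat x + \mat e$, where $\mat x$ is random and $\mat e$ is noise at rate $\beta = 1-o(1)$;
\item for $\mu=1$, output a uniformly random vector.
\end{itemize}

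\textbf{Correctness.} The decryptor restricts the ciphertext to the planted rows. On those rows the $\mu=0$ ciphertext is a codeword of the decodable code, corrupted at rate $1-o(1)$ but with an $o(1)$ fraction of uncorrupted positions still surviving. The efficient $1-o(1)$-corruption decoder recovers $\mat x$ restricted to the relevant coordinates, and then re-encoding and checking agreement (a statistical test) separates $\mu=0$ from $\mu=1$. The parameters $\alpha,\beta$ and $|\Gamma|\le|\Sigma|^{3k/4}$ must be set so that enough planted rows remain uncorrupted for decoding while the corruption rate stays $1-o(1)$.

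\textbf{Security.} The hybrids go as follows.
\begin{enumerate}
\item By Conjecture \ref{conj:LARP-CSPinformal}, replace $\mat b$ with a uniform vector. The public key then carries no information about $\mat s$, so the planted rows are statistically hidden and $\mat H'$ becomes distributed as a random sparse matrix. For this step the expander $\mat H$ must satisfy the conjecture's hypotheses, and the corruption of the LARP-CSP instance must be simulated by the key generator.
\item By Conjecture \ref{conj:kXORinformal}, $\mat H'\mat x+\mat e$ is indistinguishable from uniform, so encryptions of $0$ and $1$ are indistinguishable.
\end{enumerate}

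\textbf{Main obstacle.} The crux is the first hybrid. It must be shown that, after LARP-CSP becomes null, the induced public matrix is \emph{exactly} (or statistically close to) a random $(m',n',k)$-matrix with $m'\le n'^{k/3}$, even though the hidden planted subcode is an explicit expander rather than a random one. I would try to make each row's support depend on $\mat s$ only through the labels, so that randomizing $\mat b$ randomizes the label choice uniformly. The remaining difficulty is balancing the parameters so that the decoder works at corruption $1-o(1)$ while both conjectures' parameter regimes hold simultaneously.
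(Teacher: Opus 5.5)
Your architecture matches the paper's: a LARP-CSP instance over the explicit Reed--Muller-based expander at key generation, the positions of the planted rows as the trapdoor, a $k$XOR sample at encryption, and hybrids over the two conjectures. There are, however, two genuine gaps.

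\textbf{First gap: the public key.} It is left unspecified exactly at the crux you flag, and the description you do give does not work. If $\mat H'$ lives on the label-extended vertex set $[n]\times\Sigma$, then even for uniform $\mat b$ it carries the structure of the explicit expander $\mat H$. Conjecture \ref{conj:kXORinformal} then does not apply, since it needs a \emph{random} factor graph. If instead the non-planted rows are fresh random rows, $\mat H'$ is not a function of the LARP-CSP instance, so your first hybrid is not a reduction to Conjecture \ref{conj:LARP-CSPinformal}.

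Your closing suggestion points the right way, but the missing idea is to forget variable positions entirely. Index the columns by $\Sigma$ alone. Take as rows the indicators of the symbol sets of \emph{all} tuples $(\sigma_1,\ldots,\sigma_k)$ with $f_i(\sigma_1,\ldots,\sigma_k)=\mat b_i$ for some $i$. Delete tuples with a repeated symbol, pad with random weight-$k$ rows to height exactly $\vert\Sigma\vert^{k/3}$, permute the rows, and do not publish $(\mathcal F,\mat b)$. This buys three things:
\begin{itemize}
\item $\mat H'$ is computable from $(\mathcal F,\mat b)$, so the first hybrid is a direct LARP-CSP reduction after fixing a good expander by averaging.
\item For uniform $\mat b$ the tuple set is uniformly random and independent of $\mat H$, so $\mat H'$ is distributed as a random $(\vert\Sigma\vert^{k/3},\vert\Sigma\vert,k)$-matrix, meeting the $k$XOR hypothesis. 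This, not the number of surviving planted rows, is what $\vert\Gamma\vert=\vert\Sigma\vert^{3k/4}$ is for: there are only about $m\vert\Sigma\vert^{k/4}<\vert\Sigma\vert^{k/3}$ such tuples.
\item Aborting unless $\mat s$ is injective (failure probability $O(n^2/\vert\Sigma\vert)=o(1)$) ensures that the planted rows, read on columns $\{\mat s_j\}$, are a column-permuted copy of the surviving rows of $\mat H$, with $\mat x_j:=\mat t_{\mat s_j}$ uniform.
\end{itemize}
The extracted word then has erasure rate $\alpha$ as well as corruption rate $\beta$, and the decoder must tolerate both. The paper fills erasures with random bits and takes $\alpha=\beta=1-\sqrt{1-\gamma}$, so the effective corruption rate equals the Saptharishi--Shpilka--Volk threshold $\gamma$.

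\textbf{Second gap: amplification.} Both conjectures only bound the distinguishing advantage by $1/4$. Your hybrids also need a third step, switching back to the real key on the $\mu=1$ side, so together they bound the adversary's advantage only by a constant of about $3/4$. Hence ``encryptions of $0$ and $1$ are indistinguishable'' does not follow. The paper proves only $(\kappa,\eta)=(0.99,0.9)$-security for the basic scheme and then applies the Holenstein--Renner amplification theorem, which requires $\kappa^2>\eta$. This is why decryption must succeed with probability $1-o(1)$, a quantitative requirement your plan never tracks.
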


At the heart of our result is a technique which \emph{generically} allows us to construct a PKE scheme from a certain type of linear error correcting code that can handle a large fraction of erasures and corruptions. We now give an informal definition of the code; for a formal version see Definition \ref{def:LDGMCode}. Note that all vectors are column vectors unless stated otherwise.

\begin{definition}[$(\alpha, \beta)$-Decodable Expanding Code (Informal)]
    An $(\alpha, \beta)$-Decodable Expanding Code is a tuple $(\mathsf{MatrixGen}, \mathsf{Distinguish}, \alpha, \beta)$ where $\mathsf{MatrixGen}$ and $\mathsf{Distinguish}$ are algorithms, $\alpha$ is an \emph{erasure rate} parameter, and $\beta$ is a \emph{corruption rate} parameter. We require that
    \begin{itemize}
        \item $\mathsf{MatrixGen}$ outputs a generator matrix $\mat G$ for a linear code. We require that $\mat G$ is a $(1 - o(1), n^{1 - o(1)})$-expanding $(m, n, k)$-matrix, where $k = \log^{\Theta(1)}n$ and $m = n^{o(k)}$.
        \item The algorithm $\mathsf{Distinguish}$\footnote{We only need a distinguisher, as opposed to a decoder. It's simple to convert from a decoder to a distinguisher, as we explain later in the paper.} can differentiate between
        \begin{itemize}
            \item Random vectors in $\mathbb{F}_2^m$ where each entry is erased with probability $\alpha$.
            \item Noisy codewords $\mat c'$, where $\mat c'$ is sampled by taking a random codeword $\mat c \in \{\mat G \mat x : \mat x \in \mathbb{F}_2^n\}$, erasing each entry with probability $\alpha$, and then corrupting (replacing with a random value) each remaining entry with probability $\beta$.
        \end{itemize}
    \end{itemize}
\end{definition}
\noindent
We give an explicit, uniform, unconditional construction of an $(\alpha = 1 - o(1), \beta = 1 - o(1))$-decodable expanding code based on low-rate Reed-Muller codes, with our main contribution being an efficient algorithm to sample an expanding generator matrix; see Theorem \ref{thm:makingtheexpander} later in this subsection. In Section \ref{sec:convert} we use such a code to build a PKE scheme (see Theorem \ref{thm:templatePKE}).

\begin{theorem}[Informal] \label{thm:templatePKEinformal}
    Suppose we have an $(\alpha, \beta)$-decodable expanding code. Then assuming Conjecture \ref{conj:LARP-CSPinformal} (High Corruption LARP-CSP) holds with corruption rate $\alpha$, and Conjecture \ref{conj:kXOR} (High Corruption $k$XOR) holds with corruption rate $\beta$, there is a semantically secure PKE scheme.
\end{theorem}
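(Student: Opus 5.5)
The plan is to build the scheme directly on the \emph{label extended factor graph} of a LARP-CSP instance whose factor graph is the generator matrix $\mat G$ of the code. Security would then come from two hybrids: one invoking Conjecture \ref{conj:LARP-CSPinformal} at corruption rate $\alpha$, and one invoking Conjecture \ref{conj:kXOR} at corruption rate $\beta$.

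\textbf{Key generation.} Run $\mathsf{MatrixGen}$ to obtain the expanding $(m,n,k)$-matrix $\mat G$, and set $\mat H=\mat G$. Sample random functions $f_i:\Sigma^k\to\Gamma$ and a random $\mat s\in\Sigma^n$. Choose a set $P\subseteq[m]$ of ``planted'' rows by including each row independently with probability $1-\alpha$. Set $\mat b_i=f_i(\mat s_{N_{\mat G}(i,1)},\ldots,\mat s_{N_{\mat G}(i,k)})$ for $i\in P$, and let $\mat b_i$ be uniform otherwise. The public key is $(\mat G,\mathcal F,\mat b)$ and the secret key is $(\mat s,P)$. The public key is then exactly a sample from the planted distribution of Conjecture \ref{conj:LARP-CSPinformal}.

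\textbf{Encryption.} The encryptor views the vertex set $[n]\times\Sigma$ of the label extended graph as the variables of a $k$XOR instance. It samples $\mat x\in\mathbb F_2^{[n]\times\Sigma}$. For each row $i$ and each preimage tuple $(\sigma_1,\ldots,\sigma_k)\in f_i^{-1}(\mat b_i)$, it publishes the bit
\[
\sum_j \mat x_{(N_{\mat G}(i,j),\sigma_j)}
\]
with each such bit independently replaced by a random bit with probability $\beta$. To encrypt $0$ it publishes these noisy parities; to encrypt $1$ it publishes uniform bits. The size of the ciphertext is $\mathrm{poly}(|\Sigma|^k)$, which matches the size of the public key.

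\textbf{Correctness.} For $i\in P$ the tuple $(\mat s_{N_{\mat G}(i,1)},\ldots,\mat s_{N_{\mat G}(i,k)})$ is a preimage of $\mat b_i$. Reading that entry of the ciphertext therefore yields $(\mat G\mat y)_i$ with corruption rate $\beta$, where $\mat y_j=\mat x_{(j,\mat s_j)}$ and $\mat y$ is uniform. The decryptor erases every row $i\notin P$; these rows form an independent $\alpha$-fraction. The resulting vector is distributed exactly as a noisy codeword in the definition of a decodable expanding code when the bit is $0$, and as an erased random vector when the bit is $1$. Hence $\mathsf{Distinguish}$ decrypts with constant advantage, which I would amplify by parallel repetition.

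\textbf{Security.} The first hybrid replaces $\mat b$ by uniform using Conjecture \ref{conj:LARP-CSPinformal}. The reduction can sample $\mat x$ and the ciphertext itself, so this step is immediate. The second hybrid, with $\mat b$ now uniform, replaces the noisy parities by uniform bits using Conjecture \ref{conj:kXOR}.

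I expect the second hybrid to be the main obstacle. The constraint hypergraph on $[n]\times\Sigma$ is not a uniformly random $k$-uniform hypergraph. Its edges for row $i$ are confined to the fibres over the $k$ columns $N_{\mat G}(i,\cdot)$, and they are the preimage set of a random function at a random point. My first attempt would be to argue that this hypergraph is still a \emph{random} $(m',n',k)$-matrix in the sense of the general Conjecture \ref{conj:kXOR}. Each edge's labels are (nearly) independent uniform draws from $\Sigma^k$, since $f_i$ is a random function and $|f_i^{-1}(\mat b_i)|\approx|\Sigma|^k/|\Gamma|\geq|\Sigma|^{k/4}$. Moreover, $|\Sigma|$ is super-polynomially large, so any structure coming from $\mat G$ affects only a negligible fraction of edge collisions. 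Failing that, I would have the encryptor apply a secret random permutation of labels within each fibre. I would then state Conjecture \ref{conj:kXOR} for this ``column-partitioned random'' hypergraph family, whose local structure matches a random hypergraph of density at most $(n')^{k/3}$. Finally, I would check that the parameter constraints ($m\le n^{o(k)}$ and $|\Gamma|\le|\Sigma|^{3k/4}$) keep the instance in the regime covered by Conjecture \ref{conj:kXOR}.
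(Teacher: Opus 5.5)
There is a genuine gap, and it is exactly the step you flagged. Your correctness argument and the first hybrid are fine. The middle hybrid, however, does not reduce to Conjecture~\ref{conj:kXOR}, which is about a \emph{uniformly random} $(m',n',k)$-matrix. In your scheme the variables are indexed by $[n]\times\Sigma$ and $\mat G$ is public. So every constraint coming from row $i$ has support meeting exactly the $k$ fibres over $N_{\mat G}(i,1),\ldots,N_{\mat G}(i,k)$, one variable per fibre. A uniformly random weight-$k$ row over $n\vert\Sigma\vert$ columns does this for \emph{some} row of $\mat G$ only with probability about $m\,k!/n^{k}=o(1)$, since $m\le n^{o(k)}$. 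Your constraint matrix is therefore trivially far from the conjecture's distribution. The near-uniformity of the labels does not help, because the non-random part is which fibres get combined. Neither fallback works either. The hypergraph is a deterministic function of the public key, so the encryptor has nothing secret to permute with. Relabelling within fibres also leaves the fibre pattern intact, and since $\mat x$ is uniform it does not even change the ciphertext distribution. Positing kXOR hardness for ``column-partitioned'' hypergraphs changes the hypothesis rather than proving the theorem.

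The missing idea is to \emph{collapse the label extended factor graph onto $\Sigma$}. The public key is not $(\mat G,\mathcal F,\mat b)$ but a matrix $\mat H$ whose columns are indexed by labels $\sigma\in\Sigma$ alone. It has one row per tuple in $\bigcup_i f_i^{-1}(\mat b_i)$, namely the indicator of $\{\sigma_1,\ldots,\sigma_k\}$, with tuples containing a repeated symbol dropped. Key generation aborts if there are too many tuples, pads $\mat H$ with random weight-$k$ rows to height $\vert\Sigma\vert^{k/3}$, and randomly permutes the rows. Once $\mat b$ is uniform, each preimage set is a random subset of $\Sigma^k$ independent of $\mat G$. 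So $\mat H$ is (essentially) a random $(\vert\Sigma\vert^{k/3},\vert\Sigma\vert,k)$-matrix, and the middle hybrid is literally Conjecture~\ref{conj:kXOR}. Decryption survives because key generation also aborts unless $\mat s$ has pairwise distinct entries, which fails only with probability $O(n^2/\vert\Sigma\vert)$. The planted rows then form a column-permuted copy of the non-erased rows of $\mat G$ with message $\mat x_j=\mat t_{\mat s_j}$, and the secret key $\zeta$ records their positions. Finally, both conjectures only bound advantage by $1/4$. Your hybrids therefore yield only constant security $\eta$, and three are needed, the last switching $\mat b$ back. To reach semantic security you need the Holenstein--Renner amplification (Theorem~\ref{thm:amplify}, with $\kappa=1-o(1)$ so that $\kappa^2>\eta$), which parallel repetition of decryption does not provide.
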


Notice that the \emph{erasure rate} $\alpha$ from the $(\alpha, \beta)$-decodable expanding code becomes the \emph{corruption rate} for the LARP-CSP. The above theorem in fact holds for all choices of $\alpha$ and $\beta$, but for our scheme we always assume $\alpha = 1 - o(1)$ and $\beta = 1 - o(1)$.

To prove the theorem we give a direct construction of a PKE scheme that has a mild notion of correctness and security. From here, we can apply a black-box amplification theorem of Holenstein and Renner \cite{holenstein2005one} to get semantically secure PKE. Below, we give an informal version of the starting PKE scheme; to ensure clarity, some important technical details are omitted. \\\\
\noindent
\emph{Key Generation}
\begin{enumerate}
    \item Sample a $(1 - o(1), n^{1 - o(1)})$-expanding $(m, n, k)$-matrix $\mat G$ using the algorithm $\mathsf{MatrixGen}$. Let $\Sigma$ be a sufficiently large alphabet, and let $\Gamma$ be another alphabet of size $\vert\Gamma\vert = \vert\Sigma\vert^{3k/4}$.

\vspace{10px}%
\noindent%
\begin{minipage}{0.166\textwidth}
\includegraphics[width=\linewidth]{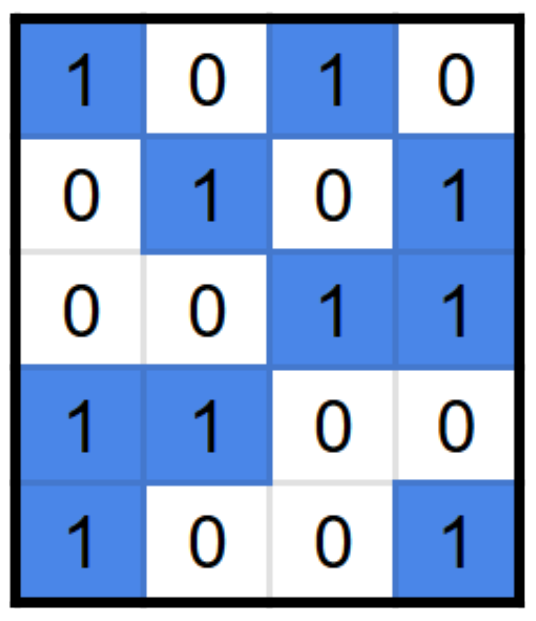}
\end{minipage}%
\hfill%
\begin{minipage}{0.7\textwidth}
An example matrix $\mat G$ where $m = 5, n = 4,$ and $k = 2$. For an actual instantiation of the PKE scheme, all these parameters would of course be significantly larger.
\end{minipage}%
\hfill%
\vspace{5px}
    
    \item Sample random functions $f_1, \dots, f_m$ and  $\mat b$ from the planted distribution in the High Corruption LARP-CSP Conjecture. That is, we sample a set of random functions $\mathcal{F} = \{f_i : \Sigma^k \rightarrow \Gamma\}_{i \in [m]}$ and a secret $\mat s \in \Sigma^n$, then set
    \[\mathbf{b}_i=\begin{cases} \text{Random element of $\Gamma$,} & \text{with probability $\alpha = 1 - o(1)$. } \\  f_i(\mat s_{N_{\mat G}(i, 1)}, \ldots, \mat s_{N_{\mat G}(i, k)}), & \text{ otherwise.} \end{cases}\]

\vspace{5px}%
\noindent%
\begin{minipage}{0.17\textwidth}
\includegraphics[width=\linewidth]{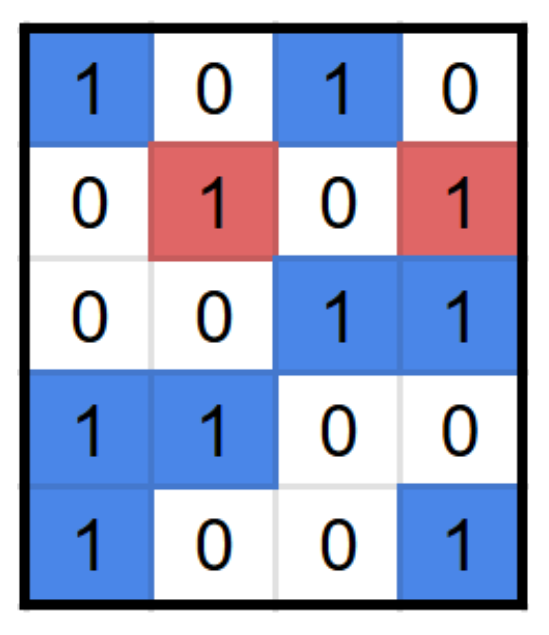}
\end{minipage}%
\hfill%
\begin{minipage}{0.25\textwidth}%
\vspace{-14px}
\begin{align*}
    \mat b_1 = & f_1(\mat s_{1}, \mat s_{3}) = f_1(4, 5) \\
    \mat b_2 = & \text{ random} \\
    \mat b_3 = & f_3(\mat s_3, \mat s_4) = f_3(5, 7) \\
    \mat b_4 = & f_4(\mat s_1, \mat s_2) = f_4(4, 2) \\
    \mat b_5 = & f_5(\mat s_{1}, \mat s_{4}) = f_5(4, 7)
\end{align*}
\end{minipage}%
\hfill%
\begin{minipage}{0.03\textwidth}%
\strut%
\end{minipage}%
\hfill%
\begin{minipage}{0.35\textwidth}
Assume we set $\Sigma = \{1, \ldots, 8\}$ and that $\mat s = (4,2,5,7)$. Every coordinate of $\mat b$ will equal the corresponding function evaluation, except with probability $\alpha$. The parameter $\alpha$ is supposed to be close to $1$, but for this example we take $\alpha \approx 1/5$.
\end{minipage}%
\hfill%
\vspace{5px}

    \item The public key will be a $(m'\leq \vert\Sigma\vert^{k/3}, \vert\Sigma\vert, k)$-matrix $\mat H$ defined as follows. For all $i \in [m]$ and for all tuples $(\sigma_1, \ldots, \sigma_k) \in \Sigma^k$ such that $f_i(\sigma_1, \ldots, \sigma_k) = \mat b_i$, append the length-$\vert\Sigma\vert$ \emph{indicator vector} for the tuple $(\sigma_1, \ldots, \sigma_k)$ as a new row in $\mat H$. Define $\mat H$ to have $m'$ rows and $n'$ columns. Notice that when producing $\mat H$, we ignore all structure coming from the original matrix $\mat G$. So if we instead sampled $\mat b$ at random, $\mat H$ would be a truly random matrix; this is useful for proving security.

\vspace{10px}%
\noindent%
\begin{minipage}{0.33\textwidth}
\includegraphics[width=0.62\linewidth]{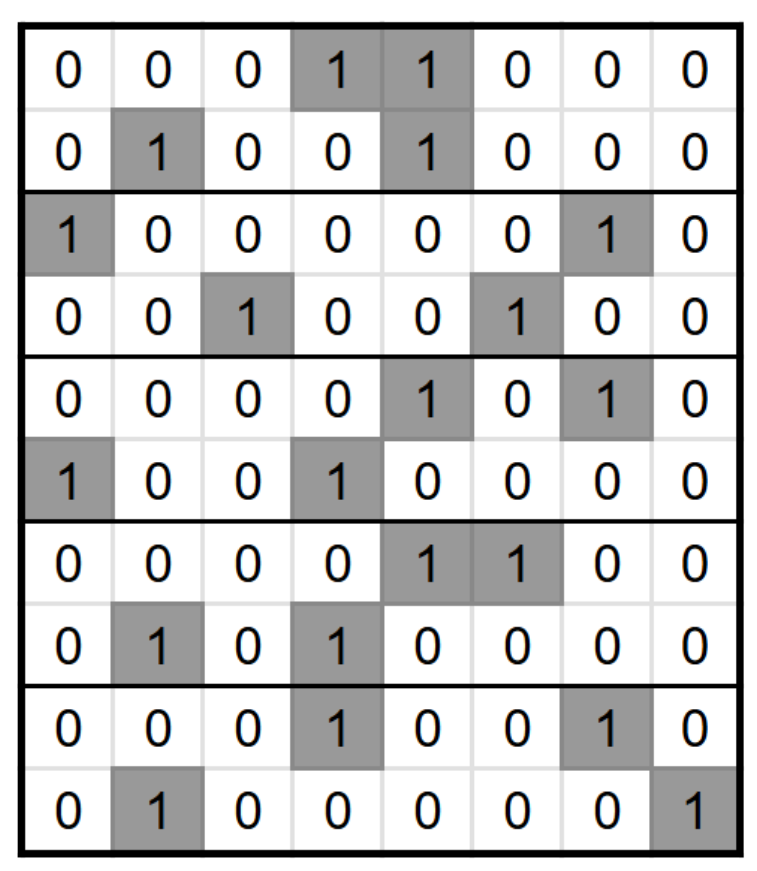}
\end{minipage}%
\hfill%
\begin{minipage}{0.54\textwidth}
An example matrix $\mat H$ produced using the matrix $\mat G$ from above. We assume that every equation $\mat b_i = f_i(\sigma_1, \sigma_2)$ has exactly two solutions. For an actual instantiation of the PKE scheme, the number of solutions is approximately $\vert \Sigma \vert^k / \vert \Gamma \vert =  \vert\Sigma\vert^{k/4}$ for each constraint.
\end{minipage}%
\hfill%
\vspace{5px}
    
    \item By construction of $\mat H$, a row-induced column-permuted submatrix $\mat G'$ of $\mat G$ will appear within $\mat H$, where $\mat G'$ is formed by deleting each row of $\mat G$ independently with probability $\alpha$. The secret key will be a mapping $\zeta$ which records the location of $\mat G'$ in $\mat H$. Put differently, we know that there exists a subset of the rows of $\mat H$ which generate a \emph{punctured version} of the original code $\{\mat G \mat x : \mat x \in \mathbb{F}_2^n\}$, and $\zeta$ allows us to identify (in the correct order) the coordinates which correspond to this punctured version.

\vspace{5px}%
\noindent%
\begin{minipage}{0.33\textwidth}
\includegraphics[width=\linewidth]{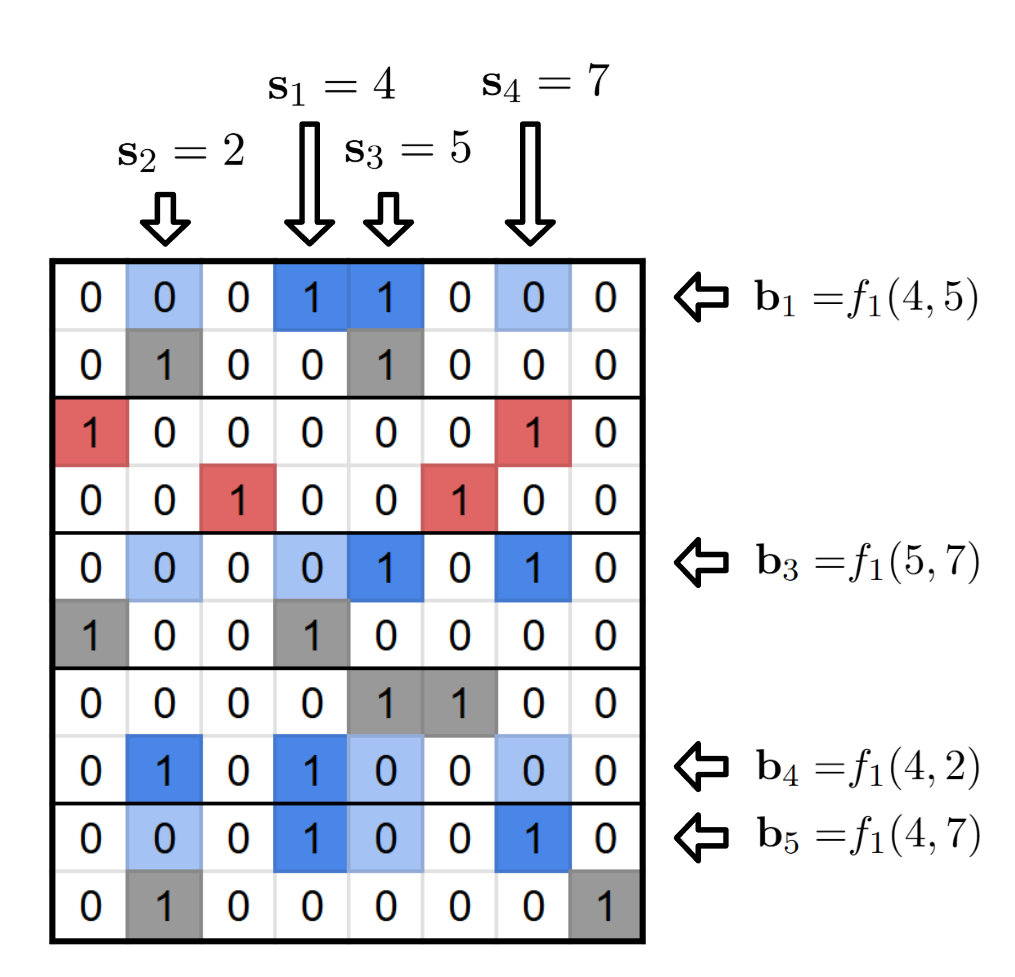}
\end{minipage}%
\hfill%
\begin{minipage}{0.53\textwidth}
An annotated version of the matrix $\mat H$. Every row that came from a corrupted constraint pair $(\mat b_i, f_i)$ is marked in red, every row that represents a solution to $\mat b_i = f_i(\mat s_{N_{\mat G}(i, 1)}, \mat s_{N_{\mat G}(i, 2)})$ is marked in blue, and all other rows are marked in gray. The column-permuted submatrix $\mat G'$ 
is marked in light and dark blue.
Note that within this light/dark blue submatrix, the order of the columns from $\mat G$ is now: (2nd, 1st, 3rd, 4th), and the second row of $\mat G$ was deleted.
\end{minipage}%
\hfill%
\vspace{5px}

\end{enumerate}
\noindent
\emph{Encryption}
\begin{enumerate}
    \item To encrypt a bit $b = 0$, sample a ciphertext $\mat v \in \mathbb{F}_2^{m'}$ as in the planted distribution in the High Corruption $k$XOR Conjecture. That is, sample $\mat t \in \mathbb{F}_2^{n'}$ at random and then sample $\mat v \in \mathbb{F}_2^{m'}$ as
    \[\mathbf{v}_i=\begin{cases} \text{Random element of $\mathbb{F}_2$,} & \text{with probability $\beta = 1 - o(1)$. } \\ \mat t_{N_{\mat H}(i, 1)} + \ldots + \mat t_{N_{\mat H}(i, k)}, & \text{ otherwise.} \end{cases}\]
    \item To encrypt a bit $b = 1$, sample a ciphertext $\mat v \in \mathbb{F}_2^{m'}$ as in the null distribution in the High Corruption $k$XOR Conjecture, i.e. sample $\mat v$ at random.
\end{enumerate}
\noindent
\emph{Decryption}
\begin{enumerate}
    \item Use the secret key $\zeta$ to extract a subvector $\mat w \in \{\mathbb{F}_2 \cup \text{``?''}\}^m$ of the ciphertext $\mat v$, where $\text{``?''}$ is a special erasure symbol. If we encrypted a bit $b = 0$, then $\mat w$ should be a codeword from $\{\mat G \mat x : \mat x \in \mathbb{F}_2^n\}$, but where each entry is erased (i.e. replaced with $\text{``?''}$) with probability $\alpha$, and the remaining entries are corrupted with probability $\beta$. The erasure rate $\alpha$ comes from the probability that a row of $\mat G$ is deleted when constructing $\mat G'$. The corruption rate $\beta$ comes from the corruption rate at encryption time.
    
\begin{center}
\includegraphics[width=0.6\textwidth]{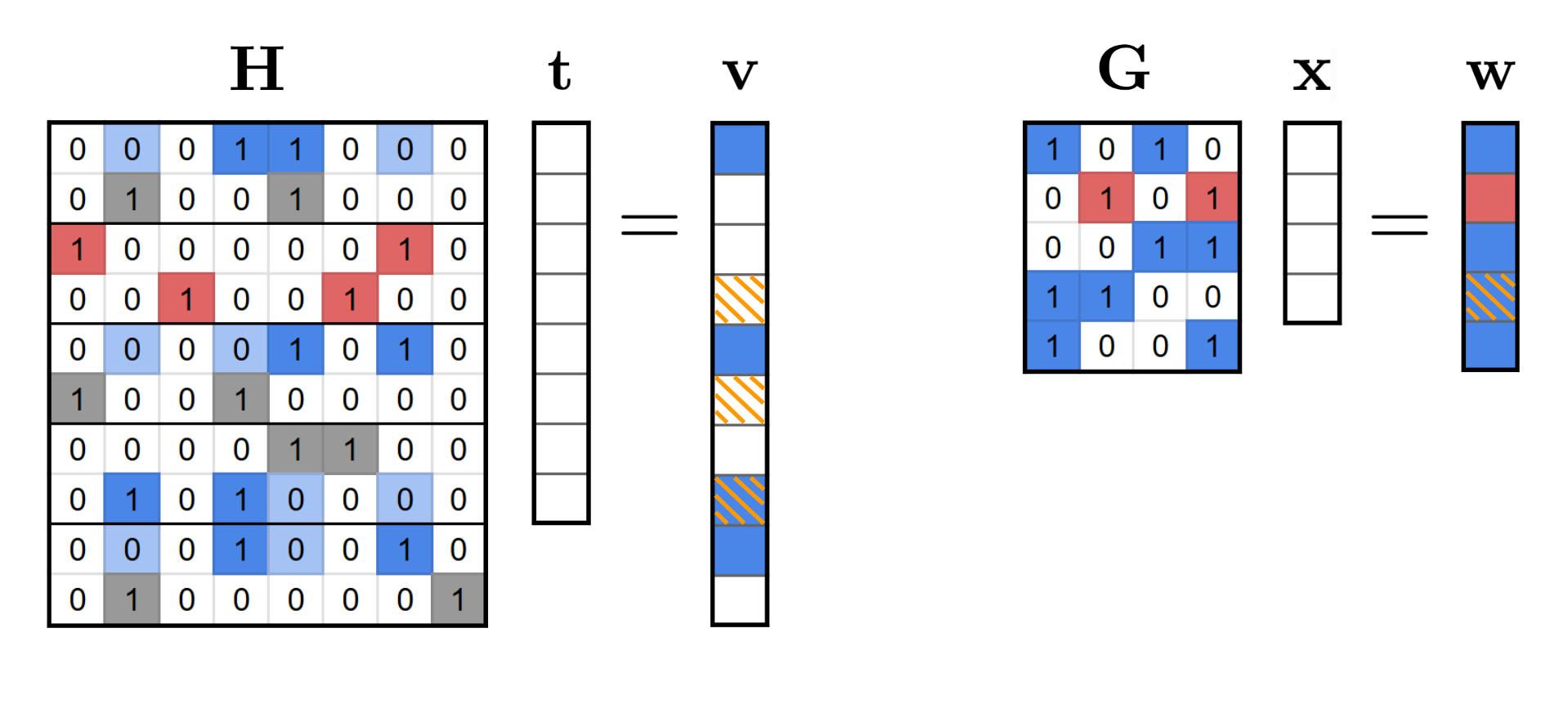}
\end{center}

    Above, we give a worked example of setting up the decryption process (in the case of an encrypted bit $b = 0$) using the matrices $\mat G$ and $\mat H$ from key generation. The coordinates of $\mat v$ that are corrupted are marked in orange, and the coordinates of $\mat v$ corresponding to $\mat G$ are marked in blue. We extract a codeword $\mat w$ from $\mat v$, which is shown on the right. The (known) erased coordinates are marked in red, and the (unknown) corrupted coordinates are marked in orange\footnote{
    Note that, of course, it would be impossible to distinguish $\mat w$ from random in the drawn example, because with only 3 non-corrupted coordinates in $\mat w$ that depend on 3 random inputs in $\mat x$, in fact all non-erased coordinates of our drawn $\mat w$ will be distributed totally randomly. In an actual decryption instance, the number of non-corrupted coordinates of $\mat w$ will far exceed the dimension of $\mat x$, since the code $\mat G$ is of vanishing rate.
    }.
    
    \item Run algorithm $\mathsf{Distinguish}$ on $\mat w$, and determine the encrypted bit based on the result.
\end{enumerate}
\vspace{10px}
\noindent
Intuitively speaking, decryption is possible because \emph{both} the matrix $\mat G$ \emph{and} the predicate
\[\mat w_i = \mat x_{N_{\mat G}(i, 1)} + \ldots + \mat x_{N_{\mat G}(i, k)}\]
are of ``low complexity,'' so we can distinguish noisy vectors $\mat w$ from totally random vectors. In the proof of security, however, we show that an adversary must either solve a CSP where the predicates are of maximum complexity (i.e. break the High Corruption LARP-CSP Conjecture), or where the factor graph is of maximum complexity (i.e. break the High Corruption $k$XOR Conjecture).

\begin{remark}
Applebaum \cite{applebaum2012pseudorandom} used an algorithm similar to our key generation algorithm to show hardness of approximation for the densest $k$-subhypergraph problem, assuming hardness of Goldreich's PRG. One important difference is that Applebaum constructed the hypergraph instance by \emph{sub-sampling} local pre-images for a (collection of) Goldreich instances, while we construct the public key $\mat H$ by listing \emph{all} local pre-images for a LARP-CSP instance. This distinction is one key factor that ensures correctness of our decryption algorithm.
\end{remark}

\paragraph{Instantiating an $(\alpha = 1 - o(1), \beta = 1 - o(1))$-Decodable Expanding Code.}

In Section \ref{sec:makingexpander} we give an explicit randomized algorithm for $\mathsf{MatrixGen}$ that outputs a $(1 - o(1), n^{1 - o(1)})$-expanding $(m, n, k)$-matrix $\mat G$ that generates a subcode of a sufficiently low-rate Reed-Muller code:

\begin{theorem} \label{thm:makingtheexpander}
    There is an explicit randomized algorithm $\mathsf{K}_{c_k, c_m}(1^n)$ parameterized by two constants $c_k, c_m$ satisfying $c_m \geq 2$ and $c_k \geq c_m + 1$,  that runs in time $2^{O((\log n)^{c_m})}$ and outputs a matrix $\mat G$, such that:
    \begin{enumerate}
        \item $\mat G$ is an $(m, n, k)$-matrix, where $m = 2^{(\lceil \log n\rceil)^{c_m}}$ and $k = (\lceil \log n\rceil)^{c_k}$.
        \item The code $\{\mat G\mat x : \mat x \in \mathbb{F}_2^n\}$ is a subcode of the Reed-Muller code RM$((\lceil \log n\rceil)^{c_m}, (\lceil \log n\rceil)^2)$.
        \item With $1 - o(1)$ probability over the coins of $\mathsf{K}_{c_k, c_m}$, $\mat G$ is $(1 - o(1), n^{1 - o(1)})$-expanding.
    \end{enumerate}
\end{theorem}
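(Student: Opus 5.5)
The plan is to build $\mat G$ as a generator matrix whose columns are evaluation vectors of low-degree indicator polynomials on $\mathbb{F}_2^L$, where $L = (\lceil \log n\rceil)^{c_m}$, so that $m = 2^L$. Rows of $\mat G$ are indexed by points $z \in \mathbb{F}_2^L$. Each column is the truth table of a polynomial of degree at most $(\lceil\log n\rceil)^2$, so the column span automatically lies in RM$((\lceil \log n\rceil)^{c_m}, (\lceil \log n\rceil)^2)$; this gives item 2 for free. The point of the design is to force every row to have \emph{exactly} $k$ ones.

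\textbf{Construction.} Set $k = (\lceil\log n\rceil)^{c_k}$ and split the $n$ columns into $k$ blocks of size $n/k = 2^d$, padding $n$ to a convenient value if needed. Here $d \le \log n \le (\lceil\log n\rceil)^2$. For each block $b \in [k]$, sample independently a random map $A_b : \mathbb{F}_2^L \to \mathbb{F}_2^d$. Each output coordinate of $A_b$ is a random polynomial of low degree $D$; I would first try $D=1$ (linear maps) and raise $D$ if the independence turns out to be insufficient. The columns of block $b$ are indexed by $y \in \mathbb{F}_2^d$ and equal the truth table of $\mathbb{1}[A_b(z) = y] = \prod_{j\le d}(1 + A_b(z)_j + y_j)$, which has degree at most $dD$. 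This is within the Reed-Muller degree bound provided $dD \le (\lceil\log n\rceil)^2$, e.g.\ $D \le \log n$. Every $z$ has exactly one $y$ per block, so every row has exactly $k$ ones, which gives item 1. Writing out the $m \times n$ matrix costs $2^{O(L)} = 2^{O((\log n)^{c_m})}$ time.

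\textbf{Expansion.} Fix a row set $S$ with $|S| = t' \le n^{1-\delta}$, where $\delta = o(1)$ is chosen suitably. The number of distinct columns touched by $S$ is $\sum_b |A_b(S)|$, so it suffices to show that $\sum_b (t' - |A_b(S)|) \le \varepsilon k t'$ simultaneously for all such $S$, with $\varepsilon = o(1)$. Since the blocks are independent, the probability that at least $\varepsilon k/2$ blocks lose more than $(\varepsilon/2) t'$ values is at most
\[
\binom{k}{\varepsilon k/2}\, p^{\varepsilon k/2},
\]
where $p$ is the per-block probability of a large loss. Using $c_k \ge c_m+1$, i.e.\ $k \gg L\log n$, this bound must beat the union bound over $\binom{m}{t'} \le 2^{L t'}$ choices of $S$.

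\textbf{Main obstacle.} The whole difficulty is bounding $p$ well. A second-moment or Markov bound gives only $p \le t'/(\varepsilon 2^d) = n^{-\Omega(\delta)}$. That suffices for small $t'$ (roughly $t' \lesssim \varepsilon\delta k/L$), but is hopeless against the $2^{Lt'}$ union bound when $t'$ is polynomial in $n$. For a large $t'$ I need a tail bound exponential in $t'$, namely $p \le (t'/2^d)^{\Omega(\varepsilon t')}$, as a fully random function would give. For linear $A_b$ this fails in general, because collisions are governed by $\ker A_b$ meeting the difference set $S+S$. So the step I would attack first is to choose $D$ large enough that $A_b$ behaves $\Omega(\varepsilon t')$-wise independently on the relevant point configurations, and to verify via a counting argument over collision patterns that the loss bound holds. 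If that fails, the fallback is a two-scale argument: a Markov-based union bound for small $t'$, combined with a separate structural bound for large $S$ (a large loss forces $S$ to concentrate on few cosets of $\ker A_b$ in many blocks at once).
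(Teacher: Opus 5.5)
Your construction matches the paper's. It uses $k$ independent blocks indexed by $\mathbb{F}_2^w$, and column $y$ of block $b$ is the truth table of $\prod_j(1+g^{(b,j)}(z)+y_j)$. Items 1 and 2 and the running time are fine.

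There is a genuine gap in item 3. You never prove the tail bound for a fixed row set; you leave it as the ``main obstacle'' with a speculative fallback.

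The missing idea is to make each coordinate polynomial uniformly random of degree $D=\lceil\log n\rceil$, which your own budget $dD\le(\lceil\log n\rceil)^2$ permits. Then use that such a polynomial has $(2^{D+1}-1)$-wise independent evaluations. The dual of RM$(L,D)$ is RM$(L,L-D-1)$, whose minimum distance is $2^{D+1}>n$. So no nonzero linear relation supported on at most $n$ coordinates vanishes on all of RM$(L,D)$, and the evaluations on any fixed set of at most $n$ points are exactly uniform. Consequently, for any fixed $S$ with $|S|=t\le 2^w/(\lceil\log n\rceil)^2<n$, the rows of $\mathbf{G}$ indexed by $S$ are distributed \emph{exactly} like a fully random matrix with one $1$ per block. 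This is stronger than being merely ``$\Omega(\varepsilon t')$-wise independent on the relevant configurations.''

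With that in hand, the exponential-in-$t$ bound you wanted is the easy one for truly random functions. Place the $kt$ ones sequentially; each lands on an already-used column of its block with probability at most $t/2^w\le(\lceil\log n\rceil)^{-2}$. The probability of at least $kt/\sqrt{\lceil\log n\rceil}$ collisions is therefore at most
$\binom{kt}{kt/\sqrt{\lceil\log n\rceil}}(t/2^w)^{kt/\sqrt{\lceil\log n\rceil}}<2^{-kt/\sqrt{\lceil\log n\rceil}}$.
Since $c_k\ge c_m+1$ gives $k/\sqrt{\lceil\log n\rceil}\ge(\lceil\log n\rceil)^{c_m+1/2}\gg L$, this beats the $\binom{m}{t}\le 2^{Lt}$ union bound for every $t$. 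You then need no small/large-$t$ split, no Markov regime, and no coset-structure argument. Your per-block decomposition can also be made to work once you have this independence, but it is unnecessary.

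Your diagnosis that $D=1$ fails is correct, since random affine maps over $\mathbb{F}_2$ are only $3$-wise independent. Note, though, that the required independence is of order $t$ for $t$ up to $n^{1-o(1)}$, which forces $D\approx\log n$, right at the edge of the degree budget. Recognizing that this choice works, via the Reed--Muller dual distance, is the substance of the expansion proof rather than a tuning detail.
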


\begin{remark}
    The running time can also be written as $m^{O(1)}$, where $m$ is the height of the output matrix $\mat G$. In fact the running time will be dominated by the time required to write the output $\mat G$.
\end{remark}

The above theorem improves upon the expander graphs constructed by Oliveira, Santhanam, and Tell \cite{oliveira2018expander} in the following ways: First and foremost, our construction is given by an explicit \emph{uniform} algorithm, whereas Oliveira, Santhanam, and Tell were only able to prove the \emph{existence} of expander graphs that suit their needs. Our algorithm is also quite simple, whereas the construction of Oliveira, Santhanam, and Tell relies on  machinery related to the Nisan-Wigderson pseudorandom generator \cite{nisan1994hardness}.

We leverage an algorithm of Saptharishi, Shpilka, and Volk \cite{saptharishi2016efficiently} to make the algorithm $\mathsf{Distinguish}$. Their results shows that any Reed-Muller code of sufficiently low rate can be decoded from a $1 - o(1)$ fraction of errors, which after some massaging implies a distinguisher in the sense required for our code.

Remember that by Theorem \ref{thm:templatePKEinformal} this $(1 - o(1), 1 - o(1))$-decodable expanding code immediately implies a PKE scheme from $1 - o(1)$ corruption CSPs, from which we deduce Theorem \ref{thm:pkeschemeinformal}.

\subsection{Discussion}
In this work we present a radically different approach for building public key encryption, motivated by a desire to explore and understand what kinds of hardness conjectures can be used to (explicitly) build public-key encryption. 
We believe that algorithmic research on LARP-CSPs will be fruitful for the algorithms and complexity communities, because of its connections to cryptography, Goldreich's PRG, and planted hypergraph problems.

\paragraph{Related Public Key Encryption Schemes.}
Our public-key encryption scheme, especially the way that we plant a trapdoor, is very notably different from all previous public-key encryption schemes.
The closest relatives to our scheme appear to be those given in \cite{applebaum10, bogdanov2023public, ghosal2025using}. In our paper and all of these papers, there is a similar 2-phase construction that combines a planting assumption at key generation with a planting assumption at encryption time. But the similarities  stop here, since our key generation and decryption algorithms make critical use of a novel error correcting code introduced in this paper. We note that the McEliece cryptosystem \cite{mceliece1978public} and its relatives do make critical use of a hidden error correcting code. \emph{But unlike in the McEliece-type cryptosystems,} we have a concrete and natural hardness conjecture related to the average-case hardness of CSPs (the LARP-CSP Conjecture) which allows us to argue that our error correcting code \emph{can} be hidden, and indeed hidden within the generator matrix of a \emph{random} linear code.


\paragraph{Open Questions.}
Our work opens several new lines of research with respect to algorithm design,  lower bounds, and coding theory. We believe that the following questions are of highest priority:
\begin{enumerate}
    \item Can we further populate the list of computational lower bounds for LARP-CSPs? What are the parameter regimes in which LARP-CSPs appear to be computationally intractable?
    \item Is there a unified framework which captures the known attacks on Goldreich's PRG? In this paper we give the first framework which captures the attacks of Oliveira, Santhanam, and Tell \cite{oliveira2018expander}, but we believe that significantly more research is warranted.
    \item Does there exist a $(1 - o(1), 1 - o(1))$-decodable expanding code in the \emph{polynomial stretch regime}? If so, this would imply a PKE scheme with plausibly subexponential security via Theorem \ref{thm:templatePKEinformal}.
    \item What is the minimal level of mathematical structure required to build PKE solely from the hardness of $1 - o(1)$ corruption CSPs?
    \item Is it possible to build public-key encryption solely from  \emph{unstructured} hardness conjectures like our LARP-CSP conjecture?
\end{enumerate}

\section{Preliminaries} \label{sec:prelims}
We use $[n]$ to denote the set $\{1, \ldots, n\}$. All vectors are column vectors unless stated otherwise. The phrase ``sample at random'' is shorthand for ``sample uniformly at random.'' Binom$(n, p)$ is the binomial distribution with $n$ trials and success probability $p$. For a field $\mathbb{F}$ and two vectors $\mat b, \mat v \in \mathbb{F}^m$, we use dist$(\mat b, \mat v)$ to denote the number of coordinates on which $\mat b$ and $\mat v$ differ. For a vector $\mat b \in \mathbb{F}^m$ and a subset $\mathcal{L} \subset \mathbb{F}^m$, we use dist$(\mat b, \mathcal{L})$ to denote the minimum over all $\mat v \in \mathcal{L}$ of dist$(\mat b, \mat v)$. All logarithms are taken base 2.

\paragraph{Erasures and Corruptions.}
We use the term ``subjected to erasures of rate $\alpha$'' to refer to the random process which replaces each entry of a given vector with an erasure symbol ``?'', independently with probability $\alpha$. For binary vectors, this is equivalent to transmission over the \emph{binary erasure channel} (BEC) with erasure rate $\alpha$.

We use the term ``subjected to corruptions of rate $\beta$'' to refer to the random process which replaces each entry of a given vector with a random value, independently with probability $\beta$. For binary vectors, this is equivalent to transmission over the \emph{binary symmetric channel} (BSC) with error rate $\frac{\beta}{2}$. The reason for this discrepancy is that the BSC error rate is actually the \emph{bit flip probability}. Just as a corruption rate of $1$ replaces the entire vector with noise, a BSC with error rate $1/2$ replaces the entire vector with noise. The reason we use ``corruption rate'' is that it generalizes directly to non-binary alphabets.

\paragraph{Matrices and Expanders.}
For a vector $\mat v \in \mathbb{F}_2^n$, let $\hw(\mat v)$ be its Hamming weight, i.e. the number of nonzero entries. For a matrix $\mat M \in \mathbb{F}_2^{m \times n}$, let $\mat M_i \in \mathbb{F}_2^n$ be the row of $\mat M$ indexed by $i$. For vectors $\mat v_1, \ldots \mat v_t \in \mathbb{F}_2^n$, we use $\bigvee_{i \in t}\mat v_i$ to denote their component-wise logical-OR. We now give some definitions related to sparse matrices and their expansion properties.

\begin{definition}[$(m, n, k)$-Matrix]
    An $(m, n, k)$-matrix is a matrix $\mat M \in \mathbb{F}_2^{m \times n}$ where each row has \emph{exactly} $k$ nonzero entries, i.e. $\hw(\mat M_i) = k$ for all $i \in [m]$.
\end{definition}

\begin{definition}[Neighbor Function $N_{\mat M}(i, j)$]
    For a matrix $\mat M \in \mathbb{F}_2^{m \times n}$, we use $N_{\mat M}(i, j)$ to denote the column index of the $j$th nonzero entry of $\mat M_i$.
\end{definition}

Now we define what it means for a matrix over $\mathbb{F}_2$ to be an expander.\footnote{This definition is agnostic to the number of nonzero entries in each row, but we will only apply it to $(m, n, k)$-matrices.}

\begin{definition}[$(\gamma, t)$-Expander]
    A matrix $\mat M \in \mathbb{F}_2^{m \times n}$ is a $(\gamma, t)$-expander if and only if, for all subsets $\mathcal{S} \subseteq [m]$ with $1 \leq \vert \mathcal{S}\vert \leq t$,
    \[\hw\left(\bigvee_{i \in \mathcal{S}} \mat M_i\right) \geq \gamma \cdot \sum_{i \in \mathcal{S}} \hw(\mat M_i).\]
    If $\mat M$ is an $(m, n, k)$-matrix, this definition is equivalent to
    \[\hw\left(\bigvee_{i \in \mathcal{S}} \mat M_i\right) \geq \gamma k \vert\mathcal{S}\vert.\]
\end{definition}

\subsection{Public Key Encryption}
We start with a definition that captures public key encryption (PKE) schemes which are only ``mildly'' correct and ``mildly'' secure.

\begin{definition}[See Definition 8 in \cite{holenstein2005one}] \label{def:weakpke}
    A \emph{$(\kappa(\lambda), \eta(\lambda))$-secure public key encryption scheme} is a triple of probabilistic polynomial time algorithms $(\mathsf{KeyGen}, \Enc, \Dec)$ such that
    \begin{enumerate}
        \item $\mathsf{KeyGen}(1^\lambda)$ outputs a pair of strings $(\pk, \sk)$.
        \item $\Enc(1^\lambda, \pk, b \in \{0, 1\})$ outputs a string $\ct$.
        \item $\Dec(1^\lambda, \sk, \ct)$ outputs a bit $b' \in \{0, 1\}$.
        \item \emph{($\kappa(\lambda)$-correctness)} For a random bit $b \in \{0, 1\}$,
        \[\Pr[(\pk, \sk) \leftarrow \mathsf{KeyGen}(1^\lambda) ; \Dec(1^\lambda, \sk, \Enc(1^\lambda, \pk, b)) = b] > \frac{1 + \kappa(\lambda)}{2},\]
        where the probability is taken over the internal coins of $\mathsf{KeyGen}, \Enc,$ and $\Dec$. \label{item:correct}
        \item \emph{($\eta(\lambda)$-security)} For all poly$(\lambda)$ size non-uniform algorithms $\mathcal{A}$, for a random bit $b \in \{0, 1\}$,
        \[\Pr[(\pk, \sk) \leftarrow \mathsf{KeyGen}(1^\lambda) ; \mathcal{A}(\pk, \Enc(1^\lambda, \pk, b)) = b] < \frac{1 + \eta(\lambda)}{2},\]
        where the probability is taken over the internal coins of $\mathsf{KeyGen}, \Enc,$ and $\Dec$. \label{item:secure}
    \end{enumerate}
\end{definition}

\begin{remark}
    Item \ref{item:secure} is equivalent to bounding the advantage of an algorithm as follows:\footnote{The proof is folklore and uses elementary linear algebraic manipulations.} For all poly$(\lambda)$ size non-uniform algorithms $\mathcal{A}$,
    \begin{align*}
        & \Big\vert \Pr[(\pk, \sk) \leftarrow \mathsf{KeyGen}(1^\lambda) ; \mathcal{A}(\pk, \Enc(1^\lambda, \pk, 1)) = 1] \\
        & \qquad \qquad - \Pr[(\pk, \sk) \leftarrow \mathsf{KeyGen}(1^\lambda) ; \mathcal{A}(\pk, \Enc(1^\lambda, \pk, 0)) = 1] \Big\vert < \eta(\lambda).
    \end{align*}
\end{remark}

PKE schemes satisfying the standard notion of correctness and security can be defined as follows.

\begin{definition}[Based on \cite{goldwasser1984probabilistic}] \label{def:standardpke}
    A \emph{semantically secure public key encryption scheme} is a triple of probabilistic polynomial time algorithms $(\mathsf{KeyGen}, \Enc, \Dec)$ such that for all $1 - \kappa = \eta \geq \lambda^{-O(1)}$, $(\mathsf{KeyGen}, \Enc, \Dec)$ is a $(\kappa, \eta)$-secure public key encryption scheme.
\end{definition}

We make critical use of a theorem by Holenstein and Renner which allows us to amplify from a $(\kappa, \eta)$-secure PKE scheme to a standard PKE scheme, assuming $\kappa$ is sufficiently large with respect to $\eta$.

\begin{theorem}[See Theorem 6 in \cite{holenstein2005one}] \label{thm:amplify}
    Let $\kappa, \eta$ be any \emph{constants} satisfying $\kappa^2 > \eta$. Then there is a black box reduction from any $(\kappa, \eta)$-secure public key encryption scheme to a semantically secure public key encryption scheme.
\end{theorem}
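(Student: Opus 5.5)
Since this is the Holenstein--Renner amplification theorem, I would reproduce their argument. It reduces the computational statement to an information-theoretic secret-key-agreement problem through a hard-core lemma.

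\textbf{Step 1: recast the scheme as key agreement.} View a $(\kappa,\eta)$-secure scheme as a one-round key-agreement protocol. The receiver publishes $\pk$. The sender picks a uniform bit $b$ and sends $\Enc(\pk,b)$. The receiver outputs $b'=\Dec(\sk,\cdot)$. Correctness says $b'$ is $b$ passed through a binary symmetric channel with bias $\kappa$, that is, $\Pr[b'=b]\ge (1+\kappa)/2$. Security says no efficient eavesdropper predicts $b$ with advantage more than $\eta$.

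\textbf{Step 2: idealize the eavesdropper.} Apply Holenstein's uniform hard-core lemma to the predicate ``$b$ given $(\pk,\ct)$''. This gives a set of instances of density roughly $1-\eta$ on which $b$ is computationally indistinguishable from uniform for the eavesdropper. We may therefore model the eavesdropper, up to small computational error, as seeing $b$ through an erasure channel that reveals $b$ with probability at most $\eta$. The version of the hard-core lemma for many independent parallel instances is what lets this model survive $\ell$- and $N$-fold repetition.

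\textbf{Step 3: solve the information-theoretic problem.} The resulting problem is: the sender holds $b$, the receiver holds $\mathrm{BSC}$ with bias $\kappa$ applied to $b$, and the eavesdropper holds $\mathrm{BEC}$ applied to $b$ with reveal probability $\eta$.
\begin{enumerate}
\item First preprocess by XOR. Take $\ell$ independent instances and XOR the bits. The receiver's bias becomes $\kappa^\ell$, while the eavesdropper learns the XOR only if she sees all $\ell$ bits, which happens with probability $\eta^\ell$.
\item One-way key distillation (syndrome-based information reconciliation followed by 2-universal hashing) on $N$ blocks yields a positive key rate whenever
\[1-\eta^\ell > h\bigl((1-\kappa^\ell)/2\bigr).\]
\item Since $1-h((1-x)/2)=\Theta(x^2)$, this condition reads $\eta^\ell < c\,\kappa^{2\ell}$ for an absolute constant $c$. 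This holds for a constant $\ell$ exactly because $\kappa^2>\eta$, as $(\eta/\kappa^2)^\ell\to 0$.
\end{enumerate}
So the hypothesis $\kappa^2>\eta$ is precisely what makes the constant-$\ell$ preprocessing work.

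\textbf{Step 4: return to public-key encryption.} The protocol is still one-message from sender to receiver, once the receiver's public key is counted as the first round. One-time-pad the plaintext with the distilled key and append the reconciliation syndrome to the ciphertext. This gives a scheme with correctness error and adversarial advantage negligible, or at least below any $\lambda^{-O(1)}$, as Definition~\ref{def:standardpke} requires.

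\textbf{Main obstacle.} The hardest step is the computational-to-information-theoretic transfer in Steps 2--4. One must show that an efficient attacker on the distilled key, which also sees the reconciliation syndrome and the hash seed, yields an efficient predictor for the original bits that contradicts the hard-core set guarantee. I would handle this with a hybrid argument over the $N\ell$ instances. In each hybrid, the bits inside the hard-core set are replaced by uniform ones, and the syndrome and hash output are computed efficiently from the remaining data. This mirrors Holenstein's proof of the uniform parallel-repetition hard-core lemma.
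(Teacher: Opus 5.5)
The paper gives no proof of this statement. It imports it verbatim as Theorem~6 of Holenstein--Renner. Your outline is the route behind that theorem, so it agrees with what the paper relies on. That route is:
\begin{itemize}
\item a tight-density hard-core lemma, which turns the eavesdropper into an erasure channel revealing $b$ with probability about $\eta$;
\item XOR preprocessing, so that $\eta^\ell$ drops below $c\,\kappa^{2\ell}$;
\item one-way reconciliation plus privacy amplification, which fits public-key encryption because every message after the public key flows from sender to receiver.
\end{itemize}

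There is one step that would fail if done generically: the reconciliation must be efficient. To push the correctness error below every $\lambda^{-O(1)}$ you need $N=\omega(\log\lambda)$ blocks. Generic syndrome decoding on $N$ blocks is then super-polynomial, so decryption would not be polynomial time, and you need to say how the receiver decodes.

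\emph{Fix via an efficient code.} Because $\ell$ is a constant, the slack $1-\eta^\ell-h\bigl((1-\kappa^\ell)/2\bigr)$ is a positive constant. So any efficiently decodable linear code of rate a constant below capacity works, for example a concatenated code. Given the sender's string $X$ and the receiver's string $Y$, the receiver decodes $Y\oplus Z$ for any $Z$ with the same syndrome as $X$. This requires each instance's decryption error to be independent of its bit, which correctness does not give you, since it only bounds the average error. You can arrange it by encrypting $b\oplus r$ under the weak scheme and sending $r$ in the clear.

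\emph{Alternative fix.} You can also reach negligible error by alternating XOR with repetition-plus-majority, as in Sahai--Vadhan polarization. This needs no nontrivial decoder.

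Finally, the hybrid you flag as the main obstacle is easier here than you expect. The paper's security notion is against non-uniform adversaries. So the other $N\ell-1$ instances, including whether each lies in the hard-core set, can be hardwired by averaging. The single-instance tight hard-core lemma then suffices, without a parallel or uniform version.
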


\subsection{Coding Theory} \label{sec:prelimcodingtheory}
The \emph{minimum distance} of a code is the number of nonzero entries in a minimum-weight nonzero codeword. The \emph{dual} of a code $\mathcal{C}$ is the code consisting of all vectors $\mat c'$ such that, for all $\mat c \in \mathcal{C}$, the inner product $<\mat c', \mat c>$ satisfies $<\mat c', \mat c> = 0$. The dual to a linear code is always a linear code.

\paragraph{Reed-Muller Codes.}
We define $\binom{d}{\leq r} \coloneqq \sum_{i = 0}^r \binom{d}{\leq r}$. $\mathcal{M}(d, r)$ denotes the set of all monomials of degree at most $r$ over $d$ variables in $\mathbb{F}_2$. Observe that $\vert\mathcal{M}(d, r)\vert = \binom{d}{\leq r}$, because we always include the constant monomial ``1'' in $\mathcal{M}(d, r)$. We consider each monomial $M \in \mathcal{M}(d, r)$ as a function, so that we can write $M(\mat p)$ to denote the evaluation of $M$ on a given point $\mat p \in \mathbb{F}_2^d$.

\begin{definition}[Reed-Muller (RM) Code \cite{reed1953class, muller1954application}]
    The Reed-Muller code RM$(d, r)$ is a linear code consisting of the length-$2^d$ evaluation vectors for all polynomials of degree at most $r$ over $\mathbb{F}_2^d$.
\end{definition}

Equivalently, we can define the RM code using an ``evaluation matrix'' $\mat E \in \mathbb{F}_2^{2^d \times \binom{d}{\leq r}}$. Each row of $\mat E$ is indexed by a different point $\mat p \in \mathbb{F}_2^d$, each column is indexed by a distinct monomial $M \in \mathcal{M}(d, r)$, and we set $\mat E_{\mat p, M} = M(\mat p)$. Now we can write
\[\textnormal{RM}(d, r) = \{\mat E \mat x : \mat x \in \mathbb{F}_2^{\binom{d}{\leq r}}\},\]
where $\mat x$ is can be interpreted as the coefficient vector for a degree $\leq r$ polynomial.

We need the following basic lemmas, both of which are well known.

\begin{lemma}[See e.g. \cite{abbe2020reed}] \label{lem:RMdistance}
    The minimum distance of RM$(d, r)$ is $2^{d-r}$.
\end{lemma}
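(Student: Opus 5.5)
The approach is the standard two-part argument: an upper bound by exhibiting a nonzero codeword of weight exactly $2^{d-r}$, and a lower bound by induction on $d$ via the Plotkin $(u, u+v)$ decomposition of Reed-Muller codes.

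\emph{Upper bound.} When $r \geq 1$, take the monomial $x_1 x_2 \cdots x_r \in \mathcal{M}(d,r)$. Its evaluation vector is nonzero exactly on the points $\mat p \in \mathbb{F}_2^d$ with $p_1 = \cdots = p_r = 1$, and there are $2^{d-r}$ such points. When $r = 0$, the constant polynomial $1$ has weight $2^d$. In either case the minimum distance is at most $2^{d-r}$.

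\emph{Lower bound.} We show by induction on $d$ that every nonzero $f$ of degree $\leq r$ in $d$ variables has at least $2^{d-r}$ nonzeros; take the inductive statement to hold for all $0 \leq r \leq d$. The base cases are immediate:
\begin{itemize}
    \item for $d = 0$ (so $r = 0$), a nonzero constant has weight $1 = 2^0$;
    \item for $r = 0$ and any $d$, a nonzero constant has weight $2^d$.
\end{itemize}
For the inductive step with $d \geq 1$ and $r \geq 1$, write $f(x_1,\ldots,x_d) = g(x_1,\ldots,x_{d-1}) + x_d\, h(x_1,\ldots,x_{d-1})$ with $\deg g \leq r$ and $\deg h \leq r-1$. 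The restrictions to $x_d = 0$ and $x_d = 1$ are $g$ and $g+h$, so $\hw(f) = \hw(g) + \hw(g+h)$, where both weights are over $\mathbb{F}_2^{d-1}$. We split into three cases.
\begin{itemize}
    \item If $h = 0$, then $g = f \neq 0$, and $\hw(f) = 2\hw(g) \geq 2 \cdot 2^{d-1-r} = 2^{d-r}$.
    \item If $g = 0$, then $h \neq 0$ has degree $\leq r-1$, so $\hw(f) = \hw(h) \geq 2^{(d-1)-(r-1)} = 2^{d-r}$.
    \item If $g$ and $h$ are both nonzero, use $\hw(g) + \hw(g+h) \geq \hw(h)$, which holds by the triangle inequality since $h = g + (g+h)$. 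Since $h \neq 0$ has degree $\leq r-1$ in $d-1$ variables, this gives $\hw(f) \geq 2^{d-r}$.
\end{itemize}
When $r > d-1$, the inductive hypothesis for $g$ would be invoked outside its range. This is handled by noting that degree is always at most the number of variables, so we may cap $r$ at $d$. The case $r = d$ gives bound $1$, which is trivial for any nonzero vector.

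The only delicate point is this bookkeeping of the edge cases ($r = 0$, and $r \geq d$). Otherwise the argument is routine; alternatively, one can simply cite \cite{abbe2020reed}.
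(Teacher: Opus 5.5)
The paper does not prove this lemma; it only cites it as a standard Reed--Muller fact. Your argument is the standard proof behind that citation, and it is correct: the monomial $x_1\cdots x_r$ gives the upper bound, and the restriction-to-$x_d$ (Plotkin $(u,u+v)$) induction gives the lower bound, via the weight of $h = g + (g+h)$ being at most the weight of $g$ plus the weight of $g+h$. The one caveat is that the statement implicitly needs $0 \le r \le d$, since for $r > d$ the code is all of $\mathbb{F}_2^{2^d}$ and has distance $1$; your upper-bound monomial also tacitly assumes this.
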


\begin{lemma}[See e.g. \cite{abbe2020reed}] \label{lem:RMdual}
    The dual code to RM$(d, r)$ is RM$(d, d - r - 1)$.
\end{lemma}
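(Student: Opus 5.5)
The statement is Lemma~\ref{lem:RMdual}: the dual of RM$(d,r)$ is RM$(d,d-r-1)$. The plan has two steps. First, show that every codeword of RM$(d,d-r-1)$ is orthogonal to every codeword of RM$(d,r)$. Second, compare dimensions to get equality rather than mere containment.

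For orthogonality, take $f$ of degree at most $r$ and $g$ of degree at most $d-r-1$. Their inner product over $\mathbb{F}_2$ is
\[\sum_{\mat p \in \mathbb{F}_2^d} f(\mat p) g(\mat p).\]
This is the parity of the weight of the evaluation vector of $h = fg$, which has degree at most $d-1$. Working with multilinear reductions (using $x_i^2 = x_i$), it suffices by bilinearity to treat $h$ as a single monomial $\prod_{i \in S} x_i$ with $|S| \leq d-1$. Such a monomial evaluates to $1$ on exactly $2^{d-|S|}$ points. Since $d - |S| \geq 1$, this count is even, so the sum vanishes. Hence RM$(d,d-r-1) \subseteq \textnormal{RM}(d,r)^\perp$.

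For equality, count dimensions. Distinct multilinear monomials give distinct functions on $\mathbb{F}_2^d$, because evaluation is a bijection between multilinear polynomials and functions $\mathbb{F}_2^d \to \mathbb{F}_2$ (both spaces have size $2^{2^d}$). So
\[\dim \textnormal{RM}(d,r) = |\mathcal{M}(d,r)| = \binom{d}{\leq r}.\]
Then
\[\dim \textnormal{RM}(d,r)^\perp = 2^d - \binom{d}{\leq r}.\]
By the symmetry $\binom{d}{i} = \binom{d}{d-i}$,
\[2^d - \sum_{i \le r}\binom{d}{i} = \sum_{i \ge r+1}\binom{d}{i} = \sum_{j \le d-r-1}\binom{d}{j} = \binom{d}{\leq d-r-1}.\]
This is exactly $\dim \textnormal{RM}(d,d-r-1)$. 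A subspace of equal dimension must be the whole space, so the containment is an equality.

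The only step needing care is linear independence of the evaluation vectors of the monomials, which is what makes the dimension count exact; the bijection between multilinear polynomials and functions settles it. The edge cases $r \geq d$ and $r = -1$, where one code is everything and the other is $\{0\}$, are consistent with the same argument.
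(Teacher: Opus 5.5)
Your proof is correct. The paper does not prove this lemma at all; it imports it as a standard fact with a citation to the Reed--Muller survey. Your write-up therefore supplies what the paper leaves to the literature, and it follows the usual textbook route: orthogonality because every monomial of degree at most $d-1$ has even weight, then equality from the dimension count $2^d - \binom{d}{\le r} = \binom{d}{\le d-r-1}$.

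The one step to tighten is your justification of the bijection between multilinear polynomials and functions $\mathbb{F}_2^d \to \mathbb{F}_2$. Equal cardinality ($2^{2^d}$ on both sides) gives a bijection only once you also know the evaluation map is surjective (or injective). Surjectivity is immediate. The indicator of a point $p \in \mathbb{F}_2^d$ is the multilinear polynomial $\prod_{i=1}^{d}(x_i + p_i + 1)$, the same product form the paper uses in the proof of Theorem~\ref{thm:makingtheexpander}. Every function is a sum of such point indicators.

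With that line added, you get linear independence of the monomial evaluation vectors, and hence $\dim \textnormal{RM}(d,r) = \binom{d}{\le r}$. The rest of your argument then goes through as written, including the degenerate cases $r \ge d$ and $r = -1$.
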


Saptharishi, Shpilka, and Volk showed that Reed-Muller codes of sufficiently low rate can be decoded from a $1 - o(1)$ fraction of random corruptions.

\begin{theorem}[Special case of Corollary 14 in \cite{saptharishi2016efficiently}] \label{thm:SSVdecoder}
    There is a function $\gamma(d) \geq 1 - o(1), \gamma(d) \leq 1 - d^{-O(1)}$ and a poly$(2^d)$ time algorithm $\mathsf{D}$ that outputs a vector in RM$(d, d^{1/3})$ and behaves as follows. Pick a random codeword $\mat c \in$ RM$(d, d^{1/3})$, and sample a noisy version $\mat c'$ of $\mat c$:
    \[\mathbf{c}_i'=\begin{cases} \text{Random element of $\mathbb{F}_2$,} & \text{with probability $\gamma(d)$. } \\ \mat c_i, & \text{ otherwise.} \end{cases}\]
    Then
    \[\Pr[\mathsf{D}(\mat c') = \mat c] \geq 1 - o(1).\]
\end{theorem}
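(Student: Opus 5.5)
The plan is to obtain the statement directly from the low-rate decoding result of Saptharishi, Shpilka, and Volk, after translating between our ``corruption rate'' convention and the bit-flip (BSC) convention used there. As noted in the preliminaries, replacing each coordinate with a uniformly random bit with probability $\gamma$ is the same as flipping it independently with probability $p = \gamma/2$. So it suffices to show the following: for some $\gamma(d)$ with $1-o(1) \le \gamma(d) \le 1 - d^{-O(1)}$, a poly$(2^d)$-time algorithm recovers a codeword of RM$(d, d^{1/3})$ with probability $1-o(1)$ when each bit is flipped independently with probability $\frac12 - \frac{1-\gamma(d)}{2}$. This is a flip rate of the form $\frac12 - \varepsilon$ with $\varepsilon = d^{-O(1)}$ and $\varepsilon = o(1)$.

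First, I would recall the form of Corollary 14 in \cite{saptharishi2016efficiently}. For degree $r$ small relative to $d$ (roughly $r = o(\sqrt{d/\log d})$), it gives an efficient decoder for RM$(d, r)$ that succeeds with high probability against random errors at flip rate $\frac12 - \varepsilon$, where $\varepsilon$ can be taken as small as an inverse polynomial in $d$, specifically of order $2^{-?}$ or $(r/d)^{\Theta(1)}$ depending on the statement. The underlying mechanism is a duality argument. The dual of a low-degree RM code is the high-degree code RM$(d, d-r-1)$ by Lemma~\ref{lem:RMdual}. SSV show that error patterns whose ``degree-$r$ lifts'' are linearly independent, which is a property of random sets of the relevant size, can be decoded efficiently by solving a linear system over $\mathbb{F}_2$.

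Next, I would check that $r = d^{1/3}$ lies in the allowed range, since $d^{1/3} = o(\sqrt{d/\log d})$. I would then read off the resulting tolerance $\varepsilon(d)$ and define $\gamma(d) = 1 - 2\varepsilon(d)$. I would verify that $\gamma(d) \ge 1 - o(1)$, because $\varepsilon \to 0$, and that $\gamma(d) \le 1 - d^{-O(1)}$, because $\varepsilon$ is at least inverse-polynomial in $d$. If SSV's tolerance is stated as some $\varepsilon_0$ that is larger, I can increase the noise rate down to any smaller $\varepsilon' \ge \varepsilon_0$ only if the theorem covers it. So I would pick $\varepsilon$ exactly at SSV's threshold and, if necessary, shrink it further using monotonicity in the allowed range.

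Two technical points then remain. The first is uniformity of the codeword. SSV's guarantee is for an arbitrary fixed codeword with random errors: by linearity the decoder's success depends only on the error pattern, since one can subtract the codeword. A fortiori it holds for a random codeword $\mat c$, and the errors are independent of $\mat c$ in our sampling. The second is running time. SSV's algorithm runs in time polynomial in the block length $2^d$, and the output is a codeword of RM$(d, d^{1/3})$ (one can re-encode if it outputs a message), so $\mathsf{D}$ runs in poly$(2^d)$ time.

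The main obstacle is pinning down the exact quantitative form of SSV's Corollary 14, in particular the dependence of $\varepsilon$ on $d$ and $r$. This is needed to guarantee that the tolerated flip rate is $\frac12 - d^{-O(1)}$ rather than something like $\frac12 - \Omega(1)$. If the corollary only gives a constant gap, I would instead apply their main theorem with an explicit parameter choice $r = d^{1/3}$, taking $\varepsilon$ polynomially small in $d$, and confirm that the linear-independence condition on random error sets still holds at that density.
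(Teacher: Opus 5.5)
Your route is the paper's route: invoke Corollary 14 of Saptharishi--Shpilka--Volk at degree $d^{1/3}$, which lies in their low-rate regime, and translate corruption rate $\gamma$ into flip rate $\gamma/2$. However, you skip the one conversion the paper explicitly flags.

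Corollary 14 is a statement about a uniformly random error set of a \emph{prescribed size}, roughly $(\frac{1}{2}-o(1))2^d$. It is not a statement about i.i.d.\ BSC errors, so it does not directly give ``a decoder against flip rate $\frac{1}{2}-\varepsilon$'' as you assert. You need a fixed-weight to Bernoulli step, which goes as follows:
\begin{itemize}
\item Take flip rate $\frac{1}{2}-\varepsilon$ with $\varepsilon=(1-\gamma)/2$. The number of flips is $\textnormal{Binom}(2^d,\frac{1}{2}-\varepsilon)$, so it is at most $w_0=(\frac{1}{2}-\frac{\varepsilon}{2})2^d$ with probability $1-o(1)$, provided $\varepsilon 2^{d/2}\to\infty$.
\item Conditioned on its size $w$, the flip set is a uniform $w$-subset.
\item SSV's guarantee covers every error pattern that is erasure-correctable in an associated higher-degree RM code, and this condition is closed under taking subsets. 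So a uniform $w$-subset with $w\le w_0$ can be coupled inside a uniform $w_0$-subset, and the fixed-size guarantee at $w_0$ transfers.
\end{itemize}
This is routine, but without it your argument never actually connects to the cited result.

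Your monotonicity remarks also run in the wrong direction, and this is why you misidentify the main obstacle. Monotonicity lets you \emph{reduce} the noise, i.e.\ increase $\varepsilon$. It never lets you ``shrink $\varepsilon$ further.'' Fortunately that direction is all you need. The only thing SSV must supply is that $\varepsilon$ can be taken $o(1)$ at degree $d^{1/3}$, and their $(\frac{1}{2}-o(1))$ low-rate result gives exactly that. The bound $\gamma(d)\le 1-d^{-O(1)}$ then comes for free: cap $\gamma$, e.g.\ $\gamma(d)=\min(\gamma_0(d),1-1/d)$ where $\gamma_0$ is the tolerated rate, and use the same subset coupling. This cap also guarantees $\varepsilon 2^{d/2}\to\infty$, which the concentration step above requires. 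So you do not need to extract an inverse-polynomial tolerance from SSV's statement, and no fallback to their main theorem is needed.
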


\begin{remark}
    Technically speaking, Corollary 14 in \cite{saptharishi2016efficiently} is for a fixed number of random errors. But as pointed out by the authors, results for a fixed number of errors translate to the Bernoulli error case with vanishing parameter loss. (For a discussion of this phenomenon, see \cite{abbe2015reed}.) The Bernoulli error case is then equivalent to our corruption model after rescaling parameters.
\end{remark}

\paragraph{Notational Guidance.}
Throughout the paper, we use slightly non-standard notation for our codes; for example, we denote the Reed-Muller code as RM$(d, r)$ instead of RM$(m, r)$, we use $n$ for the code dimension, and we use $m$ for the block length. This is so that we can refer to our cryptographic subproblems as involving ``$m \times n$'' matrices. In general we will default to cryptographic notation.

\section{Our Conjectures} \label{sec:conjectures}
We build a public key encryption scheme whose security is implied by the following conjectures when instantiated with corruption rates of $\alpha = 1 - o(1)$ and $\beta = 1 - o(1)$. Recall from Section \ref{sec:LARP-CSPintro} and Section~\ref{sec:kXORintro} that we have a variety of lower bounds suggesting the conjectures in fact hold against all $\exp{n^{o(1)}}$ size adversaries. See the introduction for a more detailed discussion.

\begin{conjecture}[LARP-CSP($\alpha$) Conjecture] \label{conj:LARP-CSP}
    Let $\mat H$ be any $(1-o(1), n^{1 - o(1)})$-expanding $(m, n, k)$-matrix, where $k = (\log n)^{\Theta(1)}$ and $m \leq n^{o(k)}$. Let $\Sigma, \Gamma$ be any alphabets satisfying $\vert\Sigma\vert = (nm)^{\log^{\Theta(1)}(nm)}$ and $\vert\Gamma\vert \leq \vert\Sigma\vert^{3k/4}$. Sample $m$ random functions $f_i : \Sigma^k \rightarrow \Gamma$, and let $\mathcal{F}$ be the set of all $f_i$. Then no poly$(\vert\Sigma\vert^k)$ size algorithm can distinguish, with advantage more than $1/4$, between the following two distributions.
    \begin{enumerate}
        \item $(\mat H, \mathcal{F}, \mat b)$, where $\mat b \in \Gamma^m$ is sampled at random. \label{dist:LARP-CSPnull}
        \item $(\mat H, \mathcal{F}, \mat b)$, where $\mat b \in \Gamma^m$ is sampled as follows. Sample $\mat s \in \Sigma^n$ at random and set
        \[\mathbf{b}_i=\begin{cases} \text{Random element of $\Gamma$,} & \text{with probability $\alpha$. } \\  f_i(\mat s_{N_{\mat H}(i, 1)}, \ldots, \mat s_{N_{\mat H}(i, k)}), & \text{ otherwise.} \end{cases}\] \label{dist:LARP-CSPplanted}
    \end{enumerate}
\end{conjecture}

\begin{conjecture}[$k$XOR($\beta$) Conjecture] \label{conj:kXOR}
    Let $\mat H$ be a random $(m, n, k)$-matrix, where $k = (\log n)^{\Theta(1)}$ and $m \leq n^{k/3}$. Then no poly$(m)$ size algorithm can distinguish, with advantage more than $1/4$, between the following two distributions.
    \begin{enumerate}
        \item $(\mat H, \mat b)$, where $\mat b \in \mathbb{F}_2^m$ is sampled at random. \label{dist:kXORnull}
        \item $(\mat H, \mat b)$, where $\mat b \in \mathbb{F}_2^m$ is sampled as follows. Sample $\mat s \in \mathbb{F}_2^n$ at random and set
        \[\mathbf{b}_i=\begin{cases} \text{Random element of $\mathbb{F}_2$,} & \text{with probability $\beta$. } \\ \mat s_{N_{\mat H}(i, 1)} + \ldots + \mat s_{N_{\mat H}(i, k)}, & \text{ otherwise.} \end{cases}\] \label{dist:kXORplanted}
    \end{enumerate}
\end{conjecture}

\section{From Decodable Expanding Codes to PKE} \label{sec:frameworkPKE}

\subsection{The General Framework} \label{sec:template}

We first define an \emph{$(\alpha, \beta)$-decodable expanding code}, which consists of a code generation algorithm $\mathsf{MatrixGen}$ and a distinguishing algorithm $\mathsf{Distinguish}$. The name comes from the fact that the error correcting code will have a \emph{low density generator matrix} (LDGM), and additionally this generator matrix will be strongly expanding. Intuitively, we mandate that $\mathsf{Distinguish}$ can differentiate between (i) random vectors subjected to erasures of rate $\alpha$, and (ii) codewords subjected to erasures of rate $\alpha$ and corruptions (on the non-erased coordinates) of rate $\beta$.

\begin{definition}[$(\alpha, \beta)$-Decodable Expanding Code]\footnote{As mentioned in the introduction, we actually only need a distinguisher, which is simple to construct when given a decoder.} \label{def:LDGMCode}
A tuple $(\mathsf{MatrixGen}, \mathsf{Distinguish}, \alpha, \beta)$, where $\mathsf{MatrixGen}$ and $\mathsf{Distinguish}$ are algorithms, and $\alpha$ and $\beta$ are parameters. The code generation algorithm $\mathsf{MatrixGen}$ has the following properties:
\begin{enumerate}
    \item $\mathsf{MatrixGen}(1^m)$ runs in $m^{O(1)}$ time and outputs an $(m, n, k)$-matrix $\mat G$, where $k = \log^{\Theta(1)}m$ and $n$ satisfies $m = n^{o(k)}$. We interpret $\mat G$ as the generator matrix for a linear code over $\mathbb{F}_2$.
    \item With $1 - o(1)$ probability over the coins of $\mathsf{MatrixGen}(1^n)$, $\mat G$ is a $(1 - o(1), n^{1 - o(1)})$-expander.
\end{enumerate}
The distinguishing algorithm $\mathsf{Distinguish}$ has the following properties:
\begin{enumerate}
    \item $\mathsf{Distinguish}(1^m, \mat G, \mat w)$ takes as input an $(m, n, k)$-matrix $\mat G$ produced by $\mathsf{MatrixGen}$ along with a vector $\mat w \in \left\{\mathbb{F}_2 \cup \textnormal{``?''}\right\}^m$, runs in time $m^{O(1)}$, and outputs a bit $b \in \{0, 1\}$.
    \item Pick a random codeword $\mat c \in \{\mat G\mat x : \mat x \in \mathbb{F}_2^n\}$, and sample a noisy version $\mat c' \in \left\{\mathbb{F}_2 \cup \textnormal{``?''}\right\}^m$ of $\mat c$:
    \[\mathbf{c}_i'=\begin{cases} \text{``?'',} & \text{with probability $\alpha$.} \\ \text{Random element of $\mathbb{F}_2$,} & \text{with probability $(1 - \alpha)\beta$.} \\ \mat c_i, & \text{ otherwise.} \end{cases}\]
    Also sample a random vector $\mat r \in \left\{\mathbb{F}_2 \cup \textnormal{``?''}\right\}^m$:
    \[\mathbf{r}_i=\begin{cases} \text{``?'',} & \text{with probability $\alpha$.} \\ \text{Random element of $\mathbb{F}_2$,} & \text{ otherwise.} \end{cases}\]
    Then
    \[\Pr[\mathsf{Distinguish}(1^n, \mat G, \mat c') = 0] \geq 1 - o(1) \quad \text{and} \quad \Pr[\mathsf{Distinguish}(1^n, \mat G, \mat r) = 1] \geq 1 - o(1),\]
    where the probability ranges over the choice of $\mat G$, $\mat c'$, and $\mat r$, as well as the coins of $\mathsf{Distinguish}$.
\end{enumerate}
\end{definition}
\vspace{20px}
\noindent
Given any $(\alpha, \beta)$-decodable expanding code with internal parameters $m, n, k$, we define a public key encryption scheme as follows. Fix two alphabets $\Sigma, \Gamma$ satisfying $\vert\Sigma\vert = (nm)^{\log^{\Theta(1)}(nm)}$ and $\vert\Gamma\vert = \vert\Sigma\vert^{3k/4}$. The security parameter $\lambda$ is set to the maximum of $\vert\Sigma\vert^k$, the running time of $\mathsf{MatrixGen}$, and the running time of $\mathsf{Distinguish}$. It will be clear from the exposition that all algorithms run in time $\vert\Sigma\vert^{O(k)}$, so $\lambda = \vert\Sigma\vert^{O(k)}$.

\begin{remark}
    Strictly speaking, we would like to instantiate the PKE scheme by choosing $\lambda$ first and then setting the other parameters. Because the relationship between $n, m, k, \vert\Sigma\vert, \vert\Gamma\vert,$ and $\lambda$ is well defined, we can simply fix a target value $\lambda^*$ and then calculate values for the parameters so that the actual security parameter $\lambda$ is within a small slack factor of the target $\lambda^*$. Notice that $n, m, \vert\Sigma\vert,$ and $\vert\Gamma\vert$ will all be of the form $2^{\log^c \lambda}$ for different constants $c > 0$, and $k$ will be of the form $\log^{\Theta(1)} \lambda$.
\end{remark}
\vspace{8px}

\paragraph{Key Generation.} 

\begin{mdframed}
$(\pk, \sk)\from \mathsf{KeyGen}(1^\lambda)$:
\begin{enumerate}
    \item $\mat G \leftarrow \mathsf{MatrixGen}(1^m)$.
    \item Sample functions $f_1, \ldots, f_m : \Sigma^k \rightarrow \Gamma$ at random.
    \item Sample $\mat b \in \Gamma^m$ as follows. Sample $\mat s \in \Sigma^n$ at random, and set \label{step:makingb}
    \[\mathbf{b}_i=\begin{cases} \text{Random element of $\Gamma$,} & \text{with probability $\alpha$. } \\  f_i(\mat s_{N_{\mat G}(i, 1)}, \ldots, \mat s_{N_{\mat G}(i, k)}), & \text{ otherwise.} \end{cases}\]
    \item Define the set
    \[\mathcal{X} \coloneqq \left\{(\sigma_1, \ldots, \sigma_k) \in \Sigma^k : \text{ there exists } i \in [m] \text{ such that } f_i(\sigma_1, \ldots, \sigma_k) = \mat b_i\right\}.\]
    \item Delete all tuples from $\mathcal{X}$ that contain two or more copies of the same symbol. \label{step:removeedges}
    \item If at least one of the following holds, output $(\pk, \sk) \coloneqq (\perp, \perp)$ and abort. \label{step:failure}
    \begin{enumerate}
        \item There exist indices $i, j \in [n]$ with $i \neq j$ such that $\mat s_i = \mat s_j$. \label{step:nondistinct}
        \item $\vert\mathcal{X}\vert > \vert\Sigma\vert^{k/3}$.
    \end{enumerate}
    \item Define $\mat H \in \mathbb{F}_2^{\vert\mathcal{X}\vert \times \vert\Sigma\vert}$ as follows. \label{step:makingH}
    \begin{enumerate}
        \item Label each column by a distinct $\sigma \in \Sigma$.
        \item Label each row by a distinct $(\sigma_1, \ldots, \sigma_k) \in \mathcal{X}$.
        \item Set $\mat H_{((\sigma_1, \ldots, \sigma_k), \sigma)} = 1$ if and only if $\sigma$ equals one of $\sigma_1, \ldots, \sigma_k$.
    \end{enumerate}
    \item Modify $\mat H$ as follows.
    \begin{enumerate}
        \item Append random rows with exactly $k$ nonzero entries, until the height of $\mat H$ is exactly $\vert\Sigma\vert^{k/3}$.
        \item Randomly permute the rows of $\mat H$.
    \end{enumerate}
    \item Define a function $\zeta : [m] \rightarrow \left[\vert\Sigma\vert^{k/3}\right] \cup \textnormal{``?''}$ as follows. \label{step:zeta}
    \begin{enumerate}
        \item If $\mat b_i$ was \emph{not} set to a random value in Step \ref{step:makingb}, then $\zeta(i)$ is set to the index of the row representing $(\mat s_{N_{\mat G}(i, 1)}, \ldots, \mat s_{N_{\mat G}(i, k)})$ in $\mat H$. Such a row is guaranteed to exist by Step \ref{step:nondistinct}.
        \item Otherwise, $\zeta(i)$ is set to $\perp$.
    \end{enumerate}
    \item Output $(\pk, \sk) \coloneqq (\mat H, \zeta)$.
\end{enumerate}
\end{mdframed}

\paragraph{Encryption.}
\begin{mdframed}
$\ct \from \Enc(1^\secParam, \pk, b \in \{0, 1\})$:
    \begin{enumerate}
        \item If $\pk =$ $\perp$, output $\ct \coloneqq$ $\perp$ and abort.
        \item Otherwise, parse $\pk$ as $\mat H \in \mathbb{F}_2^{m' \times n'}$, where $m' = \vert\Sigma\vert^{k/3}$ and $n' = \vert\Sigma\vert$.
        \item If $b = 0$, sample $\mat v \in \mathbb{F}_2^{m'}$ as follows. Sample $\mat t \in \mathbb{F}_2^{n'}$ at random and set
        \[\mathbf{v}_i=\begin{cases} \text{Random element of $\mathbb{F}_2$,} & \text{with probability $\beta$. } \\ \mat t_{N_{\mat H}(i, 1)} + \ldots + \mat t_{N_{\mat H}(i, k)}, & \text{ otherwise.} \end{cases}\]
        \item If $b = 1$, then sample $\mat v \in \mathbb{F}_2^{m'}$ at random.
        \item Output $\ct \coloneqq \mat v$.
    \end{enumerate}
\end{mdframed}

\paragraph{Decryption.}
\begin{mdframed}
$b' \from \Dec(1^\secParam, \sk, \ct)$:
    \begin{enumerate}
        \item If $\ct =$ $\perp$, output $b' \coloneqq$ $\perp$ and abort.
        \item Otherwise, parse $\ct$ as $\mat v$ and $\sk$ as $\zeta$.
        \item Define a vector $\mat w \in \{\mathbb{F}_2 \cup \textnormal{``?''}\}^m$ as follows. \label{step:makingw}
        \begin{enumerate}
            \item If $\zeta(i) \neq$ $\perp$, then $\mat w_i = \mat v_{\zeta(i)}$.
            \item Otherwise, $\mat w_i = \textnormal{``?''}$.
        \end{enumerate}
        \item Output $b' \coloneqq \mathsf{Distinguish}(1^m, \mat G, \mat w)$.
    \end{enumerate}
\end{mdframed}
\vspace{15px}

Now our plan in Sections \ref{sec:correct} and \ref{sec:secure} is to argue $\kappa$-correctness and $\eta$-security. In particular, we get a $(0.99, 0.9)$-secure\footnote{Technically, the PKE scheme will be $(1 - o(1), \frac{3}{4} + o(1))$-secure, but the constants $0.99$ and $0.9$ are easier to work with.} PKE scheme, assuming Conjecture \ref{conj:LARP-CSP} (LARP-CSP) holds with corruption parameter $\alpha$ and Conjecture \ref{conj:kXOR} ($k$XOR) holds with corruption parameter $\beta$. This implies a semantically secure PKE scheme by applying Theorem \ref{thm:amplify}. In Section \ref{sec:concrete} we give an explicit $(1 - o(1), 1 - o(1))$-decodable expanding code, along with the concrete PKE scheme it implies.

\subsection{Correctness} \label{sec:correct}
In this subsection we show that $b' = b$ with sufficiently high probability. More precisely, we prove:

\begin{lemma}[Correctness] \label{lem:weakcorrect}
    Let $\kappa = 0.99$,\footnote{In the proof we actually show that we can take $\kappa = 1 - o(1)$.} and assume $\lambda$ is sufficiently large. For a random bit $b \in \{0, 1\}$,
        \[\Pr[(\pk, \sk) \leftarrow \mathsf{KeyGen}(1^\lambda) ; \Dec(1^\lambda, \sk, \Enc(1^\lambda, \pk, b)) = b] > \frac{1 + \kappa}{2},\]
    where the probability is taken over the internal coins of $\mathsf{KeyGen}, \Enc,$ and $\Dec$.
\end{lemma}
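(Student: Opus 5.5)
The plan is to reduce correctness to the guarantee of $\mathsf{Distinguish}$ in Definition~\ref{def:LDGMCode}. Conditioned on $\mathsf{KeyGen}$ not aborting, we show two things. First, when $b = 0$, the vector $\mat w$ built in Step~\ref{step:makingw} of $\Dec$ is distributed exactly as the noisy codeword $\mat c'$ of the definition, possibly up to $o(1)$ statistical distance. Second, when $b=1$, $\mat w$ is distributed exactly as $\mat r$. Correctness then follows with probability $1 - o(1) - \Pr[\text{abort}]$. The proof therefore has three steps: bound the abort probability, analyze the $b=1$ case, and analyze the $b=0$ case.

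\emph{Abort probability.} For Step~\ref{step:nondistinct}, a union bound over the $\binom{n}{2}$ pairs gives probability at most $n^2/\vert\Sigma\vert = o(1)$, since $\vert\Sigma\vert$ is superpolynomial in $n$. For the size condition, each of the $m$ constraints has in expectation $\vert\Sigma\vert^k/\vert\Gamma\vert = \vert\Sigma\vert^{k/4}$ preimages of $\mat b_i$, plus at most one planted preimage. So $\mathbb{E}\vert\mathcal{X}\vert \le m(\vert\Sigma\vert^{k/4}+1)$, which is far below $\vert\Sigma\vert^{k/3}$ because $m \le \vert\Sigma\vert$. Markov's inequality then bounds this failure probability by $o(1)$.

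\emph{Case $b=1$.} Here $\mat v$ is uniform and independent of $\zeta$. The map $\zeta$ is injective on its non-$\perp$ values, since distinct $i$ correspond to distinct rows of $\mat G$ (distinct neighborhoods) and $\mat s$ has distinct entries. Hence the non-erased entries of $\mat w$ are i.i.d.\ uniform bits. Each $\zeta(i) = \perp$ independently with probability $\alpha$, so $\mat w \sim \mat r$.

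\emph{Case $b = 0$, the main obstacle.} Define $\mat x \in \mathbb{F}_2^n$ by $\mat x_j = \mat t_{\mat s_j}$. This is uniform because the $\mat s_j$ are distinct and $\mat t$ is uniform. For $i$ with $\zeta(i) \ne \perp$, the row $\zeta(i)$ of $\mat H$ is the indicator of $\{\mat s_{N_{\mat G}(i,1)},\dots,\mat s_{N_{\mat G}(i,k)}\}$. Its uncorrupted encryption value is therefore $\sum_j \mat t_{\mat s_{N_{\mat G}(i,j)}} = (\mat G\mat x)_i$. By injectivity of $\zeta$, the encryption corruptions on these rows are independent with rate $\beta$, and the erasure pattern is independent Bernoulli$(\alpha$) and independent of $\mat x$ and the corruptions. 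Hence $\mat w \sim \mat c'$. The delicate points are independence and the conditioning on non-abort. The abort event depends on $\mat s$ and on $\mathcal{X}$, so it could correlate with the erasure pattern. This is handled by noting that conditioning on an event of probability $1-o(1)$ changes any distribution by at most $o(1)$ in statistical distance. We also check that Step~\ref{step:removeedges} never deletes a planted tuple, which holds because its symbols are distinct.

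Combining the pieces, $\Dec$ errs with probability at most $o(1)$. This gives $\kappa = 1-o(1) > 0.99$ for large $\lambda$.
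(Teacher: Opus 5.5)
Your proposal is correct and follows essentially the paper's route: bound the two abort events, then show the extracted $\mat w$ is $o(1)$-close to $\mat r$ (for $b=1$) and to $\mat c'$ (for $b=0$) so that the guarantee of $\mathsf{Distinguish}$ applies. Your Markov bound on $\vert\mathcal{X}\vert$ and the substitution $\mat x_j = \mat t_{\mat s_j}$ are slightly cleaner than the paper's Chernoff-plus-union bound and its merged-column matrix $\mat G'$. The one step to tighten is the injectivity of $\zeta$, which the paper's proof glosses over entirely: distinct rows of an $(m,n,k)$-matrix need not have distinct supports. This does follow from the expansion of $\mat G$, since any two rows cover at least $2(1-o(1))k > k$ columns, and expansion holds with probability $1-o(1)$, so it costs only an extra $o(1)$.
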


At its core, the proof is quite simple. The matrix $\mat H$ should contain a \emph{homomorphic copy} of the original matrix $\mat G$ (but with a $1 - o(1)$ fraction of the rows deleted), and the secret function $\zeta$ describes where this homomorphic copy lies. Because encryption of a bit $b = 0$ amounts to computing a noisy linear sample using the matrix $\mat H$, a small portion of the ciphertext will actually correspond to a noisy codeword from the linear code $\{\mat G \mat x : \mat x \in \mathbb{F}_2^n\}$. The function $\zeta$ allows us to extract this noisy codeword, and from here the distinguishing algorithm for the code allows us to determine whether a bit $b = 0$ or a bit $b = 1$ was encrypted. Unfortunately, there are several technical details to handle, e.g. the failure conditions in the key generation algorithm.

\begin{proof}[Proof of Lemma \ref{lem:weakcorrect}]
    We start by bounding the probability that $\mathsf{KeyGen}$ aborts, i.e. at least one of the two failure conditions in Step \ref{step:failure} is satisfied. Below we argue that it is unlikely for two symbols of the secret $\mat s \in \Sigma^n$ to be the same.

    \begin{claim} \label{claim:secretcollide}
        The probability that there exists an index pair $i, j \in [n]$ with $i \neq j$ such that $\mat s_i = \mat s_j$ is $o(1)$.
    \end{claim}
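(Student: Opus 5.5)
This is a birthday-bound argument, using the union bound over pairs of indices. The secret $\mat s \in \Sigma^n$ is sampled uniformly, so its coordinates are independent and uniform over $\Sigma$. For any fixed pair $i \neq j$ we therefore have $\Pr[\mat s_i = \mat s_j] = 1/\vert\Sigma\vert$. Summing over the $\binom{n}{2} < n^2$ unordered pairs gives
\[\Pr[\exists\, i \neq j : \mat s_i = \mat s_j] \leq \frac{n^2}{\vert\Sigma\vert}.\]

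It remains to check that this bound is $o(1)$, using the parameter choice $\vert\Sigma\vert = (nm)^{\log^{\Theta(1)}(nm)}$ fixed before key generation. Since $m \geq 1$, we have $\vert\Sigma\vert \geq n^{\log^{c}(nm)}$ for some constant $c>0$. As $n$ grows, $\log^{c}(nm) \geq \log^c n \geq 3$ eventually, so $n^2/\vert\Sigma\vert \leq n^{2 - \log^c n} \leq 1/n \to 0$. Here $n \to \infty$ as $\lambda \to \infty$, because $\lambda = \vert\Sigma\vert^{O(k)}$ and all parameters are of the form $2^{\log^{c'}\lambda}$.

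No step here is a real obstacle. The only point needing care is that the constant implicit in $\log^{\Theta(1)}$ may be small, so the exponent $\log^{c}(nm)$ exceeds $2$ only for sufficiently large $n$. This is covered by the lemma's hypothesis that $\lambda$ is sufficiently large.
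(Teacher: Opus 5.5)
Your proof is correct and is essentially the paper's argument. Both use a union (Markov) bound over pairs, each colliding with probability $1/\vert\Sigma\vert$, and then note that $\vert\Sigma\vert$ is superpolynomial in the relevant parameters. The only cosmetic difference is the count: you bound the $\binom{n}{2}$ pairs directly against $\vert\Sigma\vert \ge n^{\log^{c} n}$, whereas the paper writes the bound as $O(m^2/\vert\Sigma\vert)$ and invokes $\vert\Sigma\vert = m^{\omega(1)}$.
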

    \begin{proof}
        There are $O(m^2)$ pairs $i, j$, and the probability that a fixed pair collides is $1/\vert\Sigma\vert$. A Markov bound shows that the probability of at least one collision occurring across all pairs is $O(m^2\vert\Sigma\vert^{-1})$. Because $m$ and $\vert\Sigma\vert$ grow with $\lambda$, and $\vert\Sigma\vert = m^{\omega(1)}$, this quantity is bounded as $o(1)$.
    \end{proof}

    Now we argue that the set of tuples $\mathcal{X}$ is unlikely to be too large.

    \begin{claim} \label{claim:sizeX}
        The probability that $\vert\mathcal{X}\vert > \vert\Sigma\vert^{k/3}$ is $o(1)$.
    \end{claim}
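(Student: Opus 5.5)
We plan to bound $\mathbb{E}[\vert\mathcal{X}\vert]$ and then apply Markov's inequality. Since $\mathcal{X}$ only shrinks in Step \ref{step:removeedges}, it suffices to bound the size of the set before the deletion step, namely the set of tuples $(\sigma_1, \ldots, \sigma_k) \in \Sigma^k$ with $f_i(\sigma_1, \ldots, \sigma_k) = \mat b_i$ for some $i \in [m]$. By a union bound over $i$,
\[\vert\mathcal{X}\vert \leq \sum_{i \in [m]} \left\vert\left\{\boldsymbol{\sigma} \in \Sigma^k : f_i(\boldsymbol{\sigma}) = \mat b_i\right\}\right\vert,\]
so it is enough to control, for each fixed $i$, the expected number of preimages of $\mat b_i$ under $f_i$.

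Fix $i \in [m]$ and a tuple $\boldsymbol{\sigma} \in \Sigma^k$. We split into cases according to how $\mat b_i$ was produced.
\begin{itemize}
    \item If $\mat b_i$ is a fresh random element of $\Gamma$, it is independent of $f_i$, so $\Pr[f_i(\boldsymbol{\sigma}) = \mat b_i] = 1/\vert\Gamma\vert$.
    \item If $\mat b_i = f_i(\boldsymbol{\tau})$ with $\boldsymbol{\tau} = (\mat s_{N_{\mat G}(i,1)}, \ldots, \mat s_{N_{\mat G}(i,k)})$, the $\tau$ argument does not depend on $f_i$, because $\mat s$ and $\mat G$ are sampled independently of $\mathcal{F}$. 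Conditioning on $\boldsymbol{\tau}$, each $\boldsymbol{\sigma} \neq \boldsymbol{\tau}$ satisfies $f_i(\boldsymbol{\sigma}) = f_i(\boldsymbol{\tau})$ with probability exactly $1/\vert\Gamma\vert$, since $f_i$ is a uniformly random function. The single tuple $\boldsymbol{\sigma} = \boldsymbol{\tau}$ contributes at most $1$.
\end{itemize}
In either case the expected number of preimages for index $i$ is at most $1 + \vert\Sigma\vert^k/\vert\Gamma\vert = 1 + \vert\Sigma\vert^{k/4}$, using $\vert\Gamma\vert = \vert\Sigma\vert^{3k/4}$.

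Summing over $i$ gives
\[\mathbb{E}\left[\vert\mathcal{X}\vert\right] \leq m\left(1 + \vert\Sigma\vert^{k/4}\right) \leq 2m\vert\Sigma\vert^{k/4}.\]
Markov's inequality then yields
\[\Pr\left[\vert\mathcal{X}\vert > \vert\Sigma\vert^{k/3}\right] \leq 2m\,\vert\Sigma\vert^{-k/12}.\]
This is $o(1)$ because $\vert\Sigma\vert \geq m^{\omega(1)}$ (as in Claim \ref{claim:secretcollide}) and $k \geq 1$, so $\vert\Sigma\vert^{k/12}$ is superpolynomially larger than $m$.

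The only delicate point is the planted case. There, $\mat b_i$ depends on $f_i$, so we cannot treat $\mat b_i$ as independent of $f_i$. Conditioning on $\mat s$ and $\mat G$, and treating the planted tuple $\boldsymbol{\tau}$ separately, resolves this.
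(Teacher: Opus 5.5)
Your proof is correct. It shares the paper's skeleton: both bound $\vert\mathcal{X}\vert$ by the pre-deletion count $Y=\sum_{i\in[m]}Y_i$, where $Y_i$ is the number of preimages of $\mat b_i$ under $f_i$. Both also handle the planted case by setting aside the planted tuple, so that $Y_i$ is distributed as $1+\textnormal{Binom}(\vert\Sigma\vert^k-1,\vert\Gamma\vert^{-1})$ or as $\textnormal{Binom}(\vert\Sigma\vert^k,\vert\Gamma\vert^{-1})$.

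Where you differ is the probabilistic tool. The paper applies a Chernoff bound to each $Y_i$ separately, showing $Y_i<2(1+\vert\Sigma\vert^{k/4})$ except with probability $\vert\Sigma\vert^{-\omega(1)}$. It then union bounds over the $m$ constraints and sums the resulting deterministic per-constraint bounds. You instead compute only $\mathbb{E}[Y]\le m(1+\vert\Sigma\vert^{k/4})$ and apply Markov to the total. This uses the slack factor $\vert\Sigma\vert^{k/12}$ between $\vert\Sigma\vert^{k/4}$ and the threshold $\vert\Sigma\vert^{k/3}$ to absorb the factor $m$.

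Your route is more elementary. It needs only linearity of expectation, not the independence of the events $f_i(\sigma)=\mat b_i$ across $\sigma$ that Chernoff requires, and it gives the explicit bound $2m\vert\Sigma\vert^{-k/12}$. That bound is already $\vert\Sigma\vert^{-\omega(1)}$, since $k\to\infty$ and $m=\vert\Sigma\vert^{o(1)}$. The paper's route buys per-constraint control: every constraint has at most about twice the expected number of satisfying tuples. That is more structural information than the claim needs, since the claim only asks for an $o(1)$ failure probability.
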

    \begin{proof}
        Recall that we defined $\mathcal{X}$ by taking the set
        \[\left\{(\sigma_1, \ldots, \sigma_k) \in \Sigma^k : \text{ there exists } i \in [m] \text{ such that } f_i(\sigma_1, \ldots, \sigma_k) = \mat b_i\right\}\]
        and then deleting all tuples that contain two or more copies of the same symbol. Define random variables
        \[Y_i = \left\vert\left\{(\sigma_1, \ldots, \sigma_k) \in \Sigma^k : f_i(\sigma_1, \ldots, \sigma_k) = \mat b_i\right\}\right\vert,\]
        and let $Y = \sum_{i \in m} Y_i$. Observe that we always have $Y \geq \vert\mathcal{X}\vert$, so it will be sufficient to show that $Y \leq \vert\Sigma\vert^{k/3}$ with probability $1 - o(1)$.
        If $\mat b_i$ was set to $f_i(\mat s_{N_{\mat G}(i, 1)}, \ldots, \mat s_{N_{\mat G}(i, k)})$, we know by the randomness of $f_i$ that $Y_i$ is distributed as
        \[Y_i \sim 1 + \textnormal{Binom}(\vert\Sigma\vert^k - 1, \vert\Gamma\vert^{-1}).\]
        In the case that $\mat b_i$ was set to a random value, we again know by the randomness of $f_i$ that
        \[Y_i \sim \textnormal{Binom}(\vert\Sigma\vert^k, \vert\Gamma\vert^{-1}).\]
        In both cases, because $\vert\Gamma\vert = \vert\Sigma\vert^{3k/4}$, a Chernoff bound shows that
        \[\Pr[Y_i \geq 2\mathbb{E}[Y_i]] \leq \Pr[Y_i \geq 2(1 + \vert\Sigma\vert^{k/4})] < \vert\Sigma\vert^{-\omega(1)}.\]
        Applying a Markov bound over these events and assuming $\lambda$ is sufficiently large (and therefore $m$, $\vert\Sigma\vert$ are sufficiently large), the probability that a there exists a single $Y_i$ which is at least $2(1 + \vert\Sigma\vert^{k/4})$ is bounded as $o(1)$. But this event is a necessary condition for $Y$ to exceed $\vert\Sigma\vert^{k/3}$, because otherwise
        \[Y = \sum_{i \in [m]} Y_i \leq m \cdot 2(1 + \vert\Sigma\vert^{k/4}) < \vert\Sigma\vert^{k/3},\]
        again assuming $\lambda$ is sufficiently large and using that $\vert\Sigma\vert = m^{\omega(1)}$.
    \end{proof}
    
    Before arguing that $\mathsf{Distinguish}$ allows us to decrypt successfully, first consider a modified PKE scheme where we remove Step \ref{step:removeedges} and Step \ref{step:failure} in algorithm $\mathsf{KeyGen}$. We now demonstrate that the distributions over vectors $\mat w$ in algorithm $\Dec$ are within $o(1)$ statistical distance of the distributions over vectors for which we assumed that $\mathsf{Distinguish}$ is an effective distinguisher.

    \begin{claim}
        Consider a PKE scheme identical to the one in Section \ref{sec:template}, but where we remove Step \ref{step:removeedges} and Step \ref{step:failure} in algorithm $\mathsf{KeyGen}$. Then $\zeta(i) =$ $\perp$ independently with probability $\alpha$ for all $i$.
    \end{claim}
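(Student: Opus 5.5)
The plan is to read the claim straight off the definition of $\zeta$ in Step \ref{step:zeta}, once Steps \ref{step:removeedges} and \ref{step:failure} are removed. In the modified scheme $\mathsf{KeyGen}$ never aborts and never deletes tuples from $\mathcal{X}$. So for every $i$ whose $\mat b_i$ was not set to a random value in Step \ref{step:makingb}, the tuple $(\mat s_{N_{\mat G}(i, 1)}, \ldots, \mat s_{N_{\mat G}(i, k)})$ satisfies $f_i(\cdot) = \mat b_i$. It therefore lies in $\mathcal{X}$, and hence labels some row of $\mat H$. This means $\zeta(i)$ is a genuine row index, not $\perp$. I would state this first, to confirm that $\zeta$ is well defined without the distinctness condition of Step \ref{step:nondistinct}. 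That condition was only needed because Step \ref{step:removeedges} could otherwise delete the relevant tuple.

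Conversely, if $\mat b_i$ was set to a random element of $\Gamma$, then Step \ref{step:zeta} sets $\zeta(i) = \perp$ by definition. This holds even if the planted tuple happens to lie in $\mathcal{X}$ anyway, since the rule is defined by the branch taken in Step \ref{step:makingb} and not by membership in $\mathcal{X}$. Hence the event $\{\zeta(i) = \perp\}$ coincides exactly with the event that coordinate $i$ took the ``random element'' branch in Step \ref{step:makingb}.

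Finally, Step \ref{step:makingb} makes that branching decision independently for each $i \in [m]$ with probability $\alpha$. These coins are independent of $\mat G$, $\mathcal{F}$ and $\mat s$. Since the events $\{\zeta(i)=\perp\}$ are identical to these coin outcomes, they are mutually independent, each with probability $\alpha$, and they are jointly independent of $(\mat G, \mathcal{F}, \mat s)$. This last fact is what the subsequent statistical-distance comparison will need.

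There is no real obstacle here. The only point requiring care is interpreting ``$\zeta(i) = \perp$'' via the branch indicator rather than via whether the planted tuple happens to lie in $\mathcal{X}$. With that reading, a random $\mat b_i$ that accidentally equals $f_i$ of the planted tuple does not affect the distribution of $\zeta$.
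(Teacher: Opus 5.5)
Your proposal is correct and follows the same argument as the paper. The event $\zeta(i)=\perp$ coincides exactly with coordinate $i$ taking the random branch in Step~\ref{step:makingb}, which happens independently with probability $\alpha$; and removing Steps~\ref{step:removeedges} and~\ref{step:failure} guarantees that otherwise the planted tuple lies in $\mathcal{X}$, so $\zeta(i)$ is a genuine row index.
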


    \begin{proof} \label{claim:zeta}
        Observe that $\mat b_i$ is set to a random value in $\Gamma$ independently with probability $\alpha$, for all $i$. Now in Step \ref{step:zeta}, we set $\zeta(i) = $ $\perp$ whenever this event occurs. Otherwise, $\zeta(i)$ will map to the appropriate row index of $\mat H$. Because we removed Step \ref{step:removeedges} and Step \ref{step:failure}, such a row index always exists.
    \end{proof}

    \begin{claim} \label{claim:enc0}
        Consider a PKE scheme identical to the one in Section \ref{sec:template}, but where we remove Step \ref{step:removeedges} and Step \ref{step:failure} in algorithm $\mathsf{KeyGen}$. Sample $(\pk, \sk) \leftarrow \mathsf{KeyGen}(1^\lambda)$ and $\ct \leftarrow \Enc(1^\lambda, \pk, 0)$, and then run $\Dec(1^\lambda, \sk, \ct)$. Over the random coins of $\mathsf{KeyGen}$ and $\Enc$, the distribution over vectors $\mat w$ in algorithm $\Dec$ is within $o(1)$ statistical distance of the following distribution over vectors $\mat c'$. Sample $\mat c \in \{\mat G\mat x : \mat x \in \mathbb{F}_2^n\}$ at random, and then sample $\mat c'$:
        \[\mathbf{c}_i'=\begin{cases} \text{``?'',} & \text{with probability $\alpha$.} \\ \text{Random element of $\mathbb{F}_2$,} & \text{with probability $(1 - \alpha)\beta$.} \\ \mat c_i, & \text{ otherwise.} \end{cases}\]
    \end{claim}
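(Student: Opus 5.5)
The plan is to exhibit the codeword explicitly and then check that erasures and corruptions land exactly as in the target distribution. The key step is a homomorphism between the $k$XOR sample on $\mat H$ and the code generated by $\mat G$. Define $\mat x \in \mathbb{F}_2^n$ by $\mat x_j \coloneqq \mat t_{\mat s_j}$, identifying the columns of $\mat H$ with symbols of $\Sigma$.

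For every $i$ with $\zeta(i) \neq \perp$, the row $\zeta(i)$ of $\mat H$ is the indicator of the tuple $(\mat s_{N_{\mat G}(i,1)}, \ldots, \mat s_{N_{\mat G}(i,k)})$. Its noiseless $k$XOR value is therefore $\sum_{\ell} \mat t_{\mat s_{N_{\mat G}(i,\ell)}} = \sum_\ell \mat x_{N_{\mat G}(i,\ell)} = (\mat G\mat x)_i$.

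This requires the $k$ symbols in the tuple to be distinct, so that the indicator row has exactly $k$ ones and no coefficients cancel mod 2. It also requires distinct columns $j \ne j'$ of $\mat G$ to map to distinct symbols, so that $\mat x$ is uniform. We handle both by conditioning on the event that $\mat s$ has no collisions, which by Claim \ref{claim:secretcollide} costs only $o(1)$ statistical distance. On that event $\mat x$ is a uniformly random vector, since $\mat t$ is uniform and $j \mapsto \mat s_j$ is injective. Hence $\mat c = \mat G\mat x$ is a uniformly random codeword.

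Next we account for erasures and corruptions. By Claim \ref{claim:zeta}, the set of $i$ with $\zeta(i)=\perp$ has each $i$ included independently with probability $\alpha$. This randomness comes only from Step \ref{step:makingb} of $\mathsf{KeyGen}$, so it is independent of $\mat t$ and $\mat s$. These coordinates are exactly the ``?'' entries of $\mat w$.

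For the remaining coordinates, $\mat w_i = \mat v_{\zeta(i)}$. The encryptor corrupts each row of $\mat H$ independently with probability $\beta$, using fresh coins independent of everything else. So each non-erased coordinate is independently replaced by a uniform bit with probability $\beta$ and otherwise equals $\mat c_i$. This gives probability $(1-\alpha)\beta$ of a random bit and $(1-\alpha)(1-\beta)$ of $\mat c_i$, matching the target distribution.

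The main subtlety is independence across coordinates, and this needs $\zeta$ to be injective on its non-$\perp$ values. Otherwise two coordinates of $\mat w$ could read the same entry of $\mat v$ and share a corruption event. Injectivity holds when there are no collisions in $\mat s$ and $\mat G$ has no two identical rows: distinct rows of $\mat G$ have distinct supports and hence give distinct tuples.

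If $\mat G$ may have repeated rows, I would instead argue via expansion. A $(1-o(1), n^{1-o(1)})$-expander cannot contain two identical rows, because a pair of identical rows covers only $k < 2\gamma k$ columns. Expansion holds with probability $1-o(1)$, so the failure case contributes only $o(1)$. The random rows appended to $\mat H$ and the row permutation only change which indices $\zeta$ points to, and do not affect the argument.

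Finally, collecting the $o(1)$ losses from the collision event and from the expansion event gives the claimed $o(1)$ statistical distance.
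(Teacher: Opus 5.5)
Your proposal is correct and follows essentially the paper's route. The paper's observation that $\mat G'$ is a row-zeroed, column-permuted copy of $\mat G$, with no merged columns once $\mat s$ has no collisions (Claim~\ref{claim:secretcollide}), is exactly your identification $\mat x_j = \mat t_{\mat s_j}$, and the erasure and corruption rates are then checked, as in your argument, via Claim~\ref{claim:zeta} and the independent encryption coins. Your remark that $\zeta$ must be injective on its non-$\perp$ values (repeated rows of $\mat G$ are ruled out by expansion, at $o(1)$ cost) makes explicit an assumption the paper uses silently when it asserts independent corruption of each coordinate, so it tightens the argument rather than departing from it.
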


    \begin{proof}
        We first argue that $\mat w$ will always be a noisy version of some codeword from $\mathcal{C} = \{\mat G\mat x : \mat x \in \mathbb{F}_2^n\}$. Consider the matrix $\mat G'$ defined as follows:
        \begin{enumerate}
            \item If $\zeta(i) \neq $ $\perp$, then $\mat G_i' = \mat H_{\zeta(i)}$.
            \item Otherwise, $\mat G_i' = \mat 0^{\vert\Sigma\vert}$.
        \end{enumerate}
        Then delete all columns of $\mat G'$ which are \emph{not} indexed by some symbol $\mat s_i$ of the secret $\mat s$ as produced in $\mathsf{KeyGen}$, and let the width of $\mat G'$ be $n''$. Notice that by definition of $\zeta$, we have that $\mat G'$ is simply a copy of $\mat G$ where:
        \begin{enumerate}
            \item Some rows are replaced with $\mat 0$.
            \item The columns are permuted.
            \item Some of the columns might be ``merged'' by taking their coordinate-wise locical-OR. This happens whenever there exist two indices $i, j \in [n]$ such that $i \neq j$ but $\mat s_i = \mat s_j$.
        \end{enumerate}
        
        By Claim \ref{claim:secretcollide}, we know that with probability $1 - o(1)$, none of the columns are merged. In this case the linear code $\mathcal{C}' = \{\mat G' \mat x : \mat x \in \mathbb{F}_2^{n''}\}$ is the same as $\mathcal{C} = \{\mat G \mat x : \mat x \in \mathbb{F}_2^n\}$, but where a fixed set of coordinates is set to zero. By definition of $\Enc$ and $\Dec$, $\mat w$ will be a corrupted version of a random codeword from $\mathcal{C}'$, where the coordinates that are always set to $0$ will be replaced with $\text{``?''}$.\\\\
        Now we argue that the erasure and corruption probabilities for $\mat w$ exactly match those in the statement of the claim (regardless of whether some columns of $\mat G$ are merged):
        \begin{enumerate}
            \item By Claim \ref{claim:zeta}, we know that $\zeta(i) =$ $\perp$ independently with probability $\alpha$ for all $i$. By the construction of $\mat w$, we know that $\mat w_i = \text{``?''}$ if and only if $\zeta(i) = $ $\perp$, so the probability that $\mat w_i = \text{``?''}$ is exactly $\alpha$, independently for each coordinate.
            \item By definition of $\Enc$, we know that every non-erased coordinate of $\mat w$ will be set to a random value in $\mathbb{F}_2$ independently with probability $\beta$. Combining this with the erasure probability, we have that the probability a coordinate is \emph{not} erased but \emph{is} corrupted will be exactly $(1 - \alpha)\beta$, independently for each coordinate.
            \item Otherwise, the coordinate will be left un-erased and un-corrupted.
        \end{enumerate}
    \end{proof}

    \begin{claim} \label{claim:enc1}
        Consider a PKE scheme identical to the one in Section \ref{sec:template}, but where we remove Step \ref{step:removeedges} and Step \ref{step:failure} in algorithm $\mathsf{KeyGen}$. Sample $(\pk, \sk) \leftarrow \mathsf{KeyGen}(1^\lambda)$ and $\ct \leftarrow \Enc(1^\lambda, \pk, 1)$, and then run $\Dec(1^\lambda, \sk, \ct)$. Over the random coins of $\mathsf{KeyGen}$ and $\Enc$, the distribution over vectors $\mat w$ in algorithm $\Dec$ is identical to the distribution over vectors $\mat r \in \left\{\mathbb{F}_2 \cup \textnormal{``?''}\right\}^m$ sampled as follows:
        \[\mathbf{r}_i=\begin{cases} \text{``?'',} & \text{with probability $\alpha$.} \\ \text{Random element of $\mathbb{F}_2$,} & \text{ otherwise.} \end{cases}\]
    \end{claim}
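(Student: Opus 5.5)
The plan is to exploit the fact that, when $b = 1$, the ciphertext $\mat v$ is sampled uniformly from $\mathbb{F}_2^{m'}$ \emph{independently} of everything produced by $\mathsf{KeyGen}$. In particular, $\mat v$ is independent of the pair $(\pk, \sk) = (\mat H, \zeta)$. Once we condition on $\zeta$, the vector $\mat w$ is a deterministic function of $\mat v$: its erasure pattern is determined by $\zeta$ alone, and its non-erased entries are coordinates of $\mat v$. So I will first fix $\zeta$, describe the conditional law of $\mat w$, and then average over $\zeta$.

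\emph{Step 1 (erasure pattern).} By Step \ref{step:makingw} of $\Dec$, $\mat w_i = \textnormal{``?''}$ if and only if $\zeta(i) = \perp$. By the preceding claim about $\zeta$ (the version of $\mathsf{KeyGen}$ without Steps \ref{step:removeedges} and \ref{step:failure}), the events $\{\zeta(i) = \perp\}$ are independent across $i$ and each has probability exactly $\alpha$. Hence the set $E = \{i : \mat w_i = \textnormal{``?''}\}$ has exactly the law of the erasure set of $\mat r$.

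\emph{Step 2 (non-erased entries, conditional on $\zeta$).} For $i \notin E$ we have $\mat w_i = \mat v_{\zeta(i)}$. Since $\mat v$ is uniform on $\mathbb{F}_2^{m'}$ and independent of $\zeta$, its coordinates are i.i.d.\ uniform bits given $\zeta$. Therefore, provided $\zeta$ restricted to $[m] \setminus E$ is injective, the vector $(\mat w_i)_{i \notin E}$ is uniform on $\mathbb{F}_2^{[m] \setminus E}$, conditionally on $\zeta$. Combining with Step 1, for every fixed $\zeta$ the conditional law of $\mat w$ is ``erase on $E$, uniform elsewhere''. Because the law of $E$ matches that of $\mat r$, averaging over $\zeta$ yields exactly the distribution of $\mat r$.

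\emph{Main obstacle: injectivity of $\zeta$ on non-erased indices.} I expect the only real issue to be showing that $\zeta(i) \ne \zeta(j)$ whenever $i \ne j$ and neither is $\perp$. The value $\zeta(i)$ is the row of $\mat H$ labeled by the tuple $(\mat s_{N_{\mat G}(i,1)}, \ldots, \mat s_{N_{\mat G}(i,k)})$, and rows of $\mat H$ are labeled by distinct tuples. Thus $\zeta(i) = \zeta(j)$ forces these two tuples to coincide.

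I would first show that the rows of $\mat G$ have distinct supports. For an $(m, n, k)$-matrix that is a $(\gamma, t)$-expander with $t \ge 2$ and $\gamma > 1/2$, two rows with equal supports would give $\hw(\mat G_i \vee \mat G_j) = k < 2\gamma k$, a contradiction. I would then check that distinct supports, read through $\mat s$ in the fixed increasing column order, give distinct tuples; this holds in particular when the symbols of $\mat s$ are pairwise distinct.

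If either of these conditions can fail with positive probability in the modified scheme, then the identity only holds after conditioning on that event. In that case I would make this conditioning explicit, since every other step of the argument is an exact distributional equality that uses only the independence of $\mat v$ from $(\mat H, \zeta)$.
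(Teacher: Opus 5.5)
Your argument is the paper's argument: for $b=1$ the ciphertext $\mat v$ is uniform and independent of $\zeta$, and the erasure pattern of $\mat w$ comes from the preceding claim that $\zeta(i)=\perp$ independently with probability $\alpha$. The paper's proof is essentially one line ("the ciphertext is uniformly random") and never addresses your injectivity caveat, which is a genuine point.

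To settle your conditional: yes, $\zeta(i)=\zeta(j)$ with $i\neq j$ does occur with small positive probability in the modified $\mathsf{KeyGen}$, and then $\mat w_i=\mat w_j$ is forced. There are two ways this happens:
\begin{itemize}
\item $\mat s$ has repeated symbols, since the abort in Step~\ref{step:failure} has been removed.
\item $\mat G$ has two rows with equal support, which is excluded only when $\mat G$ is an expander, an event of probability $1-o(1)$.
\end{itemize}
So the literal ``identical'' holds only after conditioning on your good event. That event depends only on $\mat s$ and $\mat G$, so it is independent of the erasure coins and leaves the erasure law intact. Unconditionally you get $o(1)$ statistical distance, which is all that Lemma~\ref{lem:weakcorrect} actually uses.
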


    \begin{proof}
        Almost identical to the proof of Claim \ref{claim:enc0}. The only difference is to observe that, when $b = 1$, $\Enc(1^\lambda, \pk, b)$ will set the ciphertext to a uniformly random vector in $\mathbb{F}_2^{m'}$. This means that the distribution over vectors $\mat w$ is identical to the distribution over vectors $\mat r$, as opposed to just being within $o(1)$ statistical distance.
    \end{proof}

    Now consider the actual PKE scheme from Section \ref{sec:template}, i.e. we include Step \ref{step:removeedges} and Step \ref{step:failure}. By Claim \ref{claim:secretcollide} and Claim \ref{claim:sizeX}, the probability that $\mathsf{KeyGen}$ aborts is $o(1)$. If $\mathsf{KeyGen}$ does not abort, we still have that $\mat w_i = \text{``?''}$ if and only if $\zeta(i) = $ $\perp$, and furthermore $\zeta(i) = $ $\perp$ if and only if $\mat b_{\zeta(i)}$ was set to a random value in Step \ref{step:makingb} of algorithm $\mathsf{KeyGen}$. Putting all of this together with Claim \ref{claim:enc0} and Claim \ref{claim:enc1}, we have:
    \begin{enumerate}
        \item In the case that we encrypted a bit $b = 0$, the distribution on vectors $\mat w$ is within $o(1)$ statistical distance of the distribution on vectors $\mat c'$ for which
        \[\Pr[\mathsf{Distinguish}(1^n, \mat G, \mat c') = 0] \geq 1 - o(1)\]
        \item In the case that we encrypted a bit $b = 1$, the distribution on vectors $\mat w$ is within $o(1)$ statistical distance of the distribution on vectors $\mat r$ for which
        \[\Pr[\mathsf{Distinguish}(1^n, \mat G, \mat r) = 1] \geq 1 - o(1)\]
    \end{enumerate}
    So for the true PKE scheme,
    \[\Pr[(\pk, \sk) \leftarrow \mathsf{KeyGen}(1^\lambda) ; \Dec(1^\lambda, \sk, \Enc(1^\lambda, \pk, b)) = b] \geq 1 - o(1).\]
    Taking $\lambda$ sufficiently large completes the proof of correctness.
\end{proof}

\subsection{Security} \label{sec:secure}
We use a standard hybrid argument to demonstrate security, by invoking Conjecture \ref{conj:LARP-CSP} (LARP-CSP) and Conjecture \ref{conj:kXOR} ($k$XOR).

\begin{lemma}[Security] \label{lem:weaksecure}
    Let $\eta = 0.9$,\footnote{We could actually take $\eta = \frac{3}{4} + o(1)$; the choice $\eta = 0.9$ just makes the arguments a bit simpler.} and let $\lambda$ be sufficiently large. Assume that Conjecture \ref{conj:LARP-CSP} (LARP-CSP) holds with corruption rate $\alpha$, and Conjecture \ref{conj:kXOR} ($k$XOR) holds with corruption rate $\beta$. Then for all $\lambda^{O(1)}$-size non-uniform algorithms $\mathcal{A}$,
    \begin{align*}
        & \Big\vert \Pr[(\pk, \sk) \leftarrow \mathsf{KeyGen}(1^\lambda) ; \mathcal{A}(\pk, \Enc(1^\lambda, \pk, 1)) = 1] \\
        & \qquad \qquad - \Pr[(\pk, \sk) \leftarrow \mathsf{KeyGen}(1^\lambda) ; \mathcal{A}(\pk, \Enc(1^\lambda, \pk, 0)) = 1] \Big\vert < \eta
    \end{align*}
\end{lemma}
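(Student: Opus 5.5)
We will prove Lemma \ref{lem:weaksecure} with a three-step hybrid argument, using the advantage bound $1/4$ from each conjecture plus $o(1)$ slack terms. The total is at most $1/4 + 1/4 + o(1) < 0.9$.

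\emph{Hybrid 0} is the real experiment with $b=0$: $\pk = \mat H$ from $\mathsf{KeyGen}$, and $\ct$ is a noisy $k$XOR sample over $\mat H$.

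\emph{Hybrid 1} is identical, except that in Step \ref{step:makingb} the vector $\mat b$ is sampled uniformly from $\Gamma^m$. To bound $|\Pr[H_0]-\Pr[H_1]|$, we build from $\mathcal{A}$ a distinguisher $\mathcal{B}$ for Conjecture \ref{conj:LARP-CSP}.
\begin{enumerate}
\item $\mathcal{B}$ receives $(\mat G, \mathcal{F}, \mat b)$ with $\mat G$ fixed non-uniformly. We condition on the event (probability $1-o(1)$) that $\mathsf{MatrixGen}$ outputs a $(1-o(1), n^{1-o(1)})$-expander, so that the conjecture applies.
\item $\mathcal{B}$ runs Steps 4--8 of $\mathsf{KeyGen}$ itself. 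None of these steps needs $\mat s$, except failure condition (a).
\item $\mathcal{B}$ encrypts $0$ under $\mat H$ and outputs $\mathcal{A}$'s answer. Its running time is $\mathrm{poly}(|\Sigma|^k)$, because building $\mathcal{X}$ means evaluating each $f_i$ on all of $\Sigma^k$, which is $\lambda^{O(1)}$.
\end{enumerate}
The one discrepancy is abort condition (a), which depends on $\mat s$. We remove it in both hybrids at cost $o(1)$, by Claim \ref{claim:secretcollide}. The collision event only matters for defining $\zeta$, which $\mathcal{A}$ never sees. Condition (b) can be checked by $\mathcal{B}$ directly. Hence $|\Pr[H_0]-\Pr[H_1]| \le 1/4 + o(1)$.

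\emph{Hybrid 2} is Hybrid 1 with $\ct$ replaced by a uniform vector, i.e.\ an encryption of $1$ with $\mat b$ random. The key observation is that in Hybrid 1, $\mat H$ is statistically close to a uniformly random $(m', n', k)$-matrix with $m' = |\Sigma|^{k/3}$ and $n'=|\Sigma|$. To see this, note that when $\mat b$ is uniform and independent of the random functions, each tuple in $\Sigma^k$ lies in $\mathcal{X}$ independently with probability $1-(1-|\Gamma|^{-1})^m$. After Step \ref{step:removeedges}, the rows of $\mat H$ are then the $k$-sets indexed by these ordered tuples, padded with uniform random $k$-sets and randomly permuted.

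This is where we expect the main obstacle, on two fronts.
\begin{itemize}
\item We must argue that a random subset of distinct-symbol tuples, plus uniform padding, yields a row multiset close to $m'$ i.i.d.\ uniform weight-$k$ rows. Alternatively, we can use the slightly more general Conjecture \ref{conj:kXOR} with the row distribution matching this one.
\item The multiplicity issue: each $k$-set arises from $k!$ orderings, so repeated rows could appear, but since $|\mathcal{X}| \ll |\Sigma|^{k/3}$ most rows are padding anyway.
\end{itemize}
Our first attempt is a direct coupling: condition on $|\mathcal{X}|$ and note that, given its size, $\mathcal{X}$ is a uniformly random set of tuples. We would then accept an $o(1)$ loss from birthday-type repetitions, which is negligible because $|\mathcal{X}|^2 \ll |\Sigma|^{k}/k!$.

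Once $\mat H$ is (close to) random, the ciphertext in Hybrid 1 is exactly the planted $k$XOR distribution with $m' = n'^{k/3}$ and $k = (\log n')^{\Theta(1)}$, since $k$ is polylogarithmic in $|\Sigma|$. Conjecture \ref{conj:kXOR} therefore gives $|\Pr[H_1]-\Pr[H_2]| \le 1/4+o(1)$.

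Finally, an encryption of $1$ is uniform regardless of $\mat b$. Hybrid 2 is thus identical to the real $b=1$ experiment with $\mat b$ random. Rather than applying the LARP reduction a second time, which would cost another $1/4$, we compare it to the real $b=1$ experiment by observing that $\ct$ is independent of $\pk$ in both. This difference then reduces to a distinguisher between $\mat b$ planted and $\mat b$ random that is \emph{the same event} measured with a uniform ciphertext, bounded by $1/4$ via Conjecture \ref{conj:LARP-CSP}. Summing all terms gives $3/4+o(1) < 0.9$ for $\lambda$ large.
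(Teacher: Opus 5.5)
Your proposal is correct and follows essentially the paper's route: the same four hybrids ($\mathsf{H}_0$, $\mathsf{H}_{0,\$}$, $\mathsf{H}_{1,\$}$, $\mathsf{H}_1$), linked by a LARP-CSP reduction (after fixing an expanding $\mat G$ by averaging), a $k$XOR reduction, and a second LARP-CSP reduction, and your per-step bound of $1/4+o(1)$ gives the $3/4+o(1)$ noted in the paper's footnote. Two presentational fixes: your opening total ``$1/4+1/4+o(1)$'' should read $3/4+o(1)$, and the ``rather than applying the LARP reduction a second time'' phrasing should go, since your last step is exactly that second reduction. You are more careful than the paper on two points it passes over quickly---discarding the $\mat s$-dependent abort check (a) at $o(1)$ cost so that the LARP-CSP reduction can be simulated without $\mat s$, and showing via independent tuple inclusion plus a birthday bound that $\mat H$ under a uniform $\mat b$ is only statistically (not exactly) close to a uniform $(m',n',k)$-matrix.
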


\begin{proof}
Define the following hybrids.
\begin{enumerate}
    \item $\mathsf{H}_0(1^\lambda)$:
    \begin{enumerate}
        \item Invoke $\mathsf{KeyGen}(1^\lambda)$ to obtain a public key $\pk$.
        \item Invoke $\Enc(1^\lambda, \pk, 0)$ to obtain a ciphertext $\ct$.
        \item Output $(\pk, \ct)$.
    \end{enumerate}
    \item $\mathsf{H}_{0, \$}(1^\lambda)$:
    \begin{enumerate}
        \item Define $\mathsf{KeyGen}'$ to be an algorithm identical to $\mathsf{KeyGen}$, but where Step \ref{step:makingb} in $\mathsf{KeyGen}$ sets $\mat b$ to a random vector in $\Gamma^m$. \underline{Invoke $\mathsf{KeyGen}'(1^\lambda)$} to obtain a public key $\pk$.
        \item Invoke $\Enc(1^\lambda, \pk, 0)$ to obtain a ciphertext $\ct$.
        \item Output $(\pk, \ct)$.
    \end{enumerate}
    \item $\mathsf{H}_{1, \$}(1^\lambda)$:
    \begin{enumerate}
        \item As before, define $\mathsf{KeyGen}'$ to be an algorithm identical to $\mathsf{KeyGen}$, but where Step \ref{step:makingb} in $\mathsf{KeyGen}$ sets $\mat b$ to a random vector in $\Gamma^m$. Invoke $\mathsf{KeyGen}'(1^\lambda)$ to obtain a public key $\pk$.
        \item \underline{Invoke $\Enc(1^\lambda, \pk, 1)$} to obtain a ciphertext $\ct$.
        \item Output $(\pk, \ct)$.
    \end{enumerate}
    \item $\mathsf{H}_1(1^\lambda)$:
    \begin{enumerate}
        \item \underline{Invoke $\mathsf{KeyGen}(1^\lambda)$} to obtain a public key $\pk$.
        \item Invoke $\Enc(1^\lambda, \pk, 1)$ to obtain a ciphertext $\ct$.
        \item Output $(\pk, \ct)$.
    \end{enumerate}
\end{enumerate}
\vspace{10px}
\noindent
Now we argue that an adversary has advantage less than $0.3$ to distinguish between each pair of hybrids.

\begin{claim} \label{claim:H0H0dollar}
    Assume that Conjecture \ref{conj:LARP-CSP} (LARP-CSP) holds with corruption rate $\alpha$. Then for all sufficiently large $\lambda$, for all $\lambda^{O(1)}$-size non-uniform algorithms $\mathcal{A}$,
    \[\left\vert \Pr[\mathcal{A}(\mathsf{H}_0(1^\lambda)) = 1] - \Pr[\mathcal{A}(\mathsf{H}_{0, \$}(1^\lambda)) = 1] \right\vert < 0.3\]
\end{claim}

\begin{proof}
    Suppose for contradiction that there exists a $\lambda^{O(1)}$-size non-uniform algorithm $\mathcal{A}$ such that
    \[\left\vert \Pr[\mathcal{A}(\mathsf{H}_0(1^\lambda)) = 1] - \Pr[\mathcal{A}(\mathsf{H}_{0, \$}(1^\lambda)) = 1] \right\vert \geq 0.3.\]
    By assumption on $\mathsf{MatrixGen}$, $\mat G$ will be a $(1 - o(1), n^{1 - o(1)})$-expander with probability $1 - o(1)$. Thus
    \begin{align*}
        & \Big\vert \Pr[\mathcal{A}(\mathsf{H}_0(1^\lambda)) = 1 \big\vert \mat G \text{ is a $(1 - o(1), n^{1 - o(1)})$-expander}] \\
        & \qquad \qquad - \Pr[\mathcal{A}(\mathsf{H}_{0, \$}(1^\lambda)) = 1 \big\vert \mat G \text{ is a $(1 - o(1), n^{1 - o(1)})$-expander}] \Big\vert \geq 0.3 - o(1),
    \end{align*}
    and by the averaging principle there must exist a specific $(1 - o(1), n^{1 - o(1)})$-expander $\mat G^*$ such that
    \begin{align*}
        & \Big\vert \Pr[\mathcal{A}(\mathsf{H}_0(1^\lambda)) = 1 \big\vert \text{ $\mathsf{H}_0(1^\lambda)$ samples } \mat G = \mat G^*] \\
        & \qquad \qquad - \Pr[\mathcal{A}(\mathsf{H}_{0, \$}(1^\lambda)) = 1 \big\vert \text{ $\mathsf{H}_{0, \$}(1^\lambda)$ samples } \mat G = \mat G^*] \Big\vert \geq 0.3 - o(1).
    \end{align*}
    By definition of $\mathsf{H}_0$ and $\mathsf{H}_{0, \$}$, we can reduce the decision problem in Conjecture \ref{conj:LARP-CSP} with matrix $\mat G^*$, corruption parameter $\alpha$, and parameters $n, m, k, \vert\Sigma\vert, \vert\Gamma\vert$ to the problem of differentiating between the outputs of $\mathsf{H}_0(1^\lambda)$ and $\mathsf{H}_{0, \$}(1^\lambda)$, where both $\mathsf{H}_0(1^\lambda)$ and $\mathsf{H}_{0, \$}(1^\lambda)$ are conditioned on sampling $\mat G = \mat G^*$. Thus $\mathcal{A}$ implies a $\lambda^{O(1)} = \vert\Sigma\vert^{O(k)}$ size algorithm for the problem in Conjecture \ref{conj:LARP-CSP} with distinguishing advantage at least $0.3 - o(1)$. Assuming $\lambda$ is sufficiently large, this exceeds $1/4$, which is a contradiction.
\end{proof}

\begin{claim} \label{claim:H0dollarH1dollar}
    Assume that Conjecture \ref{conj:kXOR} ($k$XOR) holds with corruption rate $\beta$. Then for all $\lambda^{O(1)}$-size non-uniform algorithms $\mathcal{A}$,
    \[\left\vert \Pr[\mathcal{A}(\mathsf{H}_{0, \$}(1^\lambda)) = 1] - \Pr[\mathcal{A}(\mathsf{H}_{1, \$}(1^\lambda)) = 1] \right\vert < 0.3\]
\end{claim}

\begin{proof}
    Suppose for contradiction that there exists a $\lambda^{O(1)}$ size non-uniform algorithm $\mathcal{A}$ such that
    \[\left\vert \Pr[\mathcal{A}(\mathsf{H}_{0, \$}(1^\lambda)) = 1] - \Pr[\mathcal{A}(\mathsf{H}_{1, \$}(1^\lambda)) = 1] \right\vert \geq 0.3.\]
    We argue that this directly gives an algorithm contradicting Conjecture \ref{conj:kXOR}.
    
    The only difference between the two hybrids is the sampling procedure for $\mat v$ in algorithm $\Enc$. In fact, conditioning both $\mathsf{H}_{0, \$}$ and $\mathsf{H}_{1, \$}$ on not aborting, the distribution $(\pk, \ct) \sim \mathsf{H}_{0, \$}(1^\lambda)$ is exactly distribution (\ref{dist:kXORplanted}) in Conjecture \ref{conj:kXOR}, and the distribution $(\pk, \ct) \sim \mathsf{H}_{1, \$}(1^\lambda)$ is exactly distribution (\ref{dist:kXORnull}), where both distributions are with respect to an $m' \times n'$ matrix $\mat H$ with $m' = (n')^{k/3}$. So given $\mathcal{A}$, we construct a $\lambda^{O(1)} = \vert\Sigma\vert^{O(k)} = (m')^{O(1)}$ size distinguisher $\mathcal{B}(\mat H, \mat b)$ for Conjecture \ref{conj:kXOR} by picking the best choice of randomness for the following procedure:
    \begin{enumerate}
        \item Run $\mathsf{H}_{0, \$}(1^\lambda)$. If the output is $(\perp, \perp)$ then return $\mathcal{A}(\perp, \perp)$.
        \item Otherwise, return $\mathcal{A}(\mat H, \mat b)$.
    \end{enumerate}
    $\mathcal{B}$ has advantage at least $0.3 > 1/4$, which is a contradiction.
\end{proof}

\begin{claim} \label{claim:H1dollarH1}
    Assume that Conjecture \ref{conj:LARP-CSP} (LARP-CSP) holds with corruption rate $\alpha$. Then for all sufficiently large $\lambda$, for all $\lambda^{O(1)}$-size non-uniform algorithms $\mathcal{A}$,
    \[\left\vert \Pr[\mathcal{A}(\mathsf{H}_{1, \$}(1^\lambda)) = 1] - \Pr[\mathcal{A}(\mathsf{H}_1(1^\lambda)) = 1] \right\vert < 0.3\]
\end{claim}

\begin{proof}
    Identical to the proof of Claim \ref{claim:H0H0dollar}.
\end{proof}

Composing Claims \ref{claim:H0H0dollar}, \ref{claim:H0dollarH1dollar}, and \ref{claim:H1dollarH1} completes the proof of security.
\end{proof}

\subsection{Conversion to a Semantically Secure PKE} \label{sec:convert}
Putting together the ingredients from Sections \ref{sec:template}, \ref{sec:correct}, and \ref{sec:secure}, we get a semantically secure PKE scheme.

\begin{theorem} \label{thm:templatePKE}
    Suppose we have an $(\alpha, \beta)$-decodable expanding code. Then assuming Conjecture \ref{conj:LARP-CSP} (LARP-CSP) holds with corruption rate $\alpha$, and Conjecture \ref{conj:kXOR} ($k$XOR) holds with corruption rate $\beta$, there is a semantically secure public key encryption scheme.
\end{theorem}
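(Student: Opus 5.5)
The plan is to combine the three ingredients already established in this section. Given an $(\alpha, \beta)$-decodable expanding code $(\mathsf{MatrixGen}, \mathsf{Distinguish}, \alpha, \beta)$, we instantiate the scheme $(\mathsf{KeyGen}, \Enc, \Dec)$ of Section \ref{sec:template} with this code. The resulting triple is then upgraded to a semantically secure scheme via the amplification result Theorem \ref{thm:amplify}. The first step is to confirm that the three algorithms run in probabilistic polynomial time in $\lambda$. By construction $\lambda$ is at least $\vert\Sigma\vert^k$ and at least the running times of $\mathsf{MatrixGen}$ and $\mathsf{Distinguish}$. Enumerating $\mathcal{X}$ costs $m\vert\Sigma\vert^k$ evaluations. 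Building and padding $\mat H$, encryption, and decryption each take time polynomial in $\vert\Sigma\vert^{k/3}\cdot\vert\Sigma\vert$ plus one call to $\mathsf{Distinguish}$. Since $m \le \vert\Sigma\vert$, everything is $\vert\Sigma\vert^{O(k)} = \lambda^{O(1)}$.

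Next we invoke Lemma \ref{lem:weakcorrect}, which gives $\kappa$-correctness with $\kappa = 0.99$ for all sufficiently large $\lambda$. Its proof uses only the distinguishing guarantee of Definition \ref{def:LDGMCode} together with the abort-probability bounds. We then invoke Lemma \ref{lem:weaksecure}, which, assuming Conjecture \ref{conj:LARP-CSP} with rate $\alpha$ and Conjecture \ref{conj:kXOR} with rate $\beta$, bounds every $\lambda^{O(1)}$-size adversary's advantage by $\eta = 0.9$. By the remark following Definition \ref{def:weakpke}, this advantage bound is exactly $\eta$-security. Hence the scheme is $(0.99, 0.9)$-secure in the sense of Definition \ref{def:weakpke}, at least for all sufficiently large $\lambda$. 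Small $\lambda$ can be handled by padding the security parameter, i.e. running the scheme at parameter $\max(\lambda, \lambda_0)$.

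Finally, we check the hypothesis of Theorem \ref{thm:amplify}: $\kappa^2 = 0.9801 > 0.9 = \eta$, and both are constants. Theorem \ref{thm:amplify} then yields a black-box reduction to a semantically secure scheme in the sense of Definition \ref{def:standardpke}.

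The only delicate point is the compatibility of parameter regimes, which I expect to be the main (if mild) obstacle. The amplification needs security against all polynomial-size adversaries in $\lambda$. The conjectures, however, are stated against $\mathrm{poly}(\vert\Sigma\vert^k)$-size and $\mathrm{poly}(m')$-size adversaries. We must verify that $\lambda^{O(1)}$ falls inside both classes, which holds because $\lambda = \vert\Sigma\vert^{\Theta(k)}$ and $m' = \vert\Sigma\vert^{k/3}$. We must also verify that the matrix $\mat H$ fed to Conjecture \ref{conj:kXOR} satisfies $k = (\log n')^{\Theta(1)}$ and $m' \le (n')^{k/3}$; both follow from $\vert\Sigma\vert = (nm)^{\log^{\Theta(1)}(nm)}$ and $k = \log^{\Theta(1)} m$. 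Both checks were already used implicitly in Lemma \ref{lem:weaksecure}, so the proof reduces to citing it.
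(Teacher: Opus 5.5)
Your proposal is correct and matches the paper's proof: you combine Lemma~\ref{lem:weakcorrect} ($\kappa = 0.99$) and Lemma~\ref{lem:weaksecure} ($\eta = 0.9$) to get a $(0.99, 0.9)$-secure scheme, then apply Theorem~\ref{thm:amplify} using $0.99^2 = 0.9801 > 0.9$. Your extra running-time and parameter-compatibility checks are bookkeeping that the paper leaves implicit or handles inside Lemma~\ref{lem:weaksecure}; they do not change the route.
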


\begin{proof}
    By Lemmas \ref{lem:weakcorrect} and \ref{lem:weaksecure}, the PKE scheme in Section \ref{sec:template} instantiated with an $(\alpha, \beta)$-decodable expanding code is a $(0.99, 0.9)$-secure PKE scheme. Because $0.99^2 = 0.9801 > 0.9$, an application of Theorem \ref{thm:amplify} gives us a semantically secure PKE scheme.
\end{proof}

\section{Instantiating a $(1 - o(1), 1 - o(1))$-Decodable Expanding Code} \label{sec:concrete}

In this section we construct an $(\alpha, \beta)$-decodable expanding code, where $\alpha = \beta = 1 - o(1)$. As discussed in Section \ref{sec:convert}, this gives a semantically secure PKE scheme if we assume that Conjecture \ref{conj:LARP-CSP} and Conjecture \ref{conj:kXOR} both hold.

\subsection{Sampling the Generator Matrix} \label{sec:makingexpander}
Here we give a simple probabilistic algorithm that runs in time $m^{O(1)}$ and outputs a generator matrix $\mat G \in \mathbb{F}_2^{m \times n}$ for an expanding code $\mathcal{C} = \{\mat G \mat x : \mat x \in \mathbb{F}_2^n\}$, such that $\mathcal{C}$ is a subcode of RM$(\log m, \log^2 n)$. This will be the algorithm $\mathsf{MatrixGen}$ for our $(\alpha, \beta)$-decodable expanding code.

\paragraph{Theorem \ref{thm:makingtheexpander} \textnormal{(Restated)}.}
    \emph{There is an explicit randomized algorithm $\mathsf{K}_{c_k, c_m}(1^n)$ parameterized by two constants $c_k, c_m$ satisfying $c_m \geq 2$ and $c_k \geq c_m + 1$,  that runs in time $2^{O((\log n)^{c_m})}$ and outputs a matrix $\mat G$, such that:
    \begin{enumerate}
        \item $\mat G$ is an $(m, n, k)$-matrix, where $m = 2^{(\lceil \log n\rceil)^{c_m}}$ and $k = (\lceil \log n\rceil)^{c_k}$.
        \item The code $\{\mat G\mat x : \mat x \in \mathbb{F}_2^n\}$ is a subcode of the Reed-Muller code RM$((\lceil \log n\rceil)^{c_m}, (\lceil \log n\rceil)^2)$.
        \item With $1 - o(1)$ probability over the coins of $\mathsf{K}_{c_k, c_m}$, $\mat G$ is $(1 - o(1), n^{1 - o(1)})$-expanding.
    \end{enumerate}}
\vspace{8px}

\begin{proof}
We propose the following algorithm. (Recall that all logarithms are taken base 2.)

\begin{mdframed}
$\mat G \leftarrow \mathsf{K}_{c_k, c_m}(1^n)$:
\begin{enumerate}
    \item Set $k \coloneqq (\lceil\log n\rceil)^{c_k}$, $m \coloneqq 2^{(\lceil\log n\rceil)^{c_m}}$, $w \coloneqq \lfloor \log(n/k)\rfloor$
    \item Sample $kw$ random degree-$\lceil\log n\rceil$ polynomials
    \[g^{(1, 1)}, \ldots, g^{(1, w)}, g^{(2, 1)}, \ldots, g^{(k, w)} : \mathbb{F}_2^{\log m} \rightarrow \mathbb{F}_2.\]
    \item For all $i \in [k]$, we construct a matrix $\mat M^{(i)} \in \mathbb{F}_2^{m \times 2^w}$ as follows.
    \begin{enumerate}
        \item Each row is indexed by a point $\mat p \in \mathbb{F}_2^{\log m}$
        \item Each column is indexed by a point $\mat q \in \mathbb{F}_2^{w}$.
        \item Set $M^{(i)}_{\mat p, \mat q} = 1$ if and only if $g^{(i, j)}(\mat p) = \mat q_j$ for all $j \in [w]$, or equivalently set $M^{(i)}_{\mat p, \mat q} = 1$ if and only if the concatenation of the evaluations $g^{(i, 1)}(\mat p) \| \ldots \| g^{(i, w)}(\mat p)$ equals $\mat q$. \label{step:polynomial}
    \end{enumerate}
    \item Set $\mat G = \left[\mat M^{(1)} \| \ldots \| \mat M^{(k)}\right]$, and then append zero columns until the width is exactly $n$.
    \item Output $\mat G$.
\end{enumerate}
\end{mdframed}
\vspace{8px}

By construction, the algorithm runs in $2^{O((\log n)^{c_m})}$ time, and $\mat G$ is a matrix over $\mathbb{F}_2$ of width $n$ and height $m$. To show that every row has exactly $k$ nonzero entries, it suffices to show that each matrix $\mat M^{(i)}$ has exactly one nonzero entry per row. This follows because, for every point $\mat p \in \mathbb{F}_2^{\log m}$, there is exactly one point $\mat q \in \mathbb{F}_2^w$ such that $g^{(i, j)}(\mat p) = \mat q_j$ for all $j \in [w]$.

We now argue that $\mat G$ generates a subcode of RM$((\lceil\log n\rceil)^{c_m}, (\lceil\log n\rceil)^2)$, which amounts to showing that every column is the truth table for a degree $\leq (\lceil\log n\rceil)^2$ polynomial. Every column coming not coming from a submatrix $M^{(i)}$ is zero and thus of degree zero, so we can ignore these columns. Now fix any choice of $i$ and examine the submatrix $M^{(i)}$. Observe that Step \ref{step:polynomial} is equivalent to defining
\[\mat M^{(i)}_{\mat p, \mat q} = \prod_{j \in [w]} \left(g^{(i, j)}(\mat p) + \mat q_j + 1\right).\]
If we fix any column, this amounts to fixing a choice of point $\mat q$, and allowing $\mat p$ to range over all points in $\mathbb{F}_2^m$. This gives the truth table for a polynomial of degree at most
\[\sum_{j \in [w]}\text{degree}(g^{(i, j)}).\]
Every $g^{(i, j)}$ is of degree at most $\lceil\log n\rceil$ by construction, and $w \coloneqq \lfloor \log(n/k)\rfloor < \lceil \log n\rceil$. So the degree is upper bounded by $w\lceil\log n\rceil < \left(\lceil\log n\rceil\right)^2$.

All that remains is to show that, with probability $1 - o(1)$, $\mat G$ is a $(1 - o(1), n^{1 - o(1)})$-expander. Intuitively, the proof will have three parts:
\begin{enumerate}
    \item Show that a \emph{random} matrix with the same parameters will be a $(1 - o(1), n^{1 - o(1)})$-expander.
    \item Show that, \emph{on a local level,} the matrix $\left[\mat M^{(1)} \| \ldots \| \mat M^{(k)}\right]$ is \emph{statistically random}.
    \item Compose these results via a union bound to show that $\left[\mat M^{(1)} \| \ldots \| \mat M^{(k)}\right]$, and hence $\mat G$, is a $(1 - o(1), n^{1 - o(1)})$-expander.
\end{enumerate}

\begin{claim} \label{claim:randomexpand}
    Let $\mat R^{(1)}, \ldots, \mat R^{(k)} \in \mathbb{F}_2^{t \times 2^w}$ be sampled at random, conditioned on each matrix having exactly 1 nonzero entry per row, and let $\mat R \coloneqq \left[\mat R^{(1)} \| \ldots \| \mat R^{(k)}\right]$. Then assuming $n$ is sufficiently large and $1 \leq t \leq 2^w/(\lceil \log n\rceil)^2$, with probability at least $1 - 2^{-kt/\sqrt{\lceil\log n\rceil}}$,
    \[\hw\left(\bigvee_{i \in [t]} \mat R_i\right) > \left(1 - 1/\sqrt{\lceil\log n\rceil}\right) \cdot k \cdot t.\]
\end{claim}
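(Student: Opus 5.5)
The plan is a direct collision-counting argument, applied block by block. The $t$ rows of $\mat R$ span $k$ column blocks, and block $i$ is $\mat R^{(i)}$, of width $N \coloneqq 2^w$. Within each block, each of the $t$ rows has a single nonzero entry, placed uniformly and independently among the $N$ columns. Across different blocks the column sets are disjoint, so
\[\hw\left(\bigvee_{j \in [t]} \mat R_j\right) = \sum_{i \in [k]} D_i,\]
where $D_i$ is the number of distinct columns hit in block $i$. Write $C_i \coloneqq t - D_i$ for the number of "collisions" in block $i$: the rows whose nonzero entry lands in a column already hit by an earlier row of that block. The claim then reduces to showing
\[\Pr\left[\sum_{i \in [k]} C_i \geq \frac{kt}{\sqrt{\lceil\log n\rceil}}\right] \leq 2^{-kt/\sqrt{\lceil\log n\rceil}}.\]

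To bound the tail, reveal the $kt$ placements one at a time, processing the rows within each block in order. Each placement is a collision only if it lands in a column already occupied in its block. At most $t-1$ columns are occupied, so this conditional probability is at most $t/N \leq 1/(\lceil\log n\rceil)^2$, using the hypothesis $t \leq 2^w/(\lceil\log n\rceil)^2$. Hence the total number of collisions is stochastically dominated by $\mathrm{Binom}(kt, p)$ with $p = (\lceil\log n\rceil)^{-2}$.

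Set $L = \lceil\log n\rceil$ and $s = kt/\sqrt{L}$. Rather than a multiplicative Chernoff bound, I would use the elementary union bound over which $s$ placements collide:
\[\Pr[\mathrm{Binom}(kt,p) \geq s] \leq \binom{kt}{s} p^s \leq \left(\frac{e\, kt\, p}{s}\right)^s = \left(e\sqrt{L}\, L^{-2}\right)^s = \left(e L^{-3/2}\right)^s.\]
For $n$ sufficiently large, $e L^{-3/2} < 1/2$, which gives the bound $2^{-s} = 2^{-kt/\sqrt{L}}$ as required. If $s$ is not an integer, I would replace it by $\lceil s \rceil$; this only helps, since the target event is "at least $s$ collisions".

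The step needing the most care is the domination argument: the collision indicators are not independent, so the bound must go through conditional probabilities (a martingale or sequential-coupling argument) rather than a naive product. Writing the conditional bound $t/N$ explicitly for each placement, given the history, makes this clean. Everything else is routine arithmetic with $L$ large; the final inequality is strict because the tail bound is strict for large $n$, or because one can just absorb the slack.
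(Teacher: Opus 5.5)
Your proposal is correct and is essentially the paper's proof. Like the paper, you reveal the $kt$ nonzero placements one at a time, bound each collision probability by $t/2^w \le (\lceil\log n\rceil)^{-2}$, and union-bound over the $\binom{kt}{kt/\sqrt{\lceil\log n\rceil}}$ choices of colliding placements; you use $(ex/y)^y$ where the paper uses $(3x/y)^y$. Your chain-rule treatment of the dependent collision events and the $\lceil s\rceil$ rounding are slightly more careful than the paper, which simply asserts that the collision events occur ``independently'' with probability at most $t/2^w$.
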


\begin{proof}
    It suffices to bound the probability that
    \[W = \hw\left(\bigvee_{i \in [t]} \mat R_i\right) \leq \left(1 - 1/\sqrt{\lceil\log n\rceil}\right) \cdot k \cdot t.\]
    Consider sampling each row of each $\mat R^{(j)}$ one-by-one. The probability that a new row $\mat R^{(j)}_i$ \emph{does not} increase $W$ is exactly the probability that the nonzero entry in $\mat R^{(j)}_i$ falls into the same column as a previously sampled nonzero entry. Because each $\mat R^{(j)}$ has $t$ rows, the probability that this event occurs is at most $t/2^w$, independently for each $i \in [t]$ and $j \in [k]$. There are $kt$ choices for $i, j$, so $W \leq \left(1 - 1/\sqrt{\lceil\log n\rceil}\right) \cdot k \cdot t$ if and only if this event occurs for at least $kt/\sqrt{\lceil\log n\rceil}$ rows $\mat R^{(j)}_i$. This probability is upper bounded as
    \begin{align*}
        & \binom{kt}{kt/\sqrt{\lceil\log n\rceil}} \cdot (t/2^w)^{kt/\sqrt{\lceil\log n\rceil}} \\
        < & \left(3\sqrt{\lceil\log n\rceil}\right)^{kt/\sqrt{\lceil\log n\rceil}} \cdot (t/2^w)^{kt/\sqrt{\lceil\log n\rceil}} \tag{Standard approximation $\binom{x}{y} < (3x/y)^y$} \\
        \leq & \left(3\sqrt{\lceil\log n\rceil}\right)^{kt/\sqrt{\lceil\log n\rceil}} \cdot \left((\lceil \log n\rceil)^{-2}\right)^{kt/\sqrt{\lceil\log n\rceil}} \tag{$t \leq 2^w/(\lceil \log n\rceil)^2$} \\
        < & \lceil\log n\rceil^{kt/\sqrt{\lceil\log n\rceil}} \cdot \left((\lceil \log n\rceil)^{-2}\right)^{kt/\sqrt{\lceil\log n\rceil}} \tag{Assuming $n$ is sufficiently large} \\
        = & \lceil\log n\rceil^{-kt/\sqrt{\lceil\log n\rceil}} \\
        < & 2^{-kt/\sqrt{\lceil\log n\rceil}} \tag{Assuming $n$ is sufficiently large} \\
    \end{align*}
\end{proof}

\begin{claim} \label{claim:Mlocalexpand}
    Fix a subset $\mathcal{T} \subset \mathbb{F}_2^{\log m}$ of size $1 \leq \vert\mathcal{T}\vert \leq 2^w/(\lceil \log n\rceil)^2$, and then sample $\mat M^{(1)}, \ldots, \mat M^{(k)}$ as in the algorithm $\mathsf{K}_{c_k, c_m}$, denoting their concatenation as $\mat M = \left[\mat M^{(1)} \| \ldots \| \mat M^{(k)}\right]$. Assuming $n$ is sufficiently large, with probability at least $1 - 2^{-k\vert\mathcal{T}\vert/\sqrt{\lceil\log n\rceil}}$,
    \[\hw\left(\bigvee_{\mat p \in [t]} \mat M_{\mat p}\right) > \left(1 - 1/\sqrt{\lceil\log n\rceil}\right) \cdot k \cdot \vert\mathcal{T}\vert.\]
\end{claim}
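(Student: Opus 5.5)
The plan is to reduce Claim \ref{claim:Mlocalexpand} to Claim \ref{claim:randomexpand}. Specifically, I will show that, restricted to the rows indexed by $\mathcal{T}$, the matrix $\mat M$ has exactly the distribution of the matrix $\mat R$ from Claim \ref{claim:randomexpand} with $t = \vert\mathcal{T}\vert$. The key fact is that random low-degree polynomials over $\mathbb{F}_2$ are $D$-wise independent whenever every $D$-subset of points has a linearly independent set of evaluation vectors. So I need to argue that a set of at most $2^w/\lceil\log n\rceil^2 \leq n$ points is \emph{shattered} by the degree-$\lceil\log n\rceil$ monomials.

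The cleanest route is as follows. A uniformly random polynomial $g$ of degree $\leq r$ is $\mat E \mat x$ for uniform $\mat x$, where $\mat E$ is the RM$(\log m, r)$ evaluation matrix. Hence the restriction $(g(\mat p))_{\mat p \in \mathcal{T}}$ is uniform on $\mathbb{F}_2^{\mathcal{T}}$ if and only if the rows $\mat E_{\mat p}$, $\mat p \in \mathcal{T}$, are linearly independent. That is the same as saying no nonzero vector supported on $\mathcal{T}$ lies in the dual code RM$(\log m, r)^\perp$. By Lemma \ref{lem:RMdual} the dual is RM$(\log m, \log m - r - 1)$, and by Lemma \ref{lem:RMdistance} its minimum distance is $2^{r+1}$. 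With $r = \lceil\log n\rceil$ we get $2^{r+1} \geq 2n > n \geq 2^w \geq \vert\mathcal{T}\vert$. So every set $\mathcal{T}$ of the allowed size has independent evaluation rows, and each $g^{(i,j)}$ restricted to $\mathcal{T}$ is uniform on $\mathbb{F}_2^{\mathcal{T}}$.

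Since the $kw$ polynomials are sampled independently, the tuple $\left(g^{(i,1)}(\mat p)\|\cdots\|g^{(i,w)}(\mat p)\right)_{\mat p\in\mathcal{T}, i \in [k]}$ is uniform over $\left(\mathbb{F}_2^w\right)^{\mathcal{T}\times[k]}$. Equivalently, for each $i$ and each $\mat p\in\mathcal{T}$, the position of the single nonzero entry of row $\mat p$ of $\mat M^{(i)}$ is an independent uniform column in $\mathbb{F}_2^w$. This is precisely the distribution of $\mat R$ in Claim \ref{claim:randomexpand}: the rows indexed by $\mathcal{T}$ form a $\vert\mathcal{T}\vert\times k2^w$ matrix that is a uniform concatenation of $k$ one-hot-per-row blocks. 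Applying Claim \ref{claim:randomexpand} with $t=\vert\mathcal{T}\vert$ then yields the stated probability bound verbatim, since the event depends only on these rows.

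The main obstacle is the independence step. Everything hinges on checking that the dual-distance bound $2^{\lceil\log n\rceil+1}$ exceeds $\vert\mathcal{T}\vert$, and that this holds for every set $\mathcal{T}$, not just typical ones. I would also double-check two edge conditions. First, $\log m - r - 1 \geq 0$, which holds since $\log m = \lceil\log n\rceil^{c_m}$ with $c_m\geq 2$. Second, the polynomials are sampled uniformly from all polynomials of degree $\leq\lceil\log n\rceil$ (uniform coefficient vector), which is how I interpret ``random degree-$\lceil\log n\rceil$ polynomials.''
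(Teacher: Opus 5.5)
Your proposal is correct and follows essentially the same route as the paper. You use the fact that the dual of RM$(\log m, \lceil\log n\rceil)$ has minimum distance $2^{\lceil\log n\rceil+1} > n \geq \vert\mathcal{T}\vert$ to get uniform, independent evaluations of each $g^{(i,j)}$ on $\mathcal{T}$, so the rows of $\mat M$ indexed by $\mathcal{T}$ are distributed exactly as $\mat R$. You then apply Claim~\ref{claim:randomexpand} with $t = \vert\mathcal{T}\vert$; your explicit argument that this dual-distance bound is equivalent to linear independence of the evaluation rows fills in a step the paper only asserts.
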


\begin{proof}
    Recall that every polynomial $g^{(i, j)}$ is a random degree-$\lceil \log n\rceil$ polynomial $g^{(i, j)} : \mathbb{F}_2^{\log m} \rightarrow \mathbb{F}_2$. Thus the length-$m$ vector $\mat v^{(i, j)}$ defined as $\mat v_{\mat p}^{(i, j)} = g^{(i, j)}(\mat p)$, where $\mat p$ ranges over all points in $\mathbb{F}_2^{\log m}$, will be a random member of the Reed-Muller code RM$(\log m, \lceil\log n\rceil)$. Now by Lemma \ref{lem:RMdual}, the dual code to RM$(\log m, \lceil\log n\rceil)$ is RM$(\log m, \log m - \lceil\log n\rceil - 1)$, and by Lemma \ref{lem:RMdistance} the minimum distance of this dual code will be $2^{\lceil\log n\rceil + 1} > n$. This means that if we fix at most $n$ points and then sample a random polynomial $g^{(i, j)}$ of degree $\lceil\log n\rceil$, the evaluation of $g^{(i, j)}$ on each of these points will be independently drawn from Ber$(1/2)$.
    
    Because $2^w/(\lceil \log n\rceil)^2 = 2^{\lfloor\log(n/k)\rfloor}/(\lceil \log n\rceil)^2 < n$, this means that for a fixed subset $\mathcal{T} \subset \mathbb{F}_2^{\log m}$ of size at most $2^w/(\lceil \log n\rceil)^2$, the evaluations of all polynomials $g^{(i, j)}$ on points in $\mathcal{T}$ will be uniformly random. As a result, the distribution over matrices $\mat M$ restricted to the rows indicated by $\mathcal{T}$ will be uniformly random, conditioned on each sub-matrix $\mat M^{(i)}$ having exactly one nonzero entry per row. Now the proof is completed by applying Claim \ref{claim:randomexpand}.
\end{proof}

From here we just need to union bound over all choices of subsets $\mathcal{T}$.

\begin{claim}
    With probability $1 - o(1)$, for all sufficiently large $n$, the matrix $\mat G$ produced in algorithm $\mathsf{K}_{c_k, c_m}$ is a $(1 - 1/\sqrt{\lceil\log n\rceil}, 2^w/(\lceil \log n\rceil)^2)$-expander.
\end{claim}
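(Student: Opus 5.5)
The plan is a union bound over all row subsets $\mathcal{T} \subseteq \mathbb{F}_2^{\log m}$ of size $1 \leq t \leq 2^w/(\lceil\log n\rceil)^2$, using Claim \ref{claim:Mlocalexpand} for each fixed subset. First we note that the zero columns appended to reach width $n$ do not affect row weights or unions of rows. So $\mat G$ is a $(\gamma, t_{\max})$-expander, with $\gamma = 1 - 1/\sqrt{\lceil\log n\rceil}$ and $t_{\max} = 2^w/(\lceil\log n\rceil)^2$, exactly when $\mat M = [\mat M^{(1)} \| \ldots \| \mat M^{(k)}]$ is. Moreover, for every $\mathcal{S}$ of size at most $t_{\max}$, Claim \ref{claim:Mlocalexpand} gives $\hw(\bigvee_{\mat p \in \mathcal{S}} \mat M_{\mat p}) > \gamma k \vert\mathcal{S}\vert$ except with probability at most $2^{-k\vert\mathcal{S}\vert/\sqrt{\lceil\log n\rceil}}$.

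Next we group subsets by size $t$. There are at most $\binom{m}{t} \leq m^t = 2^{t(\lceil\log n\rceil)^{c_m}}$ of them. The failure probability for size $t$ is therefore at most
\[2^{t(\lceil\log n\rceil)^{c_m}} \cdot 2^{-kt/\sqrt{\lceil\log n\rceil}} = 2^{-t\left((\lceil\log n\rceil)^{c_k - 1/2} - (\lceil\log n\rceil)^{c_m}\right)}.\]
Since $c_k \geq c_m + 1$, we have $c_k - 1/2 \geq c_m + 1/2 > c_m$. The exponent is therefore at least $t \cdot \frac12 (\lceil\log n\rceil)^{c_k - 1/2}$ for large $n$. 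Summing over $t \geq 1$ gives a geometric series bounded by $2 \cdot 2^{-\frac12(\lceil\log n\rceil)^{c_k-1/2}} = o(1)$. Hence with probability $1 - o(1)$ every admissible subset expands, which proves the claim.

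The only real obstacle is the counting step: making sure the entropy of choosing $\mathcal{T}$, which is $\log m$ per row, is beaten by the per-row decay $k/\sqrt{\log n}$. This is exactly where the hypothesis $c_k \geq c_m + 1$ enters, with half a power of $\log n$ to spare. We also check that $t_{\max} = n^{1-o(1)}$ and $\gamma = 1 - o(1)$. Since $k = (\lceil\log n\rceil)^{c_k}$, we get $2^w \geq n/(2k)$, so $t_{\max} \geq n/(2k\lceil\log n\rceil^2) = n^{1-o(1)}$. This is what makes the result give item 3 of Theorem \ref{thm:makingtheexpander}, though the claim itself only asserts the explicit parameters.
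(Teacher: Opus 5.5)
Your proposal is correct and follows the paper's proof essentially step for step. You union bound over row subsets grouped by size $t$, weigh the count $\binom{m}{t} \le m^t = 2^{t(\lceil\log n\rceil)^{c_m}}$ against the per-subset failure bound $2^{-kt/\sqrt{\lceil\log n\rceil}}$ from Claim~\ref{claim:Mlocalexpand}, use $c_k \ge c_m + 1$ to make the exponent negative with half a power of $\lceil\log n\rceil$ to spare, and finish with a sum over $t$. Your explicit geometric-series bound, your remark about the appended zero columns, and your check that $t_{\max} = n^{1-o(1)}$ only spell out steps the paper leaves implicit or handles right after the claim.
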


\begin{proof}
    Fix a size parameter $1 \leq t \leq 2^w/(\lceil \log n\rceil)^2)$. There are
    \[\binom{m}{t} \leq m^{t}\]
    size-$t$ subsets of rows in $\mat G$, and by Claim \ref{claim:Mlocalexpand} each subset violates expansion with probability at most
    \[2^{-kt/\sqrt{\lceil\log n\rceil}}.\]
    Union bounding over all choices of rows, the probability that at least one size-$t$ subset violates expansion is at most
    \begin{align*}
        & m^{t} \cdot 2^{-kt/\sqrt{\lceil\log n\rceil}} \\
        = & \left(2^{(\lceil\log n\rceil)^{c_m}}\right)^{t} \cdot 2^{-\left((\lceil\log n\rceil)^{c_k}\right)t/\sqrt{\lceil\log n\rceil}} \tag{$m \coloneqq 2^{(\lceil\log n\rceil)^{c_m}}$ and $k \coloneqq (\lceil\log n\rceil)^{c_k}$} \\
        = & 2^{t\left((\lceil\log n\rceil)^{c_m} - (\lceil\log n\rceil)^{c_k}/\sqrt{\lceil\log n\rceil}\right)} \\
        \leq & 2^{t\left((\lceil\log n\rceil)^{c_m} - (\lceil\log n\rceil)^{c_m + 1}/\sqrt{\lceil\log n\rceil}\right)}. \tag{$c_k \geq c_m + 1$} \\
        \leq & 2^{t\left((\lceil\log n\rceil)^{c_m} - (\lceil\log n\rceil)^{c_m + 1/2}\right)}. \\
    \end{align*}
    Assuming $n$ is sufficiently large, this is less than
    \[2^{-\frac{t}{2}(\lceil\log n\rceil)^{c_m + 1/2}}.\]
    A final union bound over all choices of $t$ completes the proof.
\end{proof}

The above shows that $\mat G$ is a $(1 - 1/\sqrt{\lceil \log n\rceil}, 2^w/(\lceil \log n\rceil)^2)$-expander with probability $1 - o(1)$. Observe that
\begin{align*}
    2^w/(\lceil \log n\rceil)^2) = & 2^{\lfloor \log(n/k)\rfloor}/(\lceil \log n \rceil)^2 \\
    \geq & \Omega(2^{\log(n/k)}/(\log n)^2) \\
    = & \Omega(n/(\log n)^{2 + c_k}) \\
    = & n/(\log n)^{O(1)} \\
    = & n^{1 - o(1)}.
\end{align*}
Thus $\mat G$ is a $(1 - o(1), n^{1 - o(1)})$-expander with probability $1 - o(1)$, as desired.
\end{proof}

\subsection{The Distinguishing Algorithm}
We use the Reed-Muller decoding algorithm of Saptharishi, Shpilka, and Volk \cite{saptharishi2016efficiently} as a starting point for the algorithm $\mathsf{Distinguish}$ in our $(\alpha, \beta)$-decodable expanding code.

\paragraph{Theorem \ref{thm:SSVdecoder} \textnormal{(Due to \cite{saptharishi2016efficiently}, Restated)}.}
    \emph{There is a function $\gamma(d) \geq 1 - o(1), \gamma(d) \leq 1 - d^{-O(1)}$ and a poly$(2^d)$ time algorithm $\mathsf{D}$ that outputs a vector in RM$(d, d^{1/3})$ and behaves as follows. Pick a random codeword $\mat c \in$ RM$(d, d^{1/3})$, and sample a noisy version $\mat c'$ of $\mat c$:
    \[\mathbf{c}_i'=\begin{cases} \text{Random element of $\mathbb{F}_2$,} & \text{with probability $\gamma(d)$. } \\ \mat c_i, & \text{ otherwise.} \end{cases}\]
    Then
    \[\Pr[\mathsf{D}(\mat c') = \mat c] \geq 1 - o(1).\]}
\vspace{10px}

The conversion to a distinguisher goes as follows.

\begin{lemma} \label{lem:distinguisher}
    There is a function $\mu(d) \geq 1 - o(1), \mu(d) \leq 1 - d^{-O(1)}$ and a poly$(2^d)$ time algorithm $\mathsf{E}$ that outputs a vector in RM$(d, d^{1/3})$ and behaves as follows. Let $m = 2^d$. Pick a random codeword $\mat c \in $ RM$(d, d^{1/3})$, and sample a noisy version $\mat c' \in \left\{\mathbb{F}_2 \cup \textnormal{``?''}\right\}^m$ of $\mat c$:
    \[\mathbf{c}_i'=\begin{cases} \text{``?'',} & \text{with probability $\mu$.} \\ \text{Random element of $\mathbb{F}_2$,} & \text{with probability $(1 - \mu)\mu$.} \\ \mat c_i, & \text{ otherwise.} \end{cases}\]
    Also sample a random vector $\mat r \in \left\{\mathbb{F}_2 \cup \textnormal{``?''}\right\}^m$:
    \[\mathbf{r}_i=\begin{cases} \text{``?'',} & \text{with probability $\mu$.} \\ \text{Random element of $\mathbb{F}_2$,} & \text{ otherwise.} \end{cases}\]
    Then
    \[\Pr[\mathsf{E}(\mat c') = 0] \geq 1 - o(1) \quad \text{and} \quad \Pr[\mathsf{E}(\mat r) = 1] \geq 1 - o(1).\]
\end{lemma}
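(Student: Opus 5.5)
The plan is to reduce to Theorem \ref{thm:SSVdecoder} by filling every erasure ``?'' with a fresh uniformly random bit. Then run the decoder $\mathsf{D}$ and check whether its output agrees with the filled-in input on noticeably more coordinates than a random string would. Concretely, $\mathsf{E}(\mat y)$ will:
\begin{enumerate}
    \item replace each ``?'' in $\mat y$ by an independent random bit, obtaining $\tilde{\mat y}\in\mathbb{F}_2^m$;
    \item compute $\hat{\mat c}=\mathsf{D}(\tilde{\mat y})$;
    \item output $0$ if $\mathrm{dist}(\hat{\mat c},\tilde{\mat y})\le (\tfrac12-\delta)m$, and $1$ otherwise.
\end{enumerate}
Here $\delta$ is a threshold of order $d^{-O(1)}$, chosen below. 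The running time is clearly $\mathrm{poly}(2^d)$.

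For the codeword case, observe that after filling, each coordinate of $\tilde{\mat y}$ equals $\mat c_i$ with probability $1-\mu'$ and is uniformly random otherwise, independently across coordinates. Here $1-\mu'=(1-\mu)^2$, since a coordinate survives only if it is neither erased nor corrupted. So $\tilde{\mat y}$ is exactly the noisy codeword of Theorem \ref{thm:SSVdecoder} with corruption rate $\mu'=1-(1-\mu)^2$. We choose $\mu$ so that $\mu'=\gamma(d)$, i.e.\ $\mu=1-\sqrt{1-\gamma(d)}$. Since $1-\gamma(d)$ is between $d^{-O(1)}$ and $o(1)$, the same holds for $1-\mu$, as required.

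Then $\mathsf{D}(\tilde{\mat y})=\mat c$ with probability $1-o(1)$. In that event $\mathrm{dist}(\mat c,\tilde{\mat y})\sim\mathrm{Binom}(m,(1-\gamma)\cdot 0+\gamma/2)=\mathrm{Binom}(m,\gamma/2)$. Its mean is $(\tfrac12-\tfrac{1-\gamma}{2})m$, and the gap $(1-\gamma)/2\ge d^{-O(1)}$. Set $\delta=(1-\gamma)/4$. Chernoff then gives a failure probability of $\exp(-\Omega(\delta^2 m))=\exp(-2^d/d^{O(1)})=o(1)$, so $\mathsf{E}$ outputs $0$ with probability $1-o(1)$.

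For the random case, $\tilde{\mat y}$ is uniform on $\mathbb{F}_2^m$, and $\hat{\mat c}$ is some codeword depending on $\tilde{\mat y}$. I will union bound over all codewords. For a fixed $\mat c\in\mathrm{RM}(d,d^{1/3})$, $\Pr[\mathrm{dist}(\mat c,\tilde{\mat y})\le(\tfrac12-\delta)m]\le\exp(-2\delta^2 m)$. The code has $2^{\binom{d}{\le d^{1/3}}}\le 2^{d^{d^{1/3}}}=\exp(2^{O(d^{1/3}\log d)})$ codewords. The union bound therefore gives $\exp\bigl(2^{O(d^{1/3}\log d)}-2^{d}/d^{O(1)}\bigr)=o(1)$, because $2^d$ dominates $2^{O(d^{1/3}\log d)}$. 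So $\mathsf{E}$ outputs $1$ with probability $1-o(1)$.

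The main subtlety I expect is this random-case step. The decoder's output is adversarially correlated with $\tilde{\mat y}$, so the argument must go through a union bound over the whole code rather than an analysis of $\mathsf{D}$. It works only because the rate is tiny, with $\log|\mathrm{RM}(d,d^{1/3})|\ll\delta^2 2^d$. A secondary point is matching the statement's parametrization $(1-\mu)\mu$ with the corruption rate in Theorem \ref{thm:SSVdecoder}. If $\gamma$ cannot be hit exactly, I will note that Theorem \ref{thm:SSVdecoder} applies equally for any smaller corruption rate. A lower rate can be simulated by the decoder's user adding extra random corruption, which gives monotonicity, so it suffices to have $\mu'\le\gamma(d)$.
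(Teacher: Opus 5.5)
Your proposal is correct and follows the paper's proof essentially step for step: fill each erasure with a fresh random bit, set $\mu = 1-\sqrt{1-\gamma(d)}$ so the effective corruption rate is exactly $\gamma$, run $\mathsf{D}$, and threshold the number of disagreements between its input and output. Your explicit cutoff $(\tfrac12-\tfrac{1-\gamma}{4})m$ and the union bound over the $2^{\binom{d}{\le d^{1/3}}}$ codewords simply make precise the paper's unspecified cutoff $z^*$. The monotonicity fallback is unnecessary, since $\mu$ is real-valued and hits $\gamma$ exactly.
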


\begin{proof}
    Let $\gamma(d) = 1 - o(1)$ be as in Theorem \ref{thm:SSVdecoder}, and set $\mu(d) = 1 - \sqrt{1 - \gamma(d)} = 1 - o(1)$. Observe that the decoder $\mathsf{D}$ from that theorem doubles as a decoder for a noisy version $\mat c' \in \left\{\mathbb{F}_2 \cup \textnormal{``?''}\right\}^m$ of a random codeword $\mat c \in $ RM$(d, d^{1/3})$ sampled as follows:
    \[\mathbf{c}_i'=\begin{cases} \text{``?'',} & \text{with probability $\mu$.} \\ \text{Random element of $\mathbb{F}_2$,} & \text{with probability $(1 - \mu)\mu$.} \\ \mat c_i, & \text{ otherwise.} \end{cases}\]
    This is because we can replace every erasure symbol $\text{``?''}$ with a random element of $\mathbb{F}_2$ and then apply $\mathsf{D}$ on the resulting vector, which will have a corruption rate of
    \begin{align*}
        \mu + (1 - \mu)\mu = & \left(1 - \sqrt{1 - \gamma}\right) + \sqrt{1 - \gamma}\left(1 - \sqrt{1 - \gamma}\right) \\
        = & 1 - \sqrt{1 - \gamma} + \sqrt{1 - \gamma} - \left(\sqrt{1 - \gamma}\right) \cdot \left(\sqrt{1 - \gamma}\right) \\
        = & 1 - (1 - \gamma) \\
        = & \gamma
    \end{align*}
    Now we describe how to convert the decoder into a distinguisher. Since $\gamma(d) \leq 1 - d^{-O(1)} = 1 - (\log n)^{-O(1)}$, a concentration bound shows that there is a cutoff value $z^*$ satisfying the following conditions.
    \begin{enumerate}
        \item Let $\mat c$ be a random codeword from RM$(d, d^{1/3})$, and let $\mat c'$ be the vector $\mat c$ subjected to corruptions of rate $\gamma$. Then $\mat c'$ disagrees with $\mat c$ on less than $z^*$ coordinates with probability $1 - o(1)$. Consequently $\mathsf{D}(\mat c')$ disagrees with $\mat c'$ on less than than $z^*$ coordinates with probability $1 - o(1)$.
        \item Let $\mat r \in \mathbb{F}_2^m$ be a random vector. With probability $1 - o(1)$, for all $\mat c \in $ RM$(d, d^{1/3})$, $\mat r$ disagrees with $\mat c$ on more than $z^*$ coordinates. Since $\mathsf{D}$ outputs a vector in RM$(d, d^{1/3})$, we have that with probability $1 - o(1)$, $\mathsf{D}(\mat r)$ disagrees with $\mat r$ on more than than $z^*$ coordinates.
    \end{enumerate}
    So the distinguisher just invokes $\mathsf{D}$, counts the number of coordinates on which the input vector and output vector disagree, and accepts or rejects by performing a threshold test on this value.
\end{proof}

\subsection{Putting it Together} \label{sec:puttingalltogether}
Combining Theorem \ref{thm:templatePKE} with Theorems \ref{thm:makingtheexpander} and \ref{lem:distinguisher}, we have our PKE scheme.

\paragraph{Theorem \ref{thm:pkeschemeinformal} \textnormal{(Restated)}.}
    \emph{Suppose that Conjecture \ref{conj:LARP-CSPinformal} (High Corruption LARP-CSP) and Conjecture \ref{conj:kXORinformal} (High Corruption $k$XOR) both hold. Then there is a semantically secure public key encryption scheme.}
\vspace{10px}

\begin{proof}
    Recall that Conjecture \ref{conj:kXORinformal} is simply Conjecture \ref{conj:kXOR} instantiated with any choice of corruption rate $\beta(n) = 1 - o(1)$, and similarly Conjecture \ref{conj:LARP-CSPinformal} is simply Conjecture \ref{conj:LARP-CSP} instantiated with any choice of corruption rate $\alpha(n) = 1 - o(1)$. By Theorem \ref{thm:templatePKE}, it suffices to exhibit an $(\alpha, \beta)$-decodable expanding code, where $\alpha(n) = \beta(n) = 1 - o(1)$.
    
    Fix $c_k = 7$ and $c_m = 6$. The matrices $\mat G$ sampled in our $(1 - o(1), 1 - o(1))$-decodable expanding code will be of width $n$ and height $m = 2^{O((\log n)^6)}$, and each row will have $k = (\ceil \log n\rceil)^7$ nonzero entries. By Theorem \ref{thm:makingtheexpander}, we have a $2^{O((\log n)^{6})} = m^{O(1)}$ time algorithm $\mathsf{K}_{7, 6}(1^m)$ that can be used as the algorithm $\mathsf{MatrixGen}$. Furthermore, for every choice of internal coins, the $(m, n, k)$-matrix $\mat G$ output by $\mathsf{K}_{7, 6}(1^m)$ will generate a subcode $\mathcal{C} = \{\mat G \mat x : \mat x \in \mathbb{F}_2^n\}$ of RM$(d = (\lceil\log n\rceil)^6, d^{1/3} = (\lceil\log n\rceil)^2)$. We can thus use algorithm $\mathsf{E}$ from Lemma \ref{lem:distinguisher} directly as the algorithm $\mathsf{Distinguish}$, and the running time of this algorithm is $2^{O(d)} = m^{O(1)}$. The error rate and corruption rate that this algorithm can tolerate both tend towards $1$ as $d$ increases; because $d$ increases with $n$, they are both $1 - o(1)$.
\end{proof}

\section{Evidence for the LARP-CSP Conjecture}
In this section we give a variety of algorithmic lower bounds supporting Conjecture \ref{conj:LARP-CSP} (LARP-CSP). In fact, all of our lower bounds will be for the \emph{corruption-free} version. We formalize this version of the decision problem as follows.

\begin{problemcustom} \label{prob:LARP}
    Let $\mat H$ be any $(1-o(1), n^{1 - o(1)})$-expanding $(m, n, k)$-matrix, where $k = (\log n)^{\Theta(1)}$ and $m \leq n^{o(k)}$. Let $\Sigma, \Gamma$ be any alphabets satisfying $\vert\Sigma\vert = (nm)^{\log^{\Theta(1)}(nm)}$ and $\vert\Gamma\vert \leq \vert\Sigma\vert^{3k/4}$. Sample $m$ random functions $f_i : \Sigma^k \rightarrow \Gamma$, and let $\mathcal{F}$ be the set of all $f_i$. The task is to distinguish between the following two distributions.
    \begin{enumerate}
        \item Null distribution $\mathcal{Q}$: $(\mat H, \mathcal{F}, \mat b)$, where $\mat b \in \Gamma^m$ is sampled uniformly at random. \label{item:prob1dist1}
        \item Planted distribution $\mathcal{P}$: $(\mat H, \mathcal{F}, \mat b)$, where we sample $\mat s \in \Sigma^n$ at random and set
        \[\mat b_i = f_i(\mat s_{N_{\mat H}(i, 1)}, \ldots, \mat s_{N_{\mat H}(i, k)}).\] \label{item:prob1item2}
    \end{enumerate}
\end{problemcustom}
\noindent
For a given pair $(\mat H, \mathcal{F})$, let $\mathcal{Q}_{\mat H, \mathcal{F}}$ denote the distribution on vectors $\mat b$ as sampled in (\ref{item:prob1dist1}), and $\mathcal{P}_{\mat H, \mathcal{F}}$ denote the distribution on vectors $\mat b$ as sampled in (\ref{item:prob1item2}).

\subsection{Low Complexity Embedding Attacks} \label{sec:lowcomplexembed}
As discussed in Section \ref{sec:LARP-CSPintro} of the introduction, there are known counterexamples to the security of Goldreich's PRG over arbitrary expanders \cite{oliveira2018expander}. In this subsection we explore and then immediately rule out the possibility of using these attacks to break Problem \ref{prob:LARP}. We start by giving a hypothesis testing framework which \emph{significantly generalizes} the set of attacks given by Oliveira, Santhanam, and Tell \cite{oliveira2018expander}, and then show that \emph{the entropy of the random functions alone is enough for Problem \ref{prob:LARP} to unconditionally resist all such attacks.}

\paragraph{The Attack on Goldreich's PRG.}
Oliveira, Santhanam, and Tell \cite{oliveira2018expander} consider the following decision problem,\footnote{The authors phrase this problem in terms of graphs and use slightly different terminology, but our version is equivalent.} which is similar to Problem \ref{prob:LARP} but uses \emph{small alphabets} and a \emph{single, low complexity} predicate. Given a $(1 - o(1), n^{1 - o(1)})$-expanding $(m, n, k)$-matrix $\mat H$ and a predicate $f : \{0, 1\}^k \rightarrow \{0, 1\}$, distinguish between the following distributions:
\begin{enumerate}
    \item $(\mat H, f, \mat b)$, where $\mat b \in \{0, 1\}^m$ is sampled at random.
    \item $(\mat H, f, \mat b)$, where we sample $\mat s \in \{0, 1\}^n$ at random and set
    \[\mat b_i = f(\mat s_{N_{\mat H}(i, 1)}, \ldots, \mat s_{N_{\mat H}(i, k)}).\] \label{item:ostdistr2}
\end{enumerate}

All of their attacks now assume the same general format. They (non-constructively) prove the existence of an expanding $(n^{\log^{O(1)}n}, n, \log^{O(1)}n)$-matrix $\mat H$ such that the above problem can be solved in quasi-polynomial time, \emph{assuming the predicate $f$ is of a particular form}; both the matrix $\mat H$ and the predicate $f$ must be representable using a sufficiently small AC$^0[+]$ circuit. When these properties hold, they demonstrate that for every choice of seed $\mat s$, the vector $\mat b$ in distribution (\ref{item:ostdistr2}) will be the truth table of a small (unknown) AC$^0[+]$ circuit. Their attacks then work by leveraging \emph{natural property algorithms} (see e.g. \cite{razborov1987lower, smolensky1987algebraic, razborov1994natural, carmosino2016learning}), which can distinguish with constant advantage between the truth table of any small AC$^0[+]$ circuit and a random vector of the same length.

A different way to look at the attacks is as follows. Let $\mathcal{L} \subset \{0, 1\}^m$ be the set of all truth tables for small AC$^0[+]$ circuits. Use a natural property algorithm to determine if
\[\textnormal{dist}(\mat b, \mathcal{L}) = 0,\]
and choose to accept or reject based on the result. Recall that dist$(\mat b, \mathcal{L})$ is the Hamming distance between $\mat b$ and the closest vector in $\mathcal{L}$, so dist$(\mat b, \mathcal{L}) = 0$ if and only if $\mat b \in \mathcal{L}$.

\paragraph{Our Generalized Framework.}
Here we define the notion of a \emph{low complexity embedding game}. As we discuss later, this game significantly generalizes the Oliveira-Santhanam-Tell attack. Before giving the game, we define the \emph{generalized distance} between two vectors $\mat u, \mat v \in \{\mathbb{F} \text{ } \cup \perp\}^m$ to be
\[\text{dist'}(\mat u, \mat v) = \big\vert\big\{i : \mat u_i \neq \mat v_i \text{ and } \mat u_i \neq \perp \text{ and } \mat v_i \neq \perp\big\}\big\vert,\]
i.e. the number of coordinates on which $\mat u, \mat v$ disagree but excluding those coordinates with a ``don't care'' symbol $\perp$. For a vector $\mat u \in \{\mathbb{F} \text{ } \cup \perp\}^m$ and a set $\mathcal{L} \subseteq \{\mathbb{F} \text{ } \cup \perp\}^m$, we define $\text{dist'}(\mat u, \mathcal{L})$ to be the minimum of $\text{dist'}(\mat u, \mat v)$ over all $\mat v \in \mathcal{L}$.

\begin{definition}[Low Complexity Embedding Game]\footnote{Technically speaking, the game allows for the adversary to choose more general mappings than just injective embeddings.}
    The game goes as follows.
    \begin{enumerate}
        \item The challenger fixes parameters $m, n, k$ and alphabets $\Sigma, \Gamma$. She also fixes the description of
        \begin{enumerate}
            \item A null distribution $\mathcal{D}_{\mat H, \mathcal{F}}$ over vectors $\mat b \in \Gamma^m$,
            \item A planted distribution $\mathcal{D}'_{\mat H, \mathcal{F}}$ over vectors $\mat b \in \Gamma^m$, and
            \item A distribution $\mathcal{Z}$ over sets of functions $\mathcal{F} = \{f_i : \Sigma^k \rightarrow \Gamma\}_{i \in [m]}$.
        \end{enumerate}
        Both $\mathcal{D}_{\mat H, \mathcal{F}}$ and $\mathcal{D}'_{\mat H, \mathcal{F}}$ depend on a $(1 - o(1), n^{1 - o(1)})$-expanding $(m, n, k)$-matrix $\mat H$ to be fixed later by the adversary, and a set of functions $\mathcal{F} = \{f_i : \Sigma^k \rightarrow \Gamma\}$ to be sampled later by the challenger.
        \item After viewing the descriptions of $\mathcal{D}_{\mat H, \mathcal{F}}$, $\mathcal{D}'_{\mat H, \mathcal{F}}$, and $\mathcal{Z}$, the adversary spends unbounded time to choose a $(1 - o(1), n^{1 - o(1)})$-expanding $(m, n, k)$-matrix $\mat H$ along with a tuple $(\mathbb{F}, d, \mathcal{L}, \Psi)$, where
        \begin{enumerate}
            \item $\mathbb{F}$ is any finite field,
            \item $d$ is an embedding dimension parameter,
            \item $\mathcal{L} \subseteq \{\mathbb{F} \text{ } \cup \perp\}^{md}$ is any subset of vectors, and
            \item $\Psi : \mathbb{F}^{md} \rightarrow \{\mathbb{F} \text{ } \cup \perp\}^{md}$ is any transformation.
        \end{enumerate}
        \item The challenger samples a set of functions $\mathcal{F}$ from the distribution $\mathcal{Z}$.
        \item Based on $\mat H$, $\mathcal{F}$, the description of $\mathcal{D}_{\mat H, \mathcal{F}}$, and the description of $\mathcal{D}'_{\mat H, \mathcal{F}}$, the adversary spends unbounded time to fix a tuple $(\Phi, \pi, h)$, where
        \begin{enumerate}
            \item $\Phi = \{\phi_i : \Gamma \rightarrow \mathbb{F}^d\}_{i \in [m]}$ is any set of (not necessarily injective) functions.
            \item $\pi : \mathbb{F}^{md} \rightarrow \mathbb{F}^{md}$ is any affine transformation.
            \item $h : \{0, \ldots, md\} \rightarrow \{0, 1\}$ is an acceptance predicate.
        \end{enumerate}
    \end{enumerate}
    The adversary wins the game if and only if
    \begin{align*}
        & \Bigg\vert \Pr_{\mat b \sim \mathcal{D}'_{\mat H, \mathcal{F}}}\Big[h\Big(\textnormal{dist'}\Big(\Psi\big(\pi\big(\phi(\mat b_1) \| \ldots \| \phi(\mat b_m)\big)\big), \mathcal{L}\Big)\Big) = 1\Big] \\
        & \qquad - \Pr_{\mat b \sim \mathcal{D}_{\mat H, \mathcal{F}}}\Big[h\Big(\textnormal{dist'}\Big(\Psi\big(\pi\big(\phi(\mat b_1) \| \ldots \| \phi(\mat b_m)\big)\big), \mathcal{L}\Big)\Big) = 1\Big] \Bigg\vert = \Omega(1).
    \end{align*}
\end{definition}

Notice that the adversary fixes some components of the test in advance and other components after gaining knowledge of the functions in $\mathcal{F}$. Intuitively, the first parameters $(\mathbb{F}, d, \mathcal{L}, \Psi)$ are generic, in the sense that they give the description of a ``testing structure'' which is meant to be compatible with the adversarially chosen expander $\mat H$.

The parameters $(\Phi, \pi, h)$ fixed after viewing $\mathcal{F}$ allow the adversary to perform a best-possible coordinate-wise mapping of an observed vector $\mat b$ into a vector over $\mathbb{F}^{md}$. The mappings in $\Phi$ are not required to be injective, which means that an adversary can choose to aggregate different symbols of the alphabet $\Gamma$ in an optimal manner. Notice that, while $\pi$ is ``just'' an affine transformation, it does have the power to arbitrarily permute coordinates, perform linear computations, and erase coordinates (by setting them equal to a designated constant which is later mapped by $\Psi$ to $\perp$). This gives the adversary significant power to massage the vectors $\mat b$ into a form which is compatible with the ``testing structure'' defined by $(\mathbb{F}, d, \mathcal{L}, \Psi)$, especially when the alphabet $\Gamma$ does not have a natural field structure (as is the case in Problem \ref{prob:LARP}).

Keep in mind that at every step, we grant the adversary unbounded time to pick the best matrix $\mat H$, the best tuple $(\mathbb{F}, d, \mathcal{L}, \Psi)$, and the best tuple $(\Phi, \pi, h)$. We also place no requirements on the time required to compute
\[\textnormal{dist'}\Big(\Psi\big(\pi\big(\phi(\mat b_1) \| \ldots \| \phi(\mat b_m)\big)\big), \mathcal{L}\Big),\]
\emph{even though $\mathcal{L}$ is an arbitrary structure chosen by the adversary.} As of today, we only know of efficient algorithms to perform this distance computation, even approximately, for very specific sets $\mathcal{L}$ (e.g. the set of truth tables for small AC$^0[+]$ circuits).

\paragraph{Re-Framing the Oliveira-Santhanam-Tell Attack.} We view the attack in terms of a low complexity embedding game:
\emph{
\begin{enumerate}
    \item The challenger fixes parameters $(m, n, k) = (n^{\log^{O(1)}n}, n, \log^{O(1)}n)$ and alphabets $(\Sigma, \Gamma) = (\{0, 1\}, \\ \{0, 1\})$. She also fixes
    \begin{enumerate}
        \item The null distribution $\mathcal{D}_{\mat H, \mathcal{F}}$ as the uniform distribution over vectors $\mat b \in \{0, 1\}^m$.
        \item The planted distribution $\mathcal{D}'_{\mat H, \mathcal{F}}$ as the distribution over vectors $\mat b \in \{0, 1\}^m$ sampled by picking $\mat s \in \{0, 1\}^n$ at random and then setting
        \[\mat b_i = f_i(\mat s_{N_{\mat H}(i, 1)}, \ldots, \mat s_{N_{\mat H}(i, k)}).\]
        \item Any distribution $\mathcal{Z}$ over sets of functions $\mathcal{F} = \{f_i : \Sigma^k \rightarrow \Gamma\}_{i \in [m]}$ such that for all $\mathcal{F}$ drawn from $\mathcal{Z}$:
        \begin{enumerate}
            \item For all $i, j \in [m]$, we have $f_i = f_j$.
            \item For all $i \in [m]$, the function $f_i$ is computable by a sufficiently small AC$^0[+]$ circuit.
        \end{enumerate}
    \end{enumerate}
    \item The adversary constructs a $(1 - o(1), n^{1 - o(1)})$-expanding $(m, n, k)$-matrix $\mat H$ that is described by a sufficiently small AC$^0[+]$ circuit. She also fixes a tuple $(\mathbb{F}, d, \mathcal{L}, \Psi)$, where
    \begin{enumerate}
        \item $\mathbb{F} \coloneqq \mathbb{F}_2$,
        \item $d \coloneqq 1$,
        \item $\mathcal{L} \subseteq \{\mathbb{F}_2 \text{ } \cup \perp\}^{md}$ is the set of all truth tables for sufficiently small AC$^0[+]$ circuits (the symbol $\perp$ is not used).
        \item $\Psi : \mathbb{F}_2^{md} \rightarrow \{\mathbb{F}_2 \text{ } \cup \perp\}^{md}$ is the identity transformation (the symbol $\perp$ is again not used).
    \end{enumerate}
    \item The challenger samples a set of functions $\mathcal{F}$ from the distribution $\mathcal{Z}$.
    \item The adversary spends unbounded time to fix a tuple $(\Phi, \pi, h)$, where
    \begin{enumerate}
        \item Every $\phi_i \in \Phi$ just casts from $\{0, 1\}$ to the field $\mathbb{F}_2$ by mapping zero to zero and one to one.
        \item $\pi$ is the identity transformation.
        \item $h$ is the function which outputs 1 iff its input is zero.
    \end{enumerate}
\end{enumerate}
}

Now because most functions were set to the identity, if we abuse notation and assume that $\mat b$ is already over the field $\mathbb{F}_2$ then
\begin{align*}
    & \Bigg\vert \Pr_{\mat b \sim \mathcal{D}'_{\mat H, \mathcal{F}}}\Big[h\Big(\textnormal{dist'}\Big(\Psi\big(\pi\big(\phi(\mat b_1) \| \ldots \| \phi(\mat b_m)\big)\big), \mathcal{L}\Big)\Big) = 1\Big] \\
    & \qquad - \Pr_{\mat b \sim \mathcal{D}_{\mat H, \mathcal{F}}}\Big[h\Big(\textnormal{dist'}\Big(\Psi\big(\pi\big(\phi(\mat b_1) \| \ldots \| \phi(\mat b_m)\big)\big), \mathcal{L}\Big)\Big) = 1\Big] \Bigg\vert
\end{align*}
is the same as
\begin{align*}
    & \Bigg\vert \Pr_{\mat b \sim \mathcal{D}'_{\mat H, \mathcal{F}}}\Big[\textnormal{dist'}\Big(\mat b, \mathcal{L}\Big)\Big) = 0\big]\Big] \\
    & \qquad - \Pr_{\mat b \sim \mathcal{D}_{\mat H, \mathcal{F}}}\Big[\textnormal{dist'}\Big(\mat b, \mathcal{L}\Big)\Big) = 0\big] \Bigg\vert.
\end{align*}
The authors showed that this quantity is $\Omega(1)$, so the adversary wins the game.

\paragraph{Regarding the Adversary's Strength.}
Observe that, by allowing the adversary to fix \emph{any} subset $\mathcal{L} \subseteq \{\mathbb{F} \text{ } \cup \perp\}^{md}$, any problem where the set of functions $\mathcal{F}$ does not come from a high entropy distribution is immediately broken. As a simple example, consider the noisy $k$XOR problem with a \emph{random} factor graph. Because the predicates $f_i$ are all just addition over $\mathbb{F}_2$, and because the adversary has knowledge of the factor graph before fixing $\mathcal{L}$, she can just set $\mathcal{L}$ to be the set of all (noiseless) vectors $\mat b$ that could be sampled in the planted distribution. Now if $\text{dist'}(\mat b, \mathcal{L})$ is small, we know that the vector $\mat b$ was almost certainly drawn from the planted distribution, and if $\text{dist'}(\mat b, \mathcal{L})$ is large then we know that $\mat b$ was almost certainly drawn from the null (totally random) distribution.

In this way the adversary is unreasonably powerful. But for the purposes of proving lower bounds, this is only better; we rule out adversaries that capture the known attacks on Goldreich's PRG \cite{oliveira2018expander} as well as some attacks on noisy $k$XOR that are conjectured to be impossible to implement efficiently.

\subsubsection{Our Lower Bound Against Low Complexity Embedding Attacks}
To keep track of the parameters, recall Problem \ref{prob:LARP} (which is just the corruption-free version of the decision problem in Conjecture \ref{conj:LARP-CSP}, the LARP-CSP Conjecture):

\paragraph{Problem \ref{prob:LARP} \textnormal{(Restated)}.}
    \emph{Let $\mat H$ be any $(1-o(1), n^{1 - o(1)})$-expanding $(m, n, k)$-matrix, where $k = (\log n)^{\Theta(1)}$ and $m \leq n^{o(k)}$. Let $\Sigma, \Gamma$ be any alphabets satisfying $\vert\Sigma\vert = (nm)^{\log^{\Theta(1)}(nm)}$ and $\vert\Gamma\vert \leq \vert\Sigma\vert^{3k/4}$. Sample $m$ random functions $f_i : \Sigma^k \rightarrow \Gamma$, and let $\mathcal{F}$ be the set of all $f_i$. The task is to distinguish between the following two distributions.
    \begin{enumerate}
        \item Null distribution $\mathcal{Q}$: $(\mat H, \mathcal{F}, \mat b)$, where $\mat b \in \Gamma^m$ is sampled uniformly at random.
        \item Planted distribution $\mathcal{P}$: $(\mat H, \mathcal{F}, \mat b)$, where we sample $\mat s \in \Sigma^n$ at random and set
        \[\mat b_i = f_i(\mat s_{N_{\mat H}(i, 1)}, \ldots, \mat s_{N_{\mat H}(i, k)}).\]
    \end{enumerate}}
For a given pair $(\mat H, \mathcal{F})$, let $\mathcal{Q}_{\mat H, \mathcal{F}}$ denote the distribution on vectors $\mat b$ as sampled in (\ref{item:prob1dist1}), and $\mathcal{P}_{\mat H, \mathcal{F}}$ denote the distribution on vectors $\mat b$ as sampled in (\ref{item:prob1item2}).\\\\
\noindent
Now we give a formal statement of the lower bound.

\begin{theorem} \label{thm:lowerboundembedding}
    Fix any $(1-o(1), n^{1 - o(1)})$-expanding $(m, n, k)$-matrix $\mat H$,
    where $k = (\log n)^{\Theta(1)}$ and $m \leq n^{o(k)}$. Let $\Sigma, \Gamma$ be any alphabets satisfying $\vert\Sigma\vert = (nm)^{\log^{\Theta(1)}(nm)}$ and $\vert\Gamma\vert \leq \vert\Sigma\vert^{3k/4}$. Also fix any
    \begin{enumerate}
        \item Finite field $\mathbb{F}$ of order $2 \leq \vert\mathbb{F}\vert \leq 2^{\vert\Sigma\vert^{o(k)}}$,
        \item Embedding dimension parameter $1 \leq d \leq m^{O(1)}$,
        \item Subset $\mathcal{L} \subseteq \{\mathbb{F} \text{ } \cup \perp\}^{md}$, and
        \item Transformation $\Psi : \mathbb{F}^{md} \rightarrow \{\mathbb{F} \text{ } \cup \perp\}^{md}$.
    \end{enumerate}
    After this, sample the set of functions $\mathcal{F}$ as in Problem \ref{prob:LARP}. Then with probability $1 - \exp{-\vert\Sigma\vert^{\Omega(k)}}$, the following holds. For all
    \begin{enumerate}
        \item Functions $\phi_1, \ldots, \phi_m : \Gamma \rightarrow \mathbb{F}^d$,
        \item Affine transformations $\pi : \mathbb{F}^{md} \rightarrow \mathbb{F}^{md}$, and
        \item Acceptance predicates $h : \{0, \ldots, md\} \rightarrow \{0, 1\}$,
    \end{enumerate}
    we have
    \begin{align*}
        & \Bigg\vert \Pr_{\mat b \sim \mathcal{P}_{\mat H, \mathcal{F}}}\Big[h\Big(\textnormal{dist'}\Big(\Psi\big(\pi\big(\phi(\mat b_1) \| \ldots \| \phi(\mat b_m)\big)\big), \mathcal{L}\Big)\Big) = 1\Big] \\
        & \qquad - \Pr_{\mat b \sim \mathcal{Q}_{\mat H, \mathcal{F}}}\Big[h\Big(\textnormal{dist'}\Big(\Psi\big(\pi\big(\phi(\mat b_1) \| \ldots \| \phi(\mat b_m)\big)\big), \mathcal{L}\Big)\Big) = 1\Big] \Bigg\vert \leq \vert\Sigma\vert^{-\Omega(k)},
    \end{align*}
    where $\mathcal{Q}_{\mat H, \mathcal{F}}$ and $\mathcal{P}_{\mat H, \mathcal{F}}$ are the null and planted distributions for Problem \ref{prob:LARP}, respectively.
\end{theorem}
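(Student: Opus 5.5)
The plan is a union bound over all possible post-hoc test functions, combined with a concentration bound for each fixed test over the randomness of $\mathcal{F}$. The key observation is that once $(\mathbb{F}, d, \mathcal{L}, \Psi)$ are fixed, every choice of $(\Phi, \pi, h)$ determines a single Boolean test
\[g_{\Phi,\pi,h}(\mat b) \coloneqq h\Big(\textnormal{dist'}\Big(\Psi\big(\pi\big(\phi_1(\mat b_1)\|\ldots\|\phi_m(\mat b_m)\big)\big),\mathcal{L}\Big)\Big),\]
a function $\Gamma^m \to \{0,1\}$. This function does not depend on $\mathcal{F}$; only the adversary's \emph{choice} of which test to use does. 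It therefore suffices to show two things: (i) for every fixed $g : \Gamma^m \to \{0,1\}$, the quantity $\big|\mathbb{E}_{\mathcal{P}_{\mat H,\mathcal{F}}}[g] - \mathbb{E}_{\mathcal{Q}_{\mat H,\mathcal{F}}}[g]\big|$ exceeds $\varepsilon = \vert\Sigma\vert^{-ck}$ (for a small constant $c$) only with probability $\exp(-\vert\Sigma\vert^{(1-o(1))k})$ over $\mathcal{F}$; and (ii) the number of distinct tuples $(\Phi,\pi,h)$ is at most $\exp(\vert\Sigma\vert^{(3/4+o(1))k})$. A union bound then gives the theorem.

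For (i), I first note that $\mathbb{E}_{\mathcal{F}}\,\mathbb{E}_{\mathcal{P}_{\mat H,\mathcal{F}}}[g] = \mathbb{E}_{\mathcal{Q}}[g]$. Indeed, for each fixed $\mat s$, the values $f_i(\mat s_{N_{\mat H}(i,1)},\ldots,\mat s_{N_{\mat H}(i,k)})$ are evaluations of independent uniformly random functions at one point each, so they are i.i.d.\ uniform on $\Gamma$; and $\mathbb{E}_{\mathcal{Q}}[g]$ does not depend on $\mathcal{F}$ at all. Next, I view $X(\mathcal{F}) \coloneqq \mathbb{E}_{\mat s}[g(\mat b(\mat s))]$ as a function of the $m\vert\Sigma\vert^k$ independent values $\{f_i(x)\}_{i\in[m], x\in\Sigma^k}$. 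Changing a single value $f_i(x)$ only affects seeds $\mat s$ whose restriction to the support of row $i$ equals $x$. These form a $\vert\Sigma\vert^{-k}$ fraction of all seeds (since row $i$ has $k$ distinct columns), so the bounded-difference constant is $\vert\Sigma\vert^{-k}$. McDiarmid's inequality then gives
\[\Pr_{\mathcal{F}}\big[|X - \mathbb{E}X| > \varepsilon\big] \le 2\exp\!\big(-2\varepsilon^2\vert\Sigma\vert^{k}/m\big) = \exp\!\big(-\vert\Sigma\vert^{(1-2c)k - o(k)}\big),\]
using $m = \vert\Sigma\vert^{o(1)}$, which holds because $\vert\Sigma\vert = (nm)^{\log^{\Theta(1)}(nm)}$. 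Expansion of $\mat H$ is not even needed for this step.

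For (ii), I count the possible choices. Each $\phi_i$ is a table of $\vert\Gamma\vert$ entries in $\mathbb{F}^d$, so $\Phi$ has at most $\vert\mathbb{F}\vert^{md\vert\Gamma\vert}$ choices. An affine $\pi$ has at most $\vert\mathbb{F}\vert^{(md)^2+md}$ choices, and $h$ has $2^{md+1}$ choices. The logarithm of the total count is therefore at most
\[(md\vert\Gamma\vert + 2(md)^2 + md + 1)\log\vert\mathbb{F}\vert \le \vert\Sigma\vert^{3k/4}\cdot m^{O(1)}\cdot \vert\Sigma\vert^{o(k)} = \vert\Sigma\vert^{(3/4+o(1))k}.\]
Here I use $\vert\Gamma\vert \le \vert\Sigma\vert^{3k/4}$, $d \le m^{O(1)}$, $\log\vert\mathbb{F}\vert \le \vert\Sigma\vert^{o(k)}$, and $m^{O(1)} = \vert\Sigma\vert^{o(1)}$. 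Choosing, say, $c = 1/20$, the failure exponent $\vert\Sigma\vert^{0.9k - o(k)}$ dominates the union-bound exponent $\vert\Sigma\vert^{0.75k + o(k)}$. Hence with probability $1 - \exp(-\vert\Sigma\vert^{\Omega(k)})$ over $\mathcal{F}$, every test simultaneously has advantage at most $\vert\Sigma\vert^{-k/20} = \vert\Sigma\vert^{-\Omega(k)}$.

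The step I expect to need the most care is the bookkeeping in (ii). The argument only works because the size of the post-hoc choice space, which is dominated by the $\phi_i$ tables of size $\vert\Gamma\vert$, stays strictly below the $\vert\Sigma\vert^{k}$ entropy that drives McDiarmid. So I must check that the constraint $\vert\Gamma\vert \le \vert\Sigma\vert^{3k/4}$, together with the bounds on $d$ and $\vert\mathbb{F}\vert$, leaves a gap of $\vert\Sigma\vert^{\Omega(k)}$ in the exponent. A secondary check is that the objects fixed before $\mathcal{F}$ is sampled, namely $(\mathbb{F},d,\mathcal{L},\Psi)$ and $\mat H$, need no union bound, since the theorem quantifies over them outside the probability.
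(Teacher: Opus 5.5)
Your proof is correct and follows the paper's architecture. You fix a test $(\Phi,\pi,h)$ and note that the null acceptance probability does not depend on $\mathcal{F}$ while the planted one has that value as its mean over $\mathcal{F}$. You then prove concentration over $\mathcal{F}$ and union bound over the $2^{\vert\Sigma\vert^{(3/4+o(1))k}}$ post-hoc choices, with essentially the paper's count.

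The one real difference is the concentration tool. The paper writes the planted probability as $\vert\Sigma\vert^{-n}\sum_{\mathbf s} X_{\mathbf s}$ and applies Janson's inequality, using a dependency graph on seeds of maximum degree at most $m\vert\Sigma\vert^{n-k}$. You instead apply McDiarmid's inequality directly to the $m\vert\Sigma\vert^{k}$ independent values $f_i(x)$, each with bounded-difference constant $\vert\Sigma\vert^{-k}$. Both give the same exponent $\varepsilon^2\vert\Sigma\vert^{k}/m$; the paper takes $\varepsilon=\vert\Sigma\vert^{-k/10}$, giving $\vert\Sigma\vert^{4k/5}/m$.

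Your route is slightly more self-contained, since it uses the product structure of $\mathcal{F}$ directly and needs no dependency-graph bookkeeping. Janson's bound requires that families of non-adjacent vertices be mutually independent, not merely pairwise independent. That does hold for the graph joining seeds that agree on some row's support, but the paper's pairwise definition of dependency graph glosses over it.

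One cosmetic slip: your overview quotes the per-test failure probability as $\exp(-\vert\Sigma\vert^{(1-o(1))k})$. With $\varepsilon=\vert\Sigma\vert^{-ck}$ it is actually $\exp(-\vert\Sigma\vert^{(1-2c)k-o(k)})$, as your own later calculation shows. This does not affect the conclusion, since $0.9k>3k/4$ for $c=1/20$.
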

\vspace{15px}
\noindent
In the proof Theorem \ref{thm:lowerboundembedding}, we need a way to formalize the dependencies between random variables.

\begin{definition}[Dependency Graph]
    Let $X_1, \ldots X_s$ be random variables. The \emph{dependency graph} for $X_1, \ldots X_s$ has one vertex for each random variable, and two variables $X_i, X_j$ are connected by an edge if and only if they are dependent.
\end{definition}

We also use a concentration bound due to Janson.

\begin{lemma}[\cite{janson2004large}, basic version of Corollary 2.2] \label{lem:janson}
    Let $X = \sum_{i \in [r]} X_i$ be the sum of $r$ random variables $X_1, \ldots, X_r$, each distributed as Bernoulli variables. Let $G$ be the dependency graph for $X_1, \ldots, X_r$. Then
    \[\Pr[X \geq \mathbb{E}[X] + t] \leq \exp{-\Omega\left(\frac{t^2}{(\Delta(G) + 1)r}\right)}, \textnormal{ and}\]
    \[\Pr[X \leq \mathbb{E}[X] - t] \leq \exp{-\Omega\left(\frac{t^2}{(\Delta(G) + 1)r}\right)},\]
    where $\Delta(G)$ is the maximum degree of $G$.
\end{lemma}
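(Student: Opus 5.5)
The statement is Janson's concentration inequality for sums with a bounded-degree dependency graph. I would prove it by fractional (equitable) coloring of the dependency graph combined with Jensen's inequality on the exponential moment, following Janson's original argument. Throughout, I read ``dependency graph'' in the standard strong sense that Janson uses: if two sets of indices $A, B$ have no edges between them, then the families $\{X_i\}_{i\in A}$ and $\{X_i\}_{i \in B}$ are independent. The argument needs this so that each color class below is \emph{mutually} independent, so I would state the convention explicitly. I also treat the $X_i$ as $\{0,1\}$-valued.

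\emph{Reduction to few independent sums.} By the Hajnal--Szemer\'edi theorem, the vertices of $G$ have an equitable proper coloring with $\Delta(G)+1$ colors. Write the color classes as $C_1,\ldots,C_{\Delta+1}$; each has size $\lfloor r/(\Delta+1)\rfloor$ or $\lceil r/(\Delta+1)\rceil$. Each class is an independent set in $G$, so its variables are mutually independent. If $r < \Delta+1$, the claimed bound is trivial: we may assume $t \le r$, since otherwise the probability is $0$, and then $t^2/((\Delta+1)r) < 1$ and the right-hand side is a constant once the $\Omega$ is chosen suitably. Otherwise every class satisfies $|C_c| \ge r/(2(\Delta+1))$.

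\emph{Exponential moment via Jensen.} Set $Y_c = \sum_{i\in C_c}(X_i - \mathbb{E}X_i)$ and $w_c = |C_c|/r$, so that $\sum_c w_c = 1$ and $X - \mathbb{E}X = \sum_c w_c (Y_c/w_c)$. Convexity of $\exp$ gives, for $\lambda>0$,
\[\mathbb{E}\, e^{\lambda(X-\mathbb{E}X)} \le \sum_c w_c\, \mathbb{E}\, e^{\lambda Y_c/w_c}.\]
Each $Y_c$ is a sum of $|C_c|$ independent centered variables with range in an interval of length $1$. Hoeffding's lemma therefore gives
\[\mathbb{E}\, e^{\lambda Y_c/w_c} \le \exp\!\left(\frac{\lambda^2 |C_c|}{8 w_c^2}\right) = \exp\!\left(\frac{\lambda^2 r^2}{8|C_c|}\right) \le \exp\!\left(\frac{\lambda^2 r(\Delta+1)}{4}\right).\]
Since the $w_c$ sum to $1$, the same bound holds for $\mathbb{E}\, e^{\lambda(X-\mathbb{E}X)}$.

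\emph{Chernoff step.} Markov's inequality gives $\Pr[X \ge \mathbb{E}X + t] \le \exp\left(-\lambda t + \lambda^2 r(\Delta+1)/4\right)$. Choosing $\lambda = 2t/(r(\Delta+1))$ yields $\exp\left(-t^2/((\Delta+1)r)\right)$, which is the upper-tail claim. The lower tail follows by applying the same argument to the variables $1 - X_i$, which have the same dependency graph.

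The only nontrivial ingredient, and the step I regard as the main obstacle, is the existence of the equitable coloring (Hajnal--Szemer\'edi). A fallback avoids it entirely: use any fractional cover of the vertices by independent sets with total weight $\Delta+1$. For instance, greedily color with $\Delta+1$ colors and then split large classes. The Jensen step works for any weights $w_c$ summing to $1$, as long as each class has size $\Omega(r/(\Delta+1))$. If a greedy coloring leaves some classes tiny, they can be merged into a single separate term handled with a weaker bound, at the cost of changing only the constant hidden in the $\Omega$.
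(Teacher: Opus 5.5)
Your proof is correct. The paper gives no argument of its own here; the lemma is imported directly from Janson. Janson's bound is $e^{-2t^2/(\chi^*(G)\,r)}$ for $\{0,1\}$-valued summands, where $\chi^*$ is the fractional chromatic number and $\chi^*(G) \le \Delta(G)+1$. So the natural comparison is with Janson's proof, and you follow it in outline: cover by independent sets, Jensen on the moment generating function, Hoeffding inside each class, then Chernoff.

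The one genuine difference is your choice of convex weights $w_c = |C_c|/r$. It makes the exponent $\lambda^2 r^2/(8|C_c|)$ depend on the smallest class, and that is the only reason you need Hajnal--Szemer\'edi. Janson instead weights the classes by $q_c = \sqrt{|C_c|}/S$ with $S = \sum_{c'} \sqrt{|C_{c'}|}$. Then every term is exactly $e^{\lambda^2 S^2/8}$, and Cauchy--Schwarz gives $S^2 \le (\Delta+1)\sum_c |C_c| = (\Delta+1) r$. Consequently:
\begin{itemize}
\item any proper $(\Delta+1)$-coloring, e.g.\ a greedy one, suffices, so the step you flag as the main obstacle disappears;
\item the constant in the exponent improves from $1$ to $2$;
\item your fallback becomes unnecessary. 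This matters, because merging tiny classes of different colors into one term would break the mutual independence Hoeffding needs inside that term.
\end{itemize}
A minor point: the case $r < \Delta+1$ that you treat separately cannot occur, since $\Delta \le r-1$.

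Your explicit use of the strong, Janson-style notion of dependency graph is essential, not pedantic. With the pairwise definition the paper actually writes down, the lemma is false. Take independent uniform bits $Z_1,\ldots,Z_k$ and the $r = 2^k-1$ variables $X_S = \bigoplus_{i\in S} Z_i$ for $\emptyset \ne S \subseteq [k]$. These are pairwise independent, so $\Delta = 0$. Yet $X = 0$ whenever all $Z_i = 0$, an event of probability $1/(r+1)$, while $\mathbb{E}[X] = r/2$. So the claimed lower-tail bound $e^{-\Omega(r)}$ at $t = r/2$ fails. The paper's one application (the proof of Theorem~\ref{thm:lowerboundembedding}) survives. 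The graph used there joins two seeds whenever they query some $f_i$ at the same point, and that is a dependency graph in the strong sense.
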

\vspace{10px}
\noindent
Now we are ready to prove the lower bound.

\begin{proof}[Proof of Theorem \ref{thm:lowerboundembedding}]
    The proof will proceed in two parts:
    \begin{enumerate}
        \item Show that if $(\mathbb{F}, d, \mathcal{L}, \Psi)$ and $(\Phi, \pi, h)$ are fixed before sampling the random functions in $\mathcal{F}$, there is a sharp tail bound showing that the test almost surely has negligible advantage.
        \item Union bound over all choices of $(\Phi, \pi, h)$, showing that with high probability there does not exist a good test even if an adversary is allowed to fix $(\Phi, \pi, h)$ after examining $\mathcal{F}$.
    \end{enumerate}
    \begin{claim} \label{claim:chooseallfirst}
        Fix an $(m, n, k)$-graph $\mat H$ and any choice of $(\mathbb{F}, d, \mathcal{L}, \Psi)$ and $(\Phi, \pi, h)$. Then sample $m$ random functions $f_i : \Sigma^k \rightarrow \Gamma$, and let $\mathcal{F}$ be the set of all $f_i$. With probability at least $1 - 2^{-\vert\Sigma\vert^{(4/5 - o(1))k}}$ over the choice of $\mathcal{F}$,
        \begin{align*}
            & \Bigg\vert \Pr_{\mat b \sim \mathcal{P}_{\mat H, \mathcal{F}}}\Big[h\Big(\textnormal{dist'}\Big(\Psi\big(\pi\big(\phi(\mat b_1) \| \ldots \| \phi(\mat b_m)\big)\big), \mathcal{L}\Big)\Big) = 1\Big] \\
            & \qquad - \Pr_{\mat b \sim \mathcal{Q}_{\mat H, \mathcal{F}}}\Big[h\Big(\textnormal{dist'}\Big(\Psi\big(\pi\big(\phi(\mat b_1) \| \ldots \| \phi(\mat b_m)\big)\big), \mathcal{L}\Big)\Big) = 1\Big] \Bigg\vert < \vert\Sigma\vert^{-k/10}.
        \end{align*}
    \end{claim}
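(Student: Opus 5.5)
The idea is to notice that once $(\mathbb{F}, d, \mathcal{L}, \Psi)$ and $(\Phi, \pi, h)$ are all fixed, the whole test collapses to a single fixed Boolean function
\[T : \Gamma^m \rightarrow \{0,1\}, \qquad T(\mat b) \coloneqq h\Big(\textnormal{dist'}\Big(\Psi\big(\pi\big(\phi(\mat b_1) \| \ldots \| \phi(\mat b_m)\big)\big), \mathcal{L}\Big)\Big).\]
This $T$ does not depend on $\mathcal{F}$. The null acceptance probability $Q(T) \coloneqq \Pr_{\mat b \sim \mathcal{Q}}[T(\mat b) = 1]$ also does not depend on $\mathcal{F}$, since $\mat b$ is uniform there. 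The planted acceptance probability is
\[P_{\mathcal{F}}(T) = \vert\Sigma\vert^{-n}\sum_{\mat s \in \Sigma^n} X_{\mat s}, \qquad X_{\mat s} \coloneqq T(\mat b(\mat s)),\]
where $\mat b(\mat s)_i = f_i(\mat s_{N_{\mat H}(i,1)}, \ldots, \mat s_{N_{\mat H}(i,k)})$. The claim therefore reduces to showing that this random quantity (random over $\mathcal{F}$) is sharply concentrated around $Q(T)$.

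\emph{Step 1 (the mean).} Fix $\mat s$. The vector $\mat b(\mat s)$ depends only on the $m$ values $f_i(\mat s|_{N_{\mat H}(i)})$, one point per function. These values are independent and uniform on $\Gamma$ because the $f_i$ are independent uniformly random functions. Hence $\mat b(\mat s)$ is uniform on $\Gamma^m$, so $\mathbb{E}_{\mathcal{F}}[X_{\mat s}] = Q(T)$, and therefore $\mathbb{E}_{\mathcal{F}}[P_{\mathcal{F}}(T)] = Q(T)$.

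\emph{Step 2 (dependency structure).} View the underlying independent randomness as the collection of values $f_i(\mathbf{x})$, ranging over $i \in [m]$ and $\mathbf{x} \in \Sigma^k$. Each $X_{\mat s}$ is a function of the $m$ coordinates $\{f_i(\mat s|_{N_{\mat H}(i)})\}_{i}$. Two indicators $X_{\mat s}$ and $X_{\mat s'}$ can be dependent only if these coordinate sets intersect, i.e. only if $\mat s|_{N_{\mat H}(i)} = \mat s'|_{N_{\mat H}(i)}$ for some row $i$. For a fixed $\mat s$, the number of such $\mat s'$ is at most $m\vert\Sigma\vert^{n-k}$. I take the dependency graph $G$ to connect exactly these pairs, so $\Delta(G) + 1 \leq 2m\vert\Sigma\vert^{n-k}$. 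Janson's setting needs more than pairwise independence across non-edges: each variable must be independent of the whole family of its non-neighbours. That holds here because $X_{\mat s}$ is a function of a coordinate set disjoint from the union of the coordinate sets of its non-neighbours, and all coordinates are mutually independent. I expect this justification to be the only point needing care.

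\emph{Step 3 (apply Lemma \ref{lem:janson}).} Apply both tails of Lemma \ref{lem:janson} with $r = \vert\Sigma\vert^n$ and $t = \vert\Sigma\vert^{n}\cdot\vert\Sigma\vert^{-k/10}$. The exponent becomes
\[\frac{t^2}{(\Delta(G)+1)r} \geq \frac{\vert\Sigma\vert^{2n - k/5}}{2m\vert\Sigma\vert^{2n-k}} = \frac{\vert\Sigma\vert^{4k/5}}{2m}.\]
Since $m \leq n^{o(k)}$ and $\vert\Sigma\vert \geq n$, we have $m = \vert\Sigma\vert^{o(k)}$. The deviation probability is therefore at most $\exp(-\Omega(\vert\Sigma\vert^{(4/5 - o(1))k})) \leq 2^{-\vert\Sigma\vert^{(4/5-o(1))k}}$, absorbing constants into the $o(1)$. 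Dividing by $\vert\Sigma\vert^n$ gives $\vert P_{\mathcal{F}}(T) - Q(T)\vert < \vert\Sigma\vert^{-k/10}$ with the stated probability.

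Expansion of $\mat H$ is not needed for this claim; only $m = \vert\Sigma\vert^{o(k)}$ and the independence of the random functions are used. The sharp $2^{-\vert\Sigma\vert^{(4/5-o(1))k}}$ tail is exactly what will later pay for the union bound over $(\Phi, \pi, h)$.
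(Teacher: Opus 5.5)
Your proposal is correct and follows essentially the same route as the paper. Both arguments fix the test so the null acceptance probability does not depend on $\mathcal{F}$, note that each $X_{\mat s}$ has mean equal to that probability, bound the dependency degree by $m\vert\Sigma\vert^{n-k}$, and apply Lemma~\ref{lem:janson} to get an exponent of order $\vert\Sigma\vert^{4k/5}/m$. Your dependency-graph justification, via disjoint sets of underlying evaluations $f_i(\mathbf{x})$, is cleaner than the paper's pairwise ``independent iff'' phrasing. Indeed, it gives the family-wise independence Janson's inequality actually needs: any two vertex sets with no edges between them, not just one vertex against its non-neighbours as you stated it.
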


    \begin{proof}
        Let $\mathcal{U}$ be the distribution over sets of functions $\mathcal{F} = \{f_i : \Sigma^k \rightarrow \Gamma\}_{i \in [m]}$ where each of the functions $f_i$ are sampled at random. Let $\mathsf{T}$ be shorthand for the test function
        \[h\Big(\textnormal{dist'}\Big(\Psi\big(\pi\big(\phi(\mat b_1) \| \ldots \| \phi(\mat b_m)\big)\big), \mathcal{L}\Big)\Big),\]
        which by definition of $h$ always outputs a value in $\{0, 1\}$. It will suffice to show that
        \begin{align*}
            \Pr_{\mathcal{F} \sim \mathcal{U}}\Bigg[\left\vert \Pr_{\mat b \sim \mathcal{P}_{\mat H, \mathcal{F}}}[\mathsf{T}(\mat b) = 1] - \Pr_{\mat b \sim \mathcal{Q}_{\mat H, \mathcal{F}}}[\mathsf{T}(\mat b) = 1] \right\vert \geq \vert\Sigma\vert^{-k/10}\Bigg]
        \end{align*}
        is at most $2^{-\vert\Sigma\vert^{(4/5 - o(1))k}}$. We only prove the upper bound for
        \begin{align} \label{eq:startinggoal}
            \Pr_{\mathcal{F} \sim \mathcal{U}}\Bigg[\left( \Pr_{\mat b \sim \mathcal{P}_{\mat H, \mathcal{F}}}[\mathsf{T}(\mat b) = 1] - \Pr_{\mat b \sim \mathcal{Q}_{\mat H, \mathcal{F}}}[\mathsf{T}(\mat b) = 1] \right) \geq \vert\Sigma\vert^{-k/10}\Bigg]
        \end{align}
        because the other case is nearly identical.\\\\
        First consider the behavior under the null distribution. By definition of $\mathsf{T}$ there exists a set $\mathcal{S} \subseteq \Gamma^m$ (which does \emph{not} depend on $\mathcal{F}$) such that $\mathsf{T}(\mat b) = 1$ if and only if $\mat b \in \mathcal{S}$. Since $\mat b$ is sampled at random in the null distribution, we have
        \[\Pr_{\mat b \sim \mathcal{Q}_{\mat H, \mathcal{F}}}[\mathsf{T}(\mat b) = 1] = \frac{\vert\mathcal{S}\vert}{\vert\Gamma\vert^m}\]
        for all sets of functions $\mathcal{F}$. Thus (\ref{eq:startinggoal}) can be written as
        \begin{align}
            \Pr_{\mathcal{F} \sim \mathcal{U}}\Bigg[\left(\Pr_{\mat b \sim \mathcal{P}_{\mat H, \mathcal{F}}}[\mathsf{T}(\mat b) = 1] - \frac{\vert\mathcal{S}\vert}{\vert\Gamma\vert^m}\right) \geq \vert\Sigma\vert^{-k/10}\Bigg]. \label{eq:factoredoutnull}
        \end{align}
        \vspace{10px}
        
        \noindent
        Now consider the behavior under the planted distribution. Recall that we sample $\mat b \in \Gamma^m$ by first sampling $\mat s \in \Sigma^n$ uniformly at random, and then setting $\mat b_i = f_i(\mat s_{N_{\mat H}(i, 1)}, \ldots, \mat s_{N_{\mat H}(i, k)})$. Let $\mathcal{P}_{\mat H, \mathcal{F}}(\mat s)$ be the function which outputs this vector $\mat b$ for a given vector $\mat s$. Because $\mathsf{T}$ always outputs a value in $\{0, 1\}$ (which we interpret as a real number), and because $\mat s \in \Sigma^n$ is sampled uniformly at random, (\ref{eq:factoredoutnull}) can be written as
        \begin{align*}
            \Pr_{\mathcal{F} \sim \mathcal{U}}\Bigg[\left(\frac{1}{\vert\Sigma\vert^n}\sum_{\mat s \in \Sigma^n} \mathsf{T}(\mathcal{P}_{\mat H, \mathcal{F}}(\mat s)) - \frac{\vert\mathcal{S}\vert}{\vert\Gamma\vert^m}\right) \geq \vert\Sigma\vert^{-k/10}\Bigg],
        \end{align*}
        which is equivalent to
        \begin{align}
            \Pr_{\mathcal{F} \sim \mathcal{U}}\Bigg[\left(\sum_{\mat s \in \Sigma^n} \mathsf{T}(\mathcal{P}_{\mat H, \mathcal{F}}(\mat s)) - \frac{\vert\mathcal{S}\vert\vert\Sigma\vert^n}{\vert\Gamma\vert^m}\right) \geq \vert\Sigma\vert^{-k/10}\vert\Sigma\vert^n\Bigg]. \label{eq:putinsum}
        \end{align}
        Let $\{X_{\mat s}\}_{\mat s \in \Sigma^n}$ be the set of 0/1 valued random variables defined as $X_{\mat s} = \mathsf{T}(\mathcal{P}_{\mat H, \mathcal{F}}(\mat s))$, and denote their sum as $X = \sum_{\mat s \in \Sigma^n} X_{\mat s}$. By the randomness of the functions in $\mathcal{F}$, for all $\mat s \in \Sigma^n$ we have
        \[\mathbb{E}_{\mathcal{F} \sim \mathcal{U}}[X_{\mat s}] = \frac{\vert\mathcal{S}\vert}{\vert\Gamma\vert^m},\]
        which implies that
        \[\mathbb{E}_{\mathcal{F} \sim \mathcal{U}}[X] = \frac{\vert\mathcal{S}\vert\vert\Sigma\vert^n}{\vert\Gamma\vert^m}.\]
        Therefore (\ref{eq:putinsum}) is equivalent to
        \begin{align*}
            \Pr_{\mathcal{F} \sim \mathcal{U}}\left[X \geq \mathbb{E}_{\mathcal{F} \sim \mathcal{U}}[X] + \vert\Sigma\vert^{-k/10}\vert\Sigma\vert^n\right].
        \end{align*}
        \vspace{10px}
        
        \noindent
        From here our plan is give a concentration bound. Observe that $X_{\mat s}$ and $X_{\mat s'}$ are independent if and only if, for all $i \in [m]$, we have $(\mat s_{N_{\mat H}(i, 1)}, \ldots, \mat s_{N_{\mat H}(i, k)}) \neq (\mat s'_{N_{\mat H}(i, 1)}, \ldots, \mat s'_{N_{\mat H}(i, k)})$. This is because the evaluation of each function $f_i$ is independently random on each of its inputs. So the dependency graph $G$ for the random variables $X_{\mat s}$ satisfies $\Delta(G) \leq \frac{m\vert\Sigma\vert^n}{\vert\Sigma\vert^k}$. Applying Lemma \ref{lem:janson}, we get
        \begin{align*}
            \Pr_{\mathcal{F} \sim \mathcal{U}}\left[X \geq \mathbb{E}_{\mathcal{F} \sim \mathcal{U}}[X] + \vert\Sigma\vert^{-k/10}\vert\Sigma\vert^n\right] \leq & 2^{-\Omega\left(\frac{\vert\Sigma\vert^{2n}\vert\Sigma\vert^{-k/5}}{\vert\Sigma\vert^{2n}\vert\Sigma\vert^{-k}m}\right)} \\
            \leq & 2^{-\Omega\left(\vert\Gamma\vert^{4k/5}/m\right)}
        \end{align*}
        Because $\vert\Sigma\vert = m^{\omega(1)}$, this probability is upper bounded as $2^{-\vert\Sigma\vert^{(4/5 - o(1))k}}$.
    \end{proof}
    
    All that's left is to union bound over the choices for $(\Phi, \pi, h)$. Using that $2 \leq \vert\mathbb{F}\vert \leq 2^{\vert\Gamma\vert^{o(1)}}$, $d \leq m^{O(1)}$, $\vert\Gamma\vert \leq \vert\Sigma\vert^{3k/4}$, and $\vert\Sigma\vert = m^{\omega(1)}$, we have
    \begin{enumerate}
        \item $(\vert\mathbb{F}\vert^d)^{m\vert\Gamma\vert} \leq 2^{\vert\Sigma\vert^{(3/4 + o(1))k}}$ choices for $\Phi = \{\phi_i : \Gamma \rightarrow \mathbb{F}^d\}_{i \in [m]}$.
        \item $\vert\mathbb{F}\vert^{O((md)^2)} \leq 2^{\vert\Sigma\vert^{o(k)}}$ choices for $\pi : \mathbb{F}^{md} \rightarrow \mathbb{F}^{md}$.
        \item $2^{md + 1} = 2^{\vert\Sigma\vert^{o(1)}}$ choices for $h : \{0, 1, \ldots, md\} \rightarrow \{0, 1\}$.
    \end{enumerate}
    So the total number of choices for $(\Phi, \pi, h)$ is $2^{\vert\Sigma\vert^{(3/4 + o(1))k}}$. Using Claim \ref{claim:chooseallfirst} we know that for a fixed choice of $(\Phi, \pi, h)$, with probability at least $1 - 2^{-\vert\Sigma\vert^{(4/5 - o(1))k}}$ over the choice of $\mathcal{F}$,
    \begin{align*}
        & \Bigg\vert \Pr_{\mat b \sim \mathcal{P}_{\mat H, \mathcal{F}}}\Big[h\Big(\textnormal{dist'}\Big(\Psi\big(\pi\big(\phi(\mat b_1) \| \ldots \| \phi(\mat b_m)\big)\big), \mathcal{L}\Big)\Big) = 1\Big] \\
        & \qquad - \Pr_{\mat b \sim \mathcal{Q}_{\mat H, \mathcal{F}}}\Big[h\Big(\textnormal{dist'}\Big(\Psi\big(\pi\big(\phi(\mat b_1) \| \ldots \| \phi(\mat b_m)\big)\big), \mathcal{L}\Big)\Big) = 1\Big] \Bigg\vert < \vert\Sigma\vert^{-k/10}.
    \end{align*}
    Thus with probability at least
    \[1 - 2^{-\vert\Sigma\vert^{(4/5 - o(1))k}} \cdot 2^{\vert\Sigma\vert^{(3/4 + o(1))k}} > 1 - 2^{-\vert\Sigma\vert^{\Omega(k)}}\]
    over the choice of $\mathcal{F}$, every choice of $(\Phi, \pi, h)$ satisfies
    \begin{align*}
        & \Bigg\vert \Pr_{\mat b \sim \mathcal{P}_{\mat H, \mathcal{F}}}\Big[h\Big(\textnormal{dist'}\Big(\Psi\big(\pi\big(\phi(\mat b_1) \| \ldots \| \phi(\mat b_m)\big)\big), \mathcal{L}\Big)\Big) = 1\Big] \\
        & \qquad - \Pr_{\mat b \sim \mathcal{Q}_{\mat H, \mathcal{F}}}\Big[h\Big(\textnormal{dist'}\Big(\Psi\big(\pi\big(\phi(\mat b_1) \| \ldots \| \phi(\mat b_m)\big)\big), \mathcal{L}\Big)\Big) = 1\Big] \Bigg\vert < \vert\Sigma\vert^{-k/10}.
    \end{align*}
\end{proof}

\subsection{Low Degree Polynomial Algorithms} \label{sec:lowdegreealgorithms}
In this sub-section we rule out low degree polynomial algorithms for Problem \ref{prob:LARP}, up to a nearly optimal degree cutoff. As in Section \ref{sec:lowcomplexembed}, such a lower bound is possible because of the massive entropy coming from the random functions in $\mathcal{F}$; the \emph{signal-to-noise ratio} is very small.

\paragraph{Input Formulation.} Recall Problem \ref{prob:LARP} (which is just the corruption-free version of the decision problem in Conjecture \ref{conj:LARP-CSP}, the LARP-CSP Conjecture):

\paragraph{Problem \ref{prob:LARP} \textnormal{(Restated)}.}
    \emph{Let $\mat H$ be any $(1-o(1), n^{1 - o(1)})$-expanding $(m, n, k)$-matrix, where $k = (\log n)^{\Theta(1)}$ and $m \leq n^{o(k)}$. Let $\Sigma, \Gamma$ be any alphabets satisfying $\vert\Sigma\vert = (nm)^{\log^{\Theta(1)}(nm)}$ and $\vert\Gamma\vert \leq \vert\Sigma\vert^{3k/4}$. Sample $m$ random functions $f_i : \Sigma^k \rightarrow \Gamma$, and let $\mathcal{F}$ be the set of all $f_i$. The task is to distinguish between the following two distributions.
    \begin{enumerate}
        \item Null distribution $\mathcal{Q}$: $(\mat H, \mathcal{F}, \mat b)$, where $\mat b \in \Gamma^m$ is sampled uniformly at random.
        \item Planted distribution $\mathcal{P}$: $(\mat H, \mathcal{F}, \mat b)$, where we sample $\mat s \in \Sigma^n$ at random and set
        \[\mat b_i = f_i(\mat s_{N_{\mat H}(i, 1)}, \ldots, \mat s_{N_{\mat H}(i, k)}).\]
    \end{enumerate}}

To analyze this problem in terms of low degree polynomials, we need to find the right input formulation. Indeed if we only examine the pair $(\mat H, \mat b)$ then $\mat b$ is statistically random, because each $f_i$ is a random function. We will instead represent the problem using a carefully chosen representation of the set of all tuples $(\sigma_1, \ldots, \sigma_k) \in \Sigma^k$ such that there exists an $i \in [m]$ with $f_i(\sigma_1, \ldots, \sigma_k) = \mat b_i$. This succinctly combines the information contained in $\mathcal{F}$ and $\mat b$, and it aligns with the fact that Problem \ref{prob:LARP} can be viewed equivalently as a planted hypergraph problem. Concretely speaking, via the randomness of the functions $f_i$, Problem \ref{prob:LARP} is equivalent to the following:

\begin{problemcustom} \label{prob:LARPhypergraph}
    Let $\mat H$ be any $(1 - o(1), n^{1 - o(1)})$-expanding $(m, n, k)$-matrix, where $k = (\log n)^{\Theta(1)}$ and $m \leq n^{o(k)}$. Let $\Sigma, \Gamma$ be any alphabets satisfying $\vert\Sigma\vert = (nm)^{\log^{\Theta(1)}(nm)}$ and $\vert\Gamma\vert \leq \vert\Sigma\vert^{3k/4}$. Define
    \[\mathcal{X} \coloneqq \{(j_1, \ldots, j_k) : \exists i \in [m] \text{ s.t. the nonzero entries of $\mat H_i$ are in columns } j_1, \ldots, j_k\}.\]
    Sample a hypergraph $K = (V, E)$ with vertex set $V = [n] \times \Sigma$, where the edge set $E$ is sampled in one of two ways:
    \begin{enumerate}
        \item Null distribution $\mathcal{Q}$: For all $(j_1, \ldots, j_k) \in \mathcal{X}$, for all $(\sigma_{j_1}, \ldots, \sigma_{j_k}) \in \Sigma^k$, add the edge $((j_1, \sigma_{j_1}), \ldots, \\ (j_k, \sigma_{j_k}))$ independently with probability $1/\vert\Gamma\vert$.
        \item Planted distribution $\mathcal{P}$: Perform the same procedure as in distribution $\mathcal{Q}$. Then sample $\mat s \in \Sigma^k$ at random, and for all $(j_1, \ldots, j_k) \in \mathcal{X}$ add the edge $((j_1, \mat s_{j_1}), \ldots, (j_k, \mat s_{j_k}))$ if it doesn't already exist.
    \end{enumerate}
    The task is to determine, given $(\mat H, K)$, which distribution $K$ was sampled from.
\end{problemcustom}

\begin{remark}
    The pair $(\mat H, K)$ contains significantly less information than the tuple $(\mat H, \mathcal{F}, \mat b)$, because in Problem \ref{prob:LARP} we are given the entire truth table for each function $f_i$ whereas in this problem we only learn, for all $i \in [m]$, the set of all $(\sigma_{j_1}, \ldots, \sigma_{j_k})$ such that $f_i(\sigma_{j_1}, \ldots, \sigma_{j_k}) = \mat b_i$. But all of the lost information is ``irrelevant,'' in the sense that we can efficiently solve Problem \ref{prob:LARPhypergraph} using an oracle for Problem \ref{prob:LARP} by just re-sampling the missing truth table entries from an appropriate distribution to simulate the tuple $(\mat H, \mathcal{F}, \mat b)$.
\end{remark}

\paragraph{Interpretation as a Vector Over the Reals.}
We assume that our low degree polynomial algorithms take as input the \emph{real-valued indicator vector} for $(\mat H, K)$. In particular, $\mat H$ is represented over the reals by mapping 0 to 0 and 1 to 1, and $K$ is represented as vector with one coordinate for every tuple $((j_1, \mat s_{j_1}), \ldots, (j_k, \mat s_{j_k}))$, where the coordinate is set to 0 if the corresponding hyperedge is not present in $K$ and 1 if the corresponding hyperedge is present in $K$. This is the natural formulation used to prove lower bounds for planted graph and hypergraph problems \cite{mardia2024low, dhawan2025detection}.

\paragraph{Low Degree Polynomial Algorithms.} Consider an arbitrary hypothesis testing problem, with a null distribution $\mathcal{Q}$ over real vectors $Y$ and a planted distribution $\mathcal{P}$ over real vectors $Y$ of the same length. As discussed in \cite{hopkins2018statistical, kunisky2019notes} a standard way to formalize the distinguishing power of a polynomial $g$ is to examine the ratio
\begin{align} \label{eq:ratiobasic}
    \frac{\mathbb{E}_{Y \sim \mathcal{P}}[g(Y)] - \mathbb{E}_{Y \sim \mathcal{Q}}[g(Y)]}{\sqrt{\textnormal{Var}_{Y \sim Q}[g(Y)]}}.
\end{align}
If this ratio is large, then $g(Y)$ will typically be much larger under the planted distribution than under the null distribution, even after accounting for the random fluctuations in the output of $g(Y)$ under the null distribution. The \emph{degree-$d$ advantage}\footnote{In some works this is referred to as the ``norm of the degree-$d$ likelihood ratio.''} is defined as
\begin{align*}
    \max_{g \text{ is a degree } \leq d \text{ polynomial}}\frac{\mathbb{E}_{Y \sim \mathcal{P}}[g(Y)] - \mathbb{E}_{Y \sim \mathcal{Q}}[g(Y)]}{\sqrt{\textnormal{Var}_{Y \sim Q}[g(Y)]}}.
\end{align*}
We say that degree-$\leq d$ polynomials fail to distinguish between $\mathcal{P}$ and $\mathcal{Q}$ when the degree-$d$ advantage is $o(1)$ \cite{hopkins2018statistical, kunisky2019notes}. In many ways degree-$\leq d$ polynomials are a proxy for ``combinatorial'' algorithms that run in time $n^{d/\log^{O(1)}n}$ \cite{kunisky2019notes}, so if degree-$\leq d$ polynomials are ineffective for the hypothesis testing problem this suggests that $n^{d/\log^{O(1)}n}$ time ``combinatorial'' algorithms are also ineffective, where $n$ is the number of coordinates in $Y$.

Now we specialize these ideas for Problem \ref{prob:LARPhypergraph}. Observe that $\mat H$ is the same in both the null and planted distributions, so we can assume without loss of generality that our polynomial does not depend on $\mat H$. We know that all tuples $((j_1, \sigma_{j_1}), \ldots, (j_k, \sigma_{j_k}))$ such that $(j_1, \ldots, j_k) \not\in \mathcal{X}$ will never appear as an edge in $K$, so we can also assume without loss of generality that our polynomial does not depend on these. In light of this, define $Y$ as the length $m' = m\vert\Sigma\vert^k$ \emph{real-valued indicator vector} for the remaining possible hyperedges, i.e. the tuples $((j_1, \sigma_{j_1}), \ldots, (j_k, \sigma_{j_k}))$ such that $(j_1, \ldots, j_k) \in \mathcal{X}$. We have one coordinate in $Y$ for each of these tuples; the coordinate is set to 0 if the corresponding hyperedge is not present in $K$ and 1 if the corresponding hyperedge is present in $K$.

We may assume without loss of generality that $\mathbb{E}_{Y \sim \mathcal{Q}}[g(Y)] = 0$, because if $\mathbb{E}_{Y \sim \mathcal{Q}}[g(Y)] \neq 0$ then the polynomial $g' = g - \mathbb{E}_{Y \sim \mathcal{Q}}[g(Y)]$ has the same ratio (\ref{eq:ratiobasic}) and degree, but satisfies $\mathbb{E}_{Y \sim \mathcal{Q}}[g'(Y)] = 0$. Thus we can rewrite (\ref{eq:ratiobasic}) as
    \begin{align*}
        \frac{\mathbb{E}_{Y \sim \mathcal{P}}[g(Y)]}{\sqrt{\mathbb{E}_{Y \sim \mathcal{Q}}[g(Y)^2]}}.
    \end{align*}
    
Because every coordinate of $Y$, even after normalization (see below), will either equal $x$ or $y$, we can also assume without loss of generality that $g$ is multilinear. This is because every factor $Y_{((j_1, \sigma_{j_1}), \ldots, (j_k, \sigma_{j_k}))}^x$ with $x \neq 1$ can be replaced with a linear function $(aY_{((j_1, \sigma_{j_1}), \ldots, (j_k, \sigma_{j_k}))} + b)$, where $a$ and $b$ are chosen so that the linear function and the original factor agree when $Y_{((j_1, \sigma_{j_1}), \ldots, (j_k, \sigma_{j_k}))}$ is either $x$ or $y$.

All of this is to say that, if we want to show that the degree-$d$ advantage for Problem \ref{prob:LARPhypergraph} is $o(1)$, it will suffice to prove that
\begin{align*}
    \max\frac{\mathbb{E}_{Y \sim \mathcal{P}}[g(Y)]}{\sqrt{\mathbb{E}_{Y \sim \mathcal{Q}}[g(Y)^2]}} = o(1),
\end{align*}
where the optimization ranges over all degree-$\leq d$ multilinear polynomials $g$ such that $\mathbb{E}_{Y \sim \mathcal{Q}}[g(Y)] = 0$.

\paragraph{Normalization.} To simplify the behavior of $g(Y)$ under the null distribution $\mathcal{Q}$, we define a \emph{normalization mapping} $\phi : {0, 1} \rightarrow \mathbb{R}$ that will be applied to each coordinate of $Y$. Recall that under $\mathcal{Q}$, every coordinate of $Y$ is sampled independently from Ber$(\vert\Gamma\vert^{-1})$. Setting $p \coloneqq \vert\Gamma\vert^{-1}$, we define $\phi$ as $\phi(0) \coloneqq -\sqrt{\frac{p}{1-p}}$ and $\phi(1) \coloneqq \sqrt{\frac{1-p}{p}}$. This gives, for all coordinates $((j_1, \sigma_1), \ldots, (j_k, \sigma_k))$,
    \begin{align} \label{eq:normalexpectation}
        \mathbb{E}_{Y \sim \mathcal{Q}}[\phi(Y_{((j_1, \sigma_{j_1}), \ldots, (j_k, \sigma_{j_k}))})] = -\sqrt{\frac{p}{1-p}} \cdot (1 - p) + \sqrt{\frac{1-p}{p}} \cdot p = 0
    \end{align}
    and
    \begin{align} \label{eq:normalvar}
        \mathbb{E}_{Y \sim \mathcal{Q}}[\phi(Y_{((j_1, \sigma_{j_1}), \ldots, (j_k, \sigma_{j_k}))})^2] = \frac{p}{1-p} \cdot (1-p) + \frac{1-p}{p} \cdot p = p + (1-p) = 1.
    \end{align}

Let $Y^{(\phi)}$ be shorthand for the vector obtained by applying $\phi$ to each coordinate of $Y$. Normalization is a linear mapping, so we can assume without loss of generality that our polynomials $g$ take as input $Y^{(\phi)}$; any degree $d$ polynomial with input $Y$ can be converted into a degree $d$ polynomial with input $Y^{(\phi)}$ and identical output distribution.

Recall that $Y$ will be of length $m' = m\vert\Sigma\vert^k$. We make use of the following simple but powerful lemma.

\begin{lemma}[Based on \cite{hopkins2018statistical, kunisky2019notes}, specialized to Problem \ref{prob:LARPhypergraph}] \label{lem:innerproduct}
    Let $\mathcal{Q}$ and $\mathcal{P}$ be the null and planted distributions, respectively, from Problem \ref{prob:LARPhypergraph}, and let $\phi$ be the normalization mapping defined as $\phi(0) \coloneqq -\sqrt{\frac{\vert\Gamma\vert^{-1}}{1-\vert\Gamma\vert^{-1}}}$ and $\phi(1) \coloneqq \sqrt{\frac{1-\vert\Gamma\vert^{-1}}{\vert\Gamma\vert^{-1}}}$. For any degree cutoff $d \leq m'$, let $\mat c$ be the length-$\left(\binom{m'}{\leq d} - 1\right)$ vector of expected values, under the distribution $\mathcal{P}$, for all multilinear monomials of degree at least $1$ and at most $d$ in the coordinates of $Y^{(\phi)}$. If
    \[\|\mat c\|_2 = o(1),\]
    then the degree-$d$ advantage for Problem \ref{prob:LARPhypergraph} is $o(1)$.
\end{lemma}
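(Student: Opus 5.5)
The approach is the standard orthonormal-basis argument for low degree advantage, as in \cite{hopkins2018statistical, kunisky2019notes}, specialized to the normalized input $Y^{(\phi)}$. By the reductions preceding the lemma, it suffices to bound
\[\max_g \frac{\mathbb{E}_{Y \sim \mathcal{P}}[g(Y^{(\phi)})]}{\sqrt{\mathbb{E}_{Y \sim \mathcal{Q}}[g(Y^{(\phi)})^2]}},\]
where $g$ ranges over multilinear polynomials of degree at most $d$ with $\mathbb{E}_{Y\sim\mathcal{Q}}[g] = 0$. We expand $g$ in the monomial basis $\chi_S(Y^{(\phi)}) = \prod_{e \in S} Y^{(\phi)}_e$, with $S$ ranging over subsets of the $m'$ coordinates and $|S| \le d$, writing $g = \sum_S \hat g_S \chi_S$.

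The first step is to show that the $\chi_S$ are orthonormal under $\mathcal{Q}$. Under $\mathcal{Q}$ the coordinates of $Y$ are independent $\mathrm{Ber}(|\Gamma|^{-1})$. Hence for $S \ne T$ we have $\mathbb{E}_\mathcal{Q}[\chi_S\chi_T] = \prod_{e \in S\triangle T}\mathbb{E}[Y^{(\phi)}_e]\cdot\prod_{e\in S\cap T}\mathbb{E}[(Y^{(\phi)}_e)^2]$, which is $0$ by (\ref{eq:normalexpectation}). For $S=T$ the same product equals $1$ by (\ref{eq:normalvar}). In particular $\mathbb{E}_\mathcal{Q}[g] = \hat g_\emptyset$, so the centering constraint forces $\hat g_\emptyset = 0$. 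It also gives $\mathbb{E}_\mathcal{Q}[g^2] = \sum_{1\le|S|\le d}\hat g_S^2 = \|\hat g\|_2^2$.

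The second step handles the numerator. By linearity, $\mathbb{E}_\mathcal{P}[g] = \sum_{1\le |S|\le d}\hat g_S\,\mathbb{E}_\mathcal{P}[\chi_S] = \langle \hat g, \mat c\rangle$, where $\mat c$ is exactly the vector of planted expectations of the nonconstant monomials of degree at most $d$ from the statement. Cauchy--Schwarz then gives a ratio of at most $\|\mat c\|_2$, with equality at $\hat g = \mat c$. So the degree-$d$ advantage equals $\|\mat c\|_2$, and the hypothesis $\|\mat c\|_2 = o(1)$ finishes the proof.

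None of these steps is a real obstacle. The only care needed is to confirm that the reductions made earlier (ignoring $\mat H$ and the impossible hyperedges, multilinearity, centering, and linear renormalization by $\phi$) preserve both the degree and the ratio. That way the maximum over the restricted class truly equals the degree-$d$ advantage as defined. This is immediate, since $\phi$ is an invertible affine map applied coordinatewise and multilinearization is exact on the two-point support.
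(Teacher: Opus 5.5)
Your proposal is correct and follows the paper's proof essentially step for step. Both arguments use the orthonormality of the multilinear monomials in $Y^{(\phi)}$ under $\mathcal{Q}$ via the normalization identities, the vanishing of the constant coefficient from centering, the identity $\mathbb{E}_{\mathcal{P}}[g] = \langle \hat{g}, \mathbf{c}\rangle$, and maximization of $\langle \hat{g}, \mathbf{c}\rangle/\|\hat{g}\|_2$ at $\hat{g} \propto \mathbf{c}$, which gives the value $\|\mathbf{c}\|_2$.
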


\begin{proof}
As established previously, it will suffice to show that $\|\mat c\|_2 = o(1)$ implies
\begin{align*}
    \max\frac{\mathbb{E}_{Y \sim \mathcal{P}}[g(Y^{(\phi)})]}{\sqrt{\mathbb{E}_{Y \sim \mathcal{Q}}[g(Y^{(\phi)})^2]}} = o(1),
\end{align*}
where the optimization ranges over all degree-$\leq d$ multilinear polynomials $g$ such that $\mathbb{E}_{Y \sim \mathcal{Q}}[g(Y^{(\phi)})] = 0$.

Let $\mat{\hat{g}}$ be the length-$\sum_{i = 1}^d \binom{m'}{i}$ vector of coefficients for a degree-$\leq d$ multilinear polynomial $g$ in the coordinates of $Y^{(\phi)}$. By normalization, we know that the expected value for each monomial under $\mathcal{Q}$ is zero, so if $\mathbb{E}_{T \sim \mathcal{P}}[g(T^{(\phi)})] = 0$ then the constant term is zero. As such we omit the constant term from $\mat{\hat{g}}$.

By definition of $\mat{\hat{g}}$ and $\mat c$, we have $\mathbb{E}_{Y \sim \mathcal{P}}[g(Y^{(\phi)})] = \langle\mat{\hat{g}}, \mat c\rangle$, where $\langle\cdot, \cdot\rangle$ is the standard inner product. Now we analyze $\mathbb{E}_{Y \sim \mathcal{Q}}[g(Y^{(\phi)})^2]$. For a multilinear monomial $M \in \mathcal{M}(m', d)$, we know by normalization that $\mathbb{E}_{Y \sim \mathcal{Q}}[M(Y^{(\phi)})^2] = 1$. For two monomials $M \neq M'$, we know that $\mathbb{E}_{Y \sim \mathcal{Q}}[M(Y^{(\phi)})M'(Y^{(\phi)})] = 0$, because the product must depend linearly on at least one coordinate of $Y^{(\phi)}$, and this normalized coordinate has an expected value of zero independently of all other coordinates. Thus if we index $\mat{\hat{g}}$ by the non-constant multilinear monomials $M \in \mathcal{M}(m', d)$,
\begin{align*}
    \mathbb{E}_{Y \sim \mathcal{Q}}[g(Y)^2] = & \sum_{M, M' \in \mathcal{M}(m', d) \text{ such that } M, M' \neq 1} \mat{\hat{g}}_{M}\mat{\hat{g}}_{M'}\mathbb{E}_{Y \sim \mathcal{Q}}[M(Y^{(\phi)})M'(Y^{(\phi)})] \\
    = & \sum_{M \in \mathcal{M}(m', d) \text{ such that } M \neq 1} \mat{\hat{g}}_{M}^2 \\
    = & \langle\mat{\hat{g}}, \mat{\hat{g}}\rangle.
\end{align*}
This gives
\begin{align*}
    \max\frac{\mathbb{E}_{Y \sim \mathcal{P}}[g(Y^{(\phi)})]}{\sqrt{\mathbb{E}_{Y \sim \mathcal{Q}}[g(Y^{(\phi)})^2]}} = \max\frac{\langle\mat{\hat{g}}, \mat c\rangle}{\sqrt{\langle\mat{\hat{g}}, \mat{\hat{g}}\rangle}}.
\end{align*}
This is maximized when $\mat{\hat{g}}$ is a positive scalar multiple of $\mat c$, in which case the above quantity is $\| \mat c\|_2$. So if $\|\mat c\|_2 = o(1)$ then the degree-$d$ advantage is $o(1)$.
\end{proof}

\subsubsection{The Low Degree Polynomial Lower Bound}
Our lower bound may be stated as follows. It's nearly tight, as there exists a trivial degree-$n\log^{O(1)}n$ algorithm that solves Problem \ref{prob:LARPhypergraph}.

\begin{theorem} \label{thm:lowdegreealgorithms}
    The degree-$n^{0.99}$ advantage for Problem \ref{prob:LARPhypergraph} is $o(1)$. In other words, for all degree-$\leq n^{0.99}$ polynomials $g$,
    \[\frac{\mathbb{E}_{Y \sim \mathcal{P}}[g(Y)] - \mathbb{E}_{Y \sim \mathcal{Q}}[g(Y)]}{\sqrt{\textnormal{Var}_{Y \sim \mathcal{Q}}[g(Y)]}} = o(1).\]
\end{theorem}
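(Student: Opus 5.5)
By Lemma \ref{lem:innerproduct}, it suffices to show that $\|\mat c\|_2 = o(1)$ for $d = n^{0.99}$, where $\mat c$ collects the planted expectations of all non-constant multilinear monomials of degree at most $d$ in $Y^{(\phi)}$. The plan is to compute these expectations exactly, sort them by the combinatorial structure of the monomial, and then bound the resulting sum using the expansion of $\mat H$.

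\textbf{Step 1: exact formula for a monomial's planted expectation.} A monomial $M$ is indexed by a set $A$ of at most $d$ candidate hyperedges $e = ((j_1,\sigma_1),\ldots,(j_k,\sigma_k))$ with $(j_1,\ldots,j_k) \in \mathcal{X}$. Condition on $\mat s$. Each coordinate $Y_e$ is then independent of the others.
\begin{itemize}
\item If $e$ is not the planted edge for its row, then $Y_e \sim \mathrm{Ber}(p)$, so $\mathbb{E}[\phi(Y_e)] = 0$.
\item If $e$ is the planted edge, then $Y_e = 1$, so $\mathbb{E}[\phi(Y_e)] = \phi(1) = \sqrt{(1-p)/p} \le |\Gamma|^{1/2}$.
\end{itemize}
Hence $\mathbb{E}_{\mathcal{P}}[M] = \Pr_{\mat s}[\text{every } e \in A \text{ is planted}] \cdot \phi(1)^{|A|}$.

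For this probability to be nonzero, $A$ must use at most one edge per row of $\mat H$. The edges must also be label-consistent: they assign a single symbol to each variable they touch. Let $R \subseteq [m]$ be the set of rows used, with $|R| = |A| = t$, and let $V(R)$ be the set of columns these rows touch. The probability is then exactly $|\Sigma|^{-|V(R)|}$.

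\textbf{Step 2: summing over monomials.} Since $\mathbb{E}_{\mathcal{P}}[M]^2 = |\Sigma|^{-2|V(R)|}\phi(1)^{2t}$, we get
\[
\|\mat c\|_2^2 = \sum_{t=1}^{d} \sum_{R \subseteq [m],\, |R| = t} \bigl(\#\text{consistent labelings of } V(R)\bigr) \cdot |\Sigma|^{-2|V(R)|} \cdot \phi(1)^{2t}.
\]
The number of consistent labelings of $V(R)$ is $|\Sigma|^{|V(R)|}$, so the inner term becomes $|\Sigma|^{-|V(R)|}\phi(1)^{2t} \le |\Sigma|^{-|V(R)|}|\Gamma|^{t}$. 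Therefore
\[
\|\mat c\|_2^2 \le \sum_{t=1}^{d} \binom{m}{t} |\Gamma|^t \max_{|R|=t} |\Sigma|^{-|V(R)|}.
\]

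\textbf{Step 3: expansion (the main obstacle).} For $t \le n^{1-o(1)}$, expansion gives $|V(R)| \ge (1-o(1))kt$. Combined with $|\Gamma| \le |\Sigma|^{3k/4}$, each summand is at most
\[
\bigl(m\,|\Sigma|^{3k/4}\,|\Sigma|^{-(1-o(1))k}\bigr)^t = \bigl(m\,|\Sigma|^{-(1/4-o(1))k}\bigr)^t.
\]
Since $|\Sigma| = m^{\omega(1)}$, this is $|\Sigma|^{-\Omega(kt)}$, and the sum over $t \ge 1$ is $o(1)$.

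The delicate point is the range of $d$. The expansion guarantee only covers $t \le n^{1-o(1)}$, and $d = n^{0.99}$ lies inside this range for large $n$. This is exactly why the bound stops near degree $n$: once $t \gtrsim n$, we have $|V(R)| \le n$ while $|\Gamma|^t$ keeps growing, and the trivial degree-$n\log^{O(1)}n$ distinguisher takes over.

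I would double-check two things. First, the $o(1)$ loss in the expansion exponent, multiplied by $k$, must stay well below $k/4$; it does, since that loss is $o(k)$. Second, monomials with two edges from the same row contribute zero, because two distinct planted edges cannot share a row. Taking the square root then gives $\|\mat c\|_2 = o(1)$, and Lemma \ref{lem:innerproduct} finishes the proof.
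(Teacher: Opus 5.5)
Your proposal is correct and follows essentially the same route as the paper. Both reduce via Lemma~\ref{lem:innerproduct} to bounding $\|\mat c\|_2^2$, write each monomial's planted expectation as $\phi(1)^{|A|}$ times the probability that all its edges are planted, discard monomials with two edges from one row, and use the $(1-o(1), n^{1-o(1)})$-expansion of $\mat H$ to obtain a factor $|\Sigma|^{-(1-o(1))kt}$. The only difference is bookkeeping: you group monomials by row set $R$ and count the $|\Sigma|^{|V(R)|}$ consistent labelings exactly, whereas the paper bounds the number of size-$t$ monomials crudely by $(m|\Sigma|^k)^t$ and squares the probability bound, and both give the same $|\Sigma|^{-(1/4-o(1))kt}$ per-degree bound.
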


\begin{proof}
    By Lemma \ref{lem:innerproduct}, it will suffice to show that $\|\mat c\|_2 = o(1)$, where $\mat c$ is the length-$\left(\binom{m'}{\leq n^{0.99}} - 1\right)$ vector of expected values, under the distribution $\mathcal{P}$, for all multilinear monomials of degree at least $1$ and at most $\leq n^{0.99}$ in the coordinates of $Y^{(\phi)}$. This is equivalent to showing that $\|\mat c\|_2^2 = o(1)$, or in other words
    \begin{align} \label{eq:startingsquaredsum}
        \sum_{M \in \mathcal{M}(m', n^{0.99}) \text{ such that } M \neq 1} \mathbb{E}_{Y \sim \mathcal{P}}[M(Y^{(\phi)})]^2.
    \end{align}
    is bounded as $o(1)$.
    
    We begin by writing the expected value for each monomial in terms of the probability that the planted edges fall entirely within the monomial.
    \begin{claim} \label{claim:intermsofplanted}
        Let $M \in \mathcal{M}(m', n^{0.99})$ be any non-constant multilinear monomial, and write $M$ as
        \[\Pi_{((j_1, \sigma_{j_1}), \ldots, (j_k, \sigma_{j_k})) \in \mathcal{S}}Y^{(\phi)}_{((j_1, \sigma_{j_1}), \ldots, (j_k, \sigma_{j_k}))},\]
        where $\mathcal{S} \subset ([n] \times \Sigma)^k$ is a subset of the edges represented by $Y^{(\phi)}$. Then
        \[\mathbb{E}_{Y \sim \mathcal{P}}[M(Y^{(\phi)})] \leq \vert\Gamma\vert^{\vert\mathcal{S}\vert/2} \cdot \Pr_{Y \sim \mathcal{P}}\Big[\sigma_{j_\ell} = \mat s_{j_\ell} \text{ for all } \ell \in [k] \text{ and } ((j_1, \sigma_{j_1}), \ldots, (j_k, \sigma_{j_k})) \in \mathcal{S}\Big].\]
    \end{claim}
    \begin{proof}
        We have that
        \begin{align*}
            & \mathbb{E}_{Y \sim \mathcal{P}}[M(Y^{(\phi)})] = \\
            & \qquad \mathbb{E}_{Y \sim \mathcal{P}}\Big[M(Y^{(\phi)}) \big\vert \text{ there exists } \ell \in [k] \text{ and } ((j_1, \sigma_{j_1}), \ldots, (j_k, \sigma_{j_k})) \in \mathcal{S} \text{ such that } \sigma_{j_\ell} \neq \mat s_{j_\ell}\Big] \\
            & \qquad \qquad \cdot \Pr_{Y \sim \mathcal{P}}\Big[\text{ there exists } \ell \in [k] \text{ and } ((j_1, \sigma_{j_1}), \ldots, (j_k, \sigma_{j_k})) \in \mathcal{S} \text{ such that } \sigma_{j_\ell} \neq \mat s_{j_\ell}\Big] \\
            & \qquad + \mathbb{E}_{Y \sim \mathcal{P}}\Big[M(Y^{(\phi)}) \big\vert \sigma_{j_\ell} = \mat s_{j_\ell} \text{ for all } \ell \in [k] \text{ and } ((j_1, \sigma_{j_1}), \ldots, (j_k, \sigma_{j_k})) \in \mathcal{S}\Big] \\
            & \qquad \qquad \cdot \Pr_{Y \sim \mathcal{P}}\Big[\sigma_{j_\ell} = \mat s_{j_\ell} \text{ for all } \ell \in [k] \text{ and } ((j_1, \sigma_{j_1}), \ldots, (j_k, \sigma_{j_k})) \in \mathcal{S}\Big].
        \end{align*}
        
        For the first term, we condition on at least one edge in the monomial being outside of the planted set of edges. By Equation (\ref{eq:normalexpectation}), the expected value for the variable corresponding to this edge is zero, independently of the other variables. So the first term is zero.

        For the second term, we condition on all edges in the monomial being inside of the planted set of edges. Therefore every variable in the monomial will equal $\phi(1) = \sqrt{\frac{1 - \vert\Gamma\vert^{-1}}{\vert\Gamma\vert}}$. This gives
        \begin{align*}
            \mathbb{E}_{Y \sim \mathcal{P}}\Big[M(Y^{(\phi)}) \big\vert \sigma_{j_\ell} = \mat s_{j_\ell} \text{ for all } \ell \in [k] \text{ and } ((j_1, \sigma_{j_1}), \ldots, (j_k, \sigma_{j_k})) \in \mathcal{S}\Big] = & (\phi(1))^{\vert\mathcal{S}\vert} \\
            = & \sqrt{\frac{1 - \vert\Gamma\vert^{-1}}{\vert\Gamma\vert^{-1}}}^{\vert\mathcal{S}\vert} \\
            \leq & \sqrt{\vert\Gamma\vert}^{\vert\mathcal{S}\vert}
        \end{align*}
        Multiplying this by $\Pr_{Y \sim \mathcal{P}}\Big[\sigma_{j_\ell} = \mat s_{j_\ell} \text{ for all } \ell \in [k] \text{ and } ((j_1, \sigma_{j_1}), \ldots, (j_k, \sigma_{j_k})) \in \mathcal{S}\Big]$ completes the proof.
    \end{proof}

    Now we bound the probability that all edges in a given monomial are inside of the planted set of edges. Here we appeal to the expansion of $\mat H$.

    \begin{claim} \label{claim:probinplanted}
        Let $M \in \mathcal{M}(m', n^{0.99})$ be any non-constant multilinear monomial, and write $M$ as
        \[\Pi_{((j_1, \sigma_{j_1}), \ldots, (j_k, \sigma_{j_k})) \in \mathcal{S}}Y^{(\phi)}_{((j_1, \sigma_{j_1}), \ldots, (j_k, \sigma_{j_k}))},\]
        where $\mathcal{S} \subset ([n] \times \Sigma)^k$ is a subset of the hyperedges represented by $Y^{(\phi)}$. Then
        \[\Pr_{Y \sim \mathcal{P}}\Big[\sigma_{j_\ell} = \mat s_{j_\ell} \text{ for all } \ell \in [k] \text{ and } ((j_1, \sigma_{j_1}), \ldots, (j_k, \sigma_{j_k})) \in \mathcal{S}\Big] \leq \vert\Sigma\vert^{-(1 - o(1))k\vert\mathcal{S}\vert}.\]
    \end{claim}
    \begin{proof}
        First we argue that if there exist two distinct $((j_1, \sigma_{j_1}), \ldots, (j_k, \sigma_{j_k})), ((j_1', \sigma_{j_1'}'), \ldots, (j_k', \sigma_{j_k'}')) \in \mathcal{S}$ such that $j_\ell = j_\ell'$ for all $\ell \in [k]$, then
        \[\Pr_{Y \sim \mathcal{P}}\Big[\sigma_{j_\ell} = \mat s_{j_\ell} \text{ for all } \ell \in [k] \text{ and } ((j_1, \sigma_{j_1}), \ldots, (j_k, \sigma_{j_k})) \in \mathcal{S}\Big] = 0 \leq \vert\Sigma\vert^{-(1 - o(1))k\vert\mathcal{S}\vert}.\]
        This is because, by the distinctness assumption, there must exist an $\ell \in [k]$ such that $\sigma_{j_\ell} \neq \sigma_{j_\ell'}'$. But because $j_\ell = j_\ell'$, the event in the probability statement only occurs if $\mat s_{j_\ell} = \sigma_{j_\ell}$ and $\mat s_{j_\ell} = \sigma_{j_\ell}'$ and thus $\mat s_{j_\ell} \neq \mat s_{j_\ell}$, which is not possible.

        So assume that every tuple $((j_1, \sigma_{j_1}), \ldots, (j_k, \sigma_{j_k})) \in \mathcal{S}$ has a distinct sub-tuple $(j_1, \ldots j_k)$. Now because $\mat H$ is a $(1 - o(1), n^{1 - o(1)})$-expander, and $\vert\mathcal{S}\vert \leq n^{0.99} < n^{1 - o(1)}$ by assumption, the set
        \[\mathcal{I} = \{j : \text{ there exists } \ell \in [k] \text{ and } ((j_1, \sigma_{j_1}), \ldots, (j_k, \sigma_{j_k})) \in \mathcal{S} \text{ such that } j = j_\ell\}\]
        has cardinality at least $(1 - o(1))k\vert\mathcal{S}\vert$. Over the randomness of $\mat s \in \Sigma^n$, this implies
        \[\Pr_{Y \sim \mathcal{P}}\Big[\mat s_{j_\ell} = \sigma_{j_\ell} \text{ for all } \ell \in [k] \text{ and } ((j_1, \sigma_{j_1}), \ldots, (j_k, \sigma_{j_k})) \in \mathcal{S}\Big] = \left(1/\vert\Sigma\vert\right)^{(1 - o(1))k\vert\mathcal{S}\vert}.\]
    \end{proof}
    
    Now all that's left is to compute the sum of squared expectations over all monomials. Start with (\ref{eq:startingsquaredsum}):
    \begin{align*}
        \sum_{M \in \mathcal{M}(m', n^{0.99}) \text{ such that } M \neq 1} \mathbb{E}_{Y \sim \mathcal{P}}[M(Y^{(\phi)})]^2
    \end{align*}
    and then apply Claim \ref{claim:intermsofplanted} (where $\mathcal{S}_M$ is the set of variables in monomial $M$):
    \begin{align*}
        = \sum_{M \in \mathcal{M}(m', n^{0.99}) \text{ such that } M \neq 1} \left(\vert\Gamma\vert^{\vert\mathcal{S}_M\vert/2} \cdot \Pr_{Y \sim \mathcal{P}}\Big[\sigma_{j_\ell} = \mat s_{j_\ell} \text{ for all } \ell \in [k] \text{ and } ((j_1, \sigma_{j_1}), \ldots, (j_k, \sigma_{j_k})) \in \mathcal{S}_M\Big]\right)^2.
    \end{align*}
    By Claim \ref{claim:probinplanted} this becomes
    \begin{align*}
        = \sum_{M \in \mathcal{M}(m', n^{0.99}) \text{ such that } M \neq 1} \left(\vert\Gamma\vert^{\vert\mathcal{S}_M\vert/2} \cdot \vert\Sigma\vert^{-(1-o(1))k\vert\mathcal{S}_M\vert}\right)^2.
    \end{align*}
    Each of the terms only depends on the size of the (multilinear) monomial. Because there are $m' = m\vert\Sigma\vert^k$ variables in $Y$, we can rewrite the above as
    \begin{align*}
        = & \sum_{1 \leq t \leq n^{0.99}} \binom{m\vert\Sigma\vert^k}{t} \left(\vert\Gamma\vert^{t/2} \cdot \vert\Sigma\vert^{-(1-o(1))kt}\right)^2 \\
        \leq & \sum_{1 \leq t \leq n^{0.99}} \left(m\vert\Sigma\vert^k\right)^t \left(\vert\Gamma\vert^{t/2} \cdot \vert\Sigma\vert^{-(1-o(1))kt}\right)^2 \\
        \leq & \sum_{1 \leq t \leq n^{0.99}} \left(m\vert\Sigma\vert^k\right)^t \left(\vert\Sigma\vert^{3kt/8} \cdot \vert\Sigma\vert^{-(1-o(1))kt}\right)^2 \tag{$\vert\Gamma\vert \leq \vert\Sigma\vert^{3k/4}$} \\
        \leq & \sum_{1 \leq t \leq n^{0.99}} \vert\Sigma\vert^{(1 + o(1))kt} \vert\Sigma\vert^{-((5/4) - o(1))kt} \tag{$\vert\Sigma\vert = m^{\omega(1)}$}\\
        \leq & \sum_{1 \leq t \leq n^{0.99}} \vert\Sigma\vert^{-((1/4) + o(1))kt} \\
        \leq & \vert\Sigma\vert^{-\Omega(k)}.
    \end{align*}
    Putting everything together, the degree-$n^{0.99}$ advantage for Problem \ref{prob:LARPhypergraph} is $o(1)$.
\end{proof}

\begin{remark}
    The only place in the proof where we use the expansion of $\mat H$ is in Claim \ref{claim:probinplanted}. As such, the same lower bound proof actually works for polynomials of degree all the way up to the expansion cutoff.
\end{remark}

\subsection{Polynomial Calculus Refutations} \label{sec:polynomialcalculus}

In this section we prove lower bounds for Problem \ref{prob:LARP} against the \emph{refutation} version of the polynomial calculus proof system. Recall that in this setting, the adversary is given a tuple $(\mat H, \mathcal{F}, \mat b)$, and the goal is to certify that there does not exist a secret vector $\mat s \in \Sigma^n$ such that $\mat b_i = f_i(\mat s_{N_{\mat H}(i, 1)}, \ldots, \mat s_{N_{\mat H}(i, k)})$ for all $i \in [m]$. Our lower bounds are similar to those given by Applebaum and Lovett \cite{applebaum2016algebraic} for Goldreich's PRG.

To get started, recall Problem \ref{prob:LARP} (which as explained before is just the corruption-free version of the decision problem in Conjecture \ref{conj:LARP-CSP}, the LARP-CSP Conjecture):

\paragraph{Problem \ref{prob:LARP} \textnormal{(Restated)}.}
    \emph{Let $\mat H$ be any $(1-o(1), n^{1 - o(1)})$-expanding $(m, n, k)$-matrix, where $k = (\log n)^{\Theta(1)}$ and $m \leq n^{o(k)}$. Let $\Sigma, \Gamma$ be any alphabets satisfying $\vert\Sigma\vert = (nm)^{\log^{\Theta(1)}(nm)}$ and $\vert\Gamma\vert \leq \vert\Sigma\vert^{3k/4}$. Sample $m$ random functions $f_i : \Sigma^k \rightarrow \Gamma$, and let $\mathcal{F}$ be the set of all $f_i$. The task is to distinguish between the following two distributions.
    \begin{enumerate}
        \item Null distribution $\mathcal{Q}$: $(\mat H, \mathcal{F}, \mat b)$, where $\mat b \in \Gamma^m$ is sampled uniformly at random.
        \item Planted distribution $\mathcal{P}$: $(\mat H, \mathcal{F}, \mat b)$, where we sample $\mat s \in \Sigma^n$ at random and set
        \[\mat b_i = f_i(\mat s_{N_{\mat H}(i, 1)}, \ldots, \mat s_{N_{\mat H}(i, k)}).\]
    \end{enumerate}}
\vspace{10px}
\noindent
For the lower bound in this section, \emph{we assume that $\vert\Sigma\vert$ is a power of two.}

\paragraph{Formalizing the Proof System.}
Polynomial calculus refutations work as follows \cite{clegg1996using}. We first initialize a system of polynomials $g_1(X) = 0, \ldots g_m(X) = 0$ in a set of variables $X$ over some field. A new polynomial equation $g(X) = 0$ is appended to the system by either
\begin{enumerate}
    \item Setting $g(X) = g'(X) + g''(X)$ for any polynomials $g'(X), g''(X)$ already in the system.
    \item Setting $g(X) = X_i \cdot g'(X)$ for any variable $X_i$ and any polynomial $g'(X)$ already in the system.
\end{enumerate}
The goal of a refutation is to derive the statement $1 = 0$ via these operations; such a derivation is only possible if the original system was unsatisfiable, because the derivation rules cannot derive an unsatisfiable system from a satisfiable one. As in the lower bounds given by Applebaum and Lovett \cite{applebaum2016algebraic}, we allow an adversary to have \emph{nondeterministic power} when choosing the derivation steps she would like to perform. So the only measure of complexity of the proof will be its size, i.e. the total number of monomials in the proof after all polynomials are written in expanded form. This is natural in the sense that any algorithm executing a polynomial calculus refutation in the most natural manner will have to, at the very least, write out all of these monomials.\footnote{A smarter algorithm could conceivably exploit nontrivial cancellations that occur without needing to fully distribute the polynomials. But it appears that the massive entropy coming from our random functions, in addition to the fact that each one is sampled independently at random, rules out such cancellations. In any case, counting the number of monomials in the polynomials' distributed form is standard for polynomial calculus lower bounds \cite{applebaum2016algebraic}.}

For most applications, the initial system we'd like to refute already has a natural field structure. But for Problem \ref{prob:LARP} the functions in $\mathcal{F}$ are completely random. To handle this discrepancy, we fix the field $\mathbb{F}_2$ and then allow an adversary \emph{unbounded time} to choose the best embedding from $\Sigma, \Gamma$ into vectors over $\mathbb{F}_2$. We restrict ourselves to $\mathbb{F}_2$ because the polynomials representing each constraint $\mat b_i = f_i(\mat s_{N_{\mat H}(i, 1)}, \ldots, \mat s_{N_{\mat H}(i, k)})$ will be of minimal degree when interpreted over $\mathbb{F}_2$ as opposed to a field of larger order, which is helpful for an adversary. Moreover, the random functions in $\mathcal{F}$ have no affinity for any particular field; indeed the only structured portion of Problem \ref{prob:LARP} for our cryptographic applications will be the matrix $\mat H$, which is of low complexity specifically with respect to $\mathbb{F}_2$. Now we formally define polynomial calculus refutations for Problem \ref{prob:LARP}. Recall that we assume $\vert\Sigma\vert$ is a power of two.

\begin{definition}[Polynomial Calculus Refutation for Problem \ref{prob:LARP}] \label{def:mypolycalc}
    The refutation algorithm proceeds as follows. Let $X$ be a set of $n$ variables over $\Sigma$, interpreted as entries of the unknown secret $\mat s \in \Sigma^n$, and assume that the adversary has access to the entire tuple $(\mat H, \mathcal{F}, \mat b)$.
    \begin{enumerate}
        \item The adversary non-deterministically finds the best (bijective) embeddings $\phi_1, \ldots, \phi_n : \Sigma \rightarrow \mathbb{F}_2^{\log_2\vert\Sigma\vert}$ for each variable $X_j$. Let $\mat X'$ be a matrix of $n\log_2\vert\Sigma\vert$ variables over $\mathbb{F}_2$ such that $\mat X'_{j, 1}, \ldots, \mat X'_{j, \log_2\vert\Sigma\vert}$ represents $X_j$.
        \item The adversary initializes a polynomial system over $\mat X'$ by appending, for each $i \in [m]$, the constraint $f_i'(\mat X') = 0$, where $f_i'$ is the unique polynomial over $\mathbb{F}_2$ such that
        \begin{enumerate}
            \item $f_i'$ is supported on the variable set $\mat X'_{N_{\mat H}(i, 1), 1}, \ldots, \mat X'_{N_{\mat H}(i, 1), \log_2\vert\Sigma\vert}, \mat X'_{N_{\mat H}(i, 2), 1}, \ldots, \mat X'_{N_{\mat H}(i, k), \log_2\vert\Sigma\vert}$.
            \item For all $(\sigma_1, \ldots, \sigma_k) \in \Sigma^k$ such that $f_i(\sigma_1, \ldots, \sigma_k) = \mat b_i$, we have
            \[f_i'(\phi_{N_{\mat H}(i, 1)}(\sigma_1), \ldots, \phi_{N_{\mat H}(i, k)}(\sigma_k)) = 0.\]
            \item For all other inputs, $f_i'$ equals $1$.
        \end{enumerate}
        \item The adversary (non-deterministically) appends a new polynomial $g(\mat X') = 0$ to the system, by either
        \begin{enumerate}
            \item Setting $g(\mat X') = g'(\mat X') + g''(\mat X')$ for any polynomials $g'(\mat X'), g''(\mat X')$ already in the system.
            \item Setting $g(\mat X') = \mat X'_{j, \ell} \cdot g'(\mat X')$ for any variable $\mat X'_{j, \ell}$ and any polynomial $g'(\mat X')$ already in the system.
        \end{enumerate}
    \end{enumerate}
    The goal of the adversary is derive the equation $1 = 0$, and we measure the size of the proof by the number of nonzero monomials in the distributed form for each polynomial written at each step.
\end{definition}

\subsubsection{Proving the Lower Bound}
As stated before, our techniques are similar to those used by \cite{applebaum2016algebraic} to prove polynomial calculus lower bounds for Goldreich's PRG. Informally speaking, we show that any polynomial calculus refutation for Problem \ref{prob:LARP} must have size at least $\exp{n^{0.99}}$.

\begin{theorem} \label{thm:polycalc}
    Assume that $\vert\Sigma\vert$ is a power of two and that $n$ is sufficiently large. With probability $1 - \exp{-\vert\Sigma\vert^{\Omega(k)}}$ over the choice of functions in $\mathcal{F}$, there does not exist a polynomial calculus refutation (in the sense of Definition \ref{def:mypolycalc}) for a tuple $(\mat H, \mathcal{F}, \mat b)$ drawn from the null distribution $\mathcal{Q}$ in Problem \ref{prob:LARP} containing less than $\exp{n^{0.99}}$ nonzero monomials.
\end{theorem}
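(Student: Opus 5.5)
The plan follows Applebaum and Lovett \cite{applebaum2016algebraic}: a degree lower bound, then the size--degree tradeoff. The tradeoff of Impagliazzo, Pudl\'ak and Sgall (polynomial calculus analogue of Ben-Sasson--Wigderson) says: if every refutation of a system in $N$ Boolean variables, with initial degree $d_0$, needs degree $D$, then every refutation has size $\exp{\Omega((D-d_0)^2/N)}$. Here $N = n\log_2\vert\Sigma\vert$ and $d_0 \le k\log_2\vert\Sigma\vert$, both polylogarithmic factors times $n$ or smaller. So it suffices to show that, with probability $1-\exp{-\vert\Sigma\vert^{\Omega(k)}}$ over $\mathcal{F}$, every refutation needs degree $D \ge n^{0.999}\log^{O(1)}n$. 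Then $D^2/N \ge n^{0.99}$ for large $n$.

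\textbf{Degree lower bound via an $R$-operator.} I would build a linear operator $R$ on polynomials of degree $\le D$ (the Razborov method used by \cite{applebaum2016algebraic}) with four properties: $R(1)=1$; $R(f_i')=0$ for every $i$; $R(\mat X'_{j,\ell}\,g)=R(\mat X'_{j,\ell}\,R(g))$ whenever $\deg g<D$; and $R$ is linear. Together these forbid a degree-$D$ derivation of $1=0$. For a monomial $t$, let $V(t)\subseteq[n]$ be the set of original variables it touches, so $\vert V(t)\vert\le D$. Using the expansion of $\mat H$, I would define a closure $\mathrm{Cl}(V(t))$: a set of at most $O(\vert V(t)\vert)$ constraints such that every constraint outside it has a large boundary outside the closure. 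This is the standard boundary-expansion closure, and $(1-o(1))$-expansion up to size $n^{1-o(1)}$ makes it exist with $\mathrm{Cl}$ of size $\le n^{1-o(1)}$. Then $R(t)$ is the reduction of $t$ modulo the ideal generated by $\{f_i' : i\in\mathrm{Cl}(V(t))\}$ together with the Boolean axioms. The key locality fact is that the $f_i'$ outside the closure can be satisfied no matter how variables inside it are set. With that fact, the consistency and multiplication properties follow exactly as in the Alekhnovich--Razborov / Applebaum--Lovett argument.

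\textbf{Where the random functions enter.} The locality fact needs a property of the $f_i$ that holds with overwhelming probability. The cleanest form: for every constraint $i$, every set $A$ of at most $k/10$ of its $k$ positions, and every assignment to those positions, some extension to the other positions satisfies $f_i=\mat b_i$. This holds for all $\mat b_i$ simultaneously, so it covers the null distribution. The number of completions is $\vert\Sigma\vert^{k-\vert A\vert}\ge\vert\Sigma\vert^{9k/10}$, and each succeeds independently with probability $1/\vert\Gamma\vert\ge\vert\Sigma\vert^{-3k/4}$. So the failure probability for one choice is $\exp{-\vert\Sigma\vert^{\Omega(k)}}$. A union bound over $m\cdot 2^k\cdot\vert\Sigma\vert^{k/10}\cdot\vert\Gamma\vert$ choices keeps it $\exp{-\vert\Sigma\vert^{\Omega(k)}}$. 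Expansion $1-o(1)$ guarantees that any constraint outside the closure has at most $o(k)<k/10$ positions inside it. That gives local satisfiability, independent of the embeddings $\phi_j$, since they are bijections.

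\textbf{Main obstacle.} I expect the hardest step to be checking the multiplication property $R(\mat X'_{j,\ell}R(g))=R(\mat X'_{j,\ell}g)$. The closure grows when a variable is added, so I must show that reducing modulo a smaller closure and then a larger one agrees with reducing directly modulo the larger. This needs the extension property above applied to the constraints in the difference of the closures, which have many boundary variables. Here the bookkeeping of $\log_2\vert\Sigma\vert$ Boolean variables per symbol must be handled: the degree $D$ counts Boolean variables, so $\vert V(t)\vert\le D$ still holds. I would try to copy \cite{applebaum2016algebraic} verbatim, with their ``$\ell$-wise independence/robustness'' of the predicate replaced by the extension property.
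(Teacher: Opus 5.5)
Your outer structure matches the paper: a refutation-degree lower bound from expansion plus a local property of each constraint, followed by the Impagliazzo--Pudl\'ak--Sgall size--degree tradeoff. The gap is that the local property you feed into the $R$-operator is the wrong one, and the argument breaks exactly at the closure-growth step you flagged. Local satisfiability (every partial assignment to few positions extends) is the kind of hypothesis that drives \emph{resolution width} lower bounds. It does not give polynomial calculus degree lower bounds over $\mathbb{F}_2$.

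Concretely, take $\Sigma=\Gamma=\mathbb{F}_2^{s}$, identity embeddings, and every $f_i(\sigma_1,\ldots,\sigma_k)=\sigma_1+\cdots+\sigma_k$. Every partial assignment to fewer than $k$ positions extends, and $\mat H$ can be your expander. Let $L_{i,\ell}$ be the $\ell$-th coordinate of $\sum_{j}\sigma_{N_{\mat H}(i,j)}+\mat b_i$, an affine form in the Boolean variables. Then $L_{i,\ell}\cdot f_i'=L_{i,\ell}$ modulo the Boolean axioms, so each $L_{i,\ell}=0$ is derivable in degree $ks+1$. Whenever the instance is unsatisfiable, some sum of the $L_{i,\ell}$ is the constant $1$. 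This gives a refutation of polylogarithmic degree and size $\mathrm{poly}(m,\vert\Sigma\vert^{k})$, so ``expansion $+$ extension'' alone cannot produce an $R$-operator with your four properties.

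The failure mechanism is this. When a constraint $f_j'$ with large boundary enters the closure, the ideal $\langle f_j'\rangle$ may contain nonzero \emph{low-degree} polynomials in its own variables (here the $L_{j,\ell}$). These change the reduced forms of low-degree monomials even though every partial assignment extends. Alekhnovich--Razborov rule this out with \emph{immunity}: no nonzero low-degree $g$ with $g f_j'=g$ or $g f_j'=0$. Likewise, in Applebaum--Lovett the hypothesis behind the polynomial calculus bound is rational degree, not resiliency or robustness.

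The missing ingredient is the paper's algebraic lemma. With probability $1-e^{-\vert\Sigma\vert^{\Omega(k)}}$, for every $i$, every $\mat b_i\in\Gamma$, and every choice of bijective embeddings $\phi_1,\ldots,\phi_k$, the polynomial $f_i'$ has rational degree at least $k\log_2\vert\Sigma\vert/100$. Taking $(p,q)=(g,g)$ or $(g,0)$ shows this implies immunity. The proof is again a tail bound plus a union bound, but over low-degree witnesses rather than partial assignments:
\begin{itemize}
\item Fix $(\phi,\mat b_i,p,q)$ with $p\neq 0$ of degree at most $k\log_2\vert\Sigma\vert/100$. The Reed--Muller distance bound gives $\vert\{p=1\}\vert\ge\vert\Sigma\vert^{0.99k}$.
\item $f_i'$ must agree with $q$ on all of these points, each independently with probability at most $1-1/\vert\Gamma\vert$. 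This gives failure probability $e^{-\vert\Sigma\vert^{0.24k}}$.
\item There are only $2^{\vert\Sigma\vert^{k/9}}$ choices to union over.
\end{itemize}
Unlike your extension property, this property depends on the embeddings. The union over all $(\vert\Sigma\vert!)^{k}$ embeddings is therefore essential, and it is affordable because $k$ is large. With this lemma, the paper does not rebuild the $R$-operator. It invokes Alekhnovich--Razborov's theorem as a black box, using the expansion of $\mat H$ lifted to the $\log_2\vert\Sigma\vert$ Boolean variables per symbol together with high rational degree, to get refutation degree $n^{1-o(1)}$. It then concludes with the size--degree bound as you planned.
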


Before proving the theorem, we give a basic definition and import two key lemmas from \cite{impagliazzo1999lower, alekhnovich2001lower}.

\begin{definition}[Rational Degree]
    The \emph{rational degree} of a polynomial $g : \mathbb{F}_2^w \rightarrow \mathbb{F}_2$ is the smallest integer $d$ such that there exist degree-$\leq d$ polynomials $p, q : \mathbb{F}_2^w \rightarrow \mathbb{F}_2$, not both zero, such that $pg = q$.
\end{definition}

At an intuitive level, a polynomial $g$ with low rational degree can be ``converted'' into a low degree polynomial for the purposes of a polynomial calculus refutation, even if the degree of $g$ was originally very high. Rational degree will be of central importance in our lower bound, because demonstrating that the polynomials in our system have high rational degree will allow us to apply the following lemma.

\begin{lemma}[Simplified version of Theorem 3.8 from \cite{alekhnovich2001lower}] \label{lem:getlargemonomial}
    Suppose we have a polynomial system $g_1(X) = 0, \ldots, g_m(X) = 0$ over a set $X$ of variables in $\mathbb{F}_2$. Assume that
    \begin{enumerate}
        \item Every polynomial depends on at most $k'$ variables.
        \item For all subsets $\mathcal{G}$ of at least 1 polynomial and at most $t$ polynomials, $\mathcal{G}$ depends on at least $(1 - o(1))tk'$ distinct variables. \label{item:expand}
        \item Every polynomial has rational degree $\Omega(k')$.
    \end{enumerate}
    Then any polynomial calculus refutation of the system $g_1(X) = 0, \ldots, g_m(X) = 0$ has at least one monomial of degree $\Omega(tk')$.
\end{lemma}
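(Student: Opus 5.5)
The plan is to follow the Alekhnovich--Razborov pseudo-ideal method. Set $D = \varepsilon t k'$ for a small constant $\varepsilon > 0$. I would construct a linear operator $R$ on polynomials of degree at most $D$ over $\mathbb{F}_2$ with three properties:
\begin{enumerate}
    \item $R(1) \neq 0$.
    \item $R(g_i) = 0$ for every axiom, and more generally $R(x\cdot h)=R(x\cdot R(h))$ whenever $\deg(x h)\le D$.
    \item $R$ is linear.
\end{enumerate}
Suppose all monomials of a refutation had degree at most $D$. Applying $R$ along the derivation, induction on the derivation shows that $R$ maps every derived line to $0$. This contradicts $R(1) \neq 0$, so some monomial must have degree greater than $D = \Omega(tk')$.

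To build $R$, I would use the expansion hypothesis (item \ref{item:expand}) in its boundary form. By the usual counting, any set $\mathcal{G}$ of at most $t$ polynomials in which almost all variable occurrences are distinct has most of its members owning $\Omega(k')$ \emph{private} variables, i.e.\ variables that appear in no other member of $\mathcal{G}$. For a set of variables $V$ with $|V| \le D$, I would define its closure $\mathrm{Cl}(V)$ as a minimal set of constraints whose removal restores boundary expansion relative to $V$. The standard argument, choosing $\varepsilon$ small relative to the $o(1)$ loss and the rational-degree constant, shows $|\mathrm{Cl}(V)| = O(|V|/k') \le t$. For a monomial $M$, set
\[
R(M) := \text{the reduction of } M \text{ modulo the ideal generated by } \{g_i : i \in \mathrm{Cl}(\mathrm{vars}(M))\},
\]
taken with respect to a fixed degree-compatible order, and extend $R$ linearly.

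The core technical step is a local consistency lemma. Let $S \subseteq S'$ be constraint sets of size at most $t$ with $S \supseteq \mathrm{Cl}(V)$. Then every polynomial of degree at most $D$ that lies in the ideal generated by $S'$ and is supported on $V$ already lies in the ideal generated by $S$; in particular $1$ is not in the ideal generated by $S$. This is where the rational degree hypothesis is used. I would peel off constraints of $S' \setminus S$ one at a time, each time choosing a constraint $g$ that has $\Omega(k')$ private variables outside $V$ and outside the other remaining constraints. If a low-degree combination $p\,g + (\text{rest})$ is supported away from those private variables, then restricting over the private variables turns it into an identity of the form $p' g = q'$ with $\deg p', \deg q' \le D$. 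If $D$ is below the rational degree $\Omega(k')$ per constraint, this forces $p' = q' = 0$, so the $g$-term can be dropped. The same argument applied to $S$ itself shows $1 \notin \mathrm{Ideal}(S)$, which gives $R(1) = 1 \neq 0$.

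Given the consistency lemma, the multiplication property follows by comparing the reductions modulo $\mathrm{Cl}(V)$ and modulo $\mathrm{Cl}(V \cup \{x\})$. Their union still has size at most $t$, so the lemma says the two reductions agree. I expect the main obstacle to be this consistency lemma, specifically the interplay between the budget $D = \varepsilon t k'$ and the per-constraint rational degree $\Omega(k')$. Rational degree is a property of a single constraint, but the peeling argument restricts a polynomial of degree up to $D$, which may be far larger than $k'$. My first attempt would be to restrict the variables of $p, q$ to the private variables of $g$ only, which have $O(k')$ relevant coordinates, so that only the restricted degree matters. If needed, I would use multilinearity over $\mathbb{F}_2$ to bound that restricted degree by $k'$, which lets the rational degree hypothesis apply.
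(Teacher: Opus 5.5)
Your sketch has a genuine gap: the consistency lemma you isolate is too weak to give the properties of $R$, and the rational-degree hypothesis is used in the wrong place. (For comparison, the paper does not prove this lemma at all. It cites Alekhnovich--Razborov's Theorem~3.8 and only checks that $(1-o(1))$-expansion gives boundary expansion and that rational degree implies immunity.)

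Membership of a $V$-supported polynomial in $\mathrm{Ideal}(S')$ is a purely semantic matter. Over the Boolean ring, an ideal generated by constraints is the set of functions vanishing on their common solutions. So your lemma holds as soon as every assignment consistent with $\mathrm{Cl}(V)$ extends to $S'$ without changing $V$. In your own peeling step, once the non-private variables are fixed to values satisfying the other constraints, the right-hand side is a \emph{constant}. All you actually use is that $g$ stays satisfiable; no degree bound on $p'$ is needed.

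Consequently your lemma holds verbatim for parity constraints $x_{j_1}+\cdots+x_{j_{k'}}+b$ on the same expander. So does everything else in your plan: the closure size bound and $1\notin\mathrm{Ideal}(S)$ for small $S$. Yet unsatisfiable parity systems have degree-$1$ refutations over $\mathbb{F}_2$. So the step ``given the consistency lemma, the two reductions agree'' cannot be valid.

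The reason is that normal forms of monomials over $V$ need not be supported on $V$. Take $g=x_1+\cdots+x_{k'}+b$ with $x_1$ the largest variable. Then $\mathrm{Cl}(\{x_1\})=\emptyset$ gives $R(x_1)=x_1$, but the normal form of $x_1$ modulo $\langle g\rangle$ is $x_2+\cdots+x_{k'}+b$. Hence $R(g)=g\neq 0$, even though your lemma holds. For the same reason the multiplication step is not just a comparison of $\mathrm{Cl}(V)$ with $\mathrm{Cl}(V\cup\{x\})$: the monomials of $R(h)$ live on $V\cup\mathrm{Vars}(\mathrm{Cl}(V))$.

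What is missing is an argument that adding constraints outside the closure creates no new \emph{leading monomials} among monomials over $W=V\cup\mathrm{Vars}(\mathrm{Cl}(V))$. That is where rational degree genuinely enters. Your worry about the budget $D$ comes from aiming at the wrong statement.

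Consider a single outside constraint $g$ whose private variables avoid $W$.
\begin{itemize}
\item Besides the Boolean axioms, a Gr\"obner basis of $\langle g\rangle$ consists of polynomials in $\mathrm{vars}(g)$.
\item So a multilinear monomial over $W$ is a leading monomial iff its part inside $\mathrm{vars}(g)$ is.
\item That part has degree at most the number of non-private variables of $g$.
\item A nonzero $h$ of that degree with $hg=h$ contradicts the rational-degree bound.
\end{itemize}
So the relevant degree is the number of non-private variables, not $D$. The closure must therefore leave every outside constraint with all but a fraction (below the rational-degree constant) of its variables private. Having $\Omega(k')$ private variables is not enough, and this is exactly where the $(1-o(1))$ expansion is needed.

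Your multilinearity fix cannot work. Every function of $k'$ Boolean variables has rational degree at most $\lceil k'/2\rceil$, so a degree bound of $k'$ never falls below it. Handling many overlapping outside constraints at once, where the Gr\"obner bases no longer split, is the substantive step, and your sketch does not address it. Without it, the sound route is the paper's: cite Alekhnovich--Razborov and translate the hypotheses.
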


\begin{remark}
    The conditions we impose on the polynomial system in this lemma are equivalent to those in Theorem 3.8 from \cite{alekhnovich2001lower}, but phrased slightly differently. As stated by Alekhnovich and Razborov \cite{alekhnovich2001lower}, the lemma actually requires the polynomial system to have strong \emph{boundary expansion}, but our notion of expansion in Item \ref{item:expand} immediately implies the required boundary expansion. Additionally, \cite{alekhnovich2001lower} require the polynomials to satisfy a condition they call \emph{immunity}, which is implied by our notion of rational degree. We adopt the rational degree perspective because it aligns with the techniques used by \cite{applebaum2016algebraic}.
\end{remark}

The lemma below allows us to say that the existence of a large monomial in the proof implies that the proof itself is very long.

\begin{lemma}[\cite{impagliazzo1999lower} Theorem 6.2] \label{lem:fromdegtosize}
    If $G$ is a set of polynomials in $n$ variables, each of degree at most $d$, and $G$ has a polynomial calculus refutation with $M$ nonzero monomials, then $G$ has a polynomial calculus refutation of degree at most $\max(d, 2\left\lceil\sqrt{n\log M}\right\rceil + 1)$.
\end{lemma}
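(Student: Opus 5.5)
The plan is to follow the Impagliazzo--Pudl\'ak--Sgall degree-reduction argument. I will assume, as is standard for polynomial calculus over $\mathbb{F}_2$, that the Boolean axioms $x^2 - x = 0$ are available for every variable. I will then prove by induction a slightly stronger statement with two parameters. Fix a threshold $W = \lceil\sqrt{n\log M}\,\rceil$ and call a monomial \emph{fat} if its degree exceeds $W$. The claim to prove is: if $G$ (degree $\le d$) over $n$ variables has a refutation containing at most $F$ fat monomials, then $G$ has a refutation of degree at most $\max\bigl(d,\ W + 1 + \tfrac{n}{W}\ln F\bigr)$, with the convention that $\ln F$ is replaced by $0$ when $F < 1$. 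Since $F \le M$, this gives $\max(d, 2\lceil\sqrt{n\log M}\rceil + 1)$ up to the choice of logarithm base, which is exactly the statement of Lemma \ref{lem:fromdegtosize}.

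\emph{Base case.} If $F < 1$, there are no fat monomials, so the given refutation already has degree at most $\max(d, W)$.

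\emph{Inductive step: finding a good variable.} Suppose $F \ge 1$. Each fat monomial contains more than $W$ variables, so the total number of (fat monomial, variable) incidences exceeds $FW$. By averaging, some variable $x$ occurs in at least $FW/n$ fat monomials.

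\emph{Inductive step: the two restrictions.}
\begin{itemize}
\item Restricting $x = 0$ maps the refutation to a refutation of $G|_{x=0}$. This kills every monomial containing $x$, so the restricted refutation has at most $F(1 - W/n)$ fat monomials, still over at most $n$ variables.
\item Restricting $x = 1$ gives a refutation of $G|_{x=1}$ over $n-1$ variables with at most $F$ fat monomials. Restriction never increases degree, so the fat monomials of the restricted refutation come from fat monomials of the original one.
\end{itemize}
Applying the induction hypothesis to both restrictions gives low-degree refutations of $G|_{x=0}$ and $G|_{x=1}$, of degrees $d_0$ and $d_1$ respectively.

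\emph{Inductive step: recombining.} I will use the standard lifting lemma. If $G|_{x=c}$ has a refutation of degree $D$, then $G$ has a derivation of $x - c$ (equivalently $x + c$ over $\mathbb{F}_2$) of degree at most $D + 1$. To prove it, multiply every line of the restricted refutation by $x - c$ when $c=0$, or by $1 - x$ when $c=1$, and use the Boolean axioms to absorb the differences between $g$ and $g|_{x=c}$. This gives a derivation of $x$ from $G$ in degree $d_0 + 1$. Then I replay the refutation of $G|_{x=1}$ inside $G$: each axiom $g \in G$ differs from $g|_{x=1}$ by a multiple of $x - 1$, which is derivable from the already-derived $x$ (since $x - 1 = x + 1$ over $\mathbb{F}_2$, and $1 = x + (x - 1)$ handles constants). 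This costs degree at most $\max(d_1, d_0 + 1)$. The resulting recurrence is
\[
D(n, F) \le \max\bigl(D(n, F(1 - W/n)) + 1,\ D(n-1, F)\bigr).
\]
Unrolling it, the first branch can be taken at most $\tfrac{n}{W}\ln F$ times before $F$ drops below $1$, because $(1 - W/n)^{t} \le e^{-tW/n}$. The second branch only decreases $n$, and that only helps, since $W$ and $n/W$ are monotone in $n$ in the right direction provided $W$ is fixed from the original $n$. This yields the claimed bound $W + 1 + \tfrac{n}{W}\ln M \le 2\lceil\sqrt{n\log M}\,\rceil + 1$.

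\emph{Expected main obstacle.} The delicate step is the lifting/recombination lemma. One has to check carefully that multiplying a degree-$D$ refutation of $G|_{x=c}$ by a linear factor really gives a legal derivation from $G$ in degree $D + 1$: the restricted axioms must be re-expressed via the Boolean axioms without blowing up degree, and the combination step must not add more than one to the degree along the $x=0$ branch. Once that bookkeeping is nailed down, the averaging argument and the recurrence are routine.
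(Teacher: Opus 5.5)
Your overall plan is the standard Impagliazzo--Pudl\'ak--Sgall argument: pick a variable occurring in many fat monomials, restrict it to $0$ and to $1$, recombine, and unroll a recurrence with the $+1$ on the $x=0$ branch. However, the recombination step, which you yourself flag as delicate, is stated backwards and is unsound as written.

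A refutation of $G|_{x=c}$ certifies that $x\neq c$ on every Boolean solution of $G$. So what it lifts to is a derivation of $x-(1-c)$, not of $x-c$. Your version already fails for $G=\{x+1\}$: here $G|_{x=0}=\{1\}$ is refuted in degree $0$, yet $G$ is satisfied by $x=1$, so no sound derivation of $x$ from $G$ exists.

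The multipliers are swapped for the same reason. Modulo $x^2-x$ one has $x\cdot g\equiv x\cdot g|_{x=1}$ and $(1-x)\cdot g\equiv (1-x)\cdot g|_{x=0}$. Multiplying the lines of the $x=0$ refutation by $x$ therefore produces lifted axioms $x\cdot g|_{x=0}$, which are in general not derivable from $G$. Finally, your bridge step, that $x-1$ is derivable from the already-derived $x$, is false. From $x$ you can never soundly obtain $x+1$, and the identity $1=x+(x-1)$ presupposes that you already have $x-1$.

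\textbf{The repair.} Lift the $x=0$ refutation $R_0$ with the multiplier $1-x$.
\begin{itemize}
\item For multilinear $g$, the lifted axiom $(1-x)g|_{x=0}$ is the multilinear reduction of $g-x\cdot g$. It is obtainable from $g$ in degree at most $\max(d,\deg g|_{x=0}+1)$.
\item Addition and multiplication steps commute with the factor $1-x$, and the final line $1$ becomes $1-x$. This yields $x-1$ (up to sign) in degree $\max(d,d_0+1)$.
\item This is exactly what replaying $R_1$ needs: $g-g|_{x=1}=(x-1)\,\partial_x g$ with $\deg\partial_x g\le d-1$, so each $g|_{x=1}$ is derivable from $x-1$ and $g$ in degree at most $d$.
\end{itemize}
The total degree is $\max(d,d_0+1,d_1)$, with the $+1$ on the branch that removes fat monomials, as your recurrence requires. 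The orientation matters. The other sound option (lift $R_1$ by $x$ to obtain $x$, then replay $R_0$) puts the $+1$ on the branch that only removes a variable, and it unrolls to degree about $n$.

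\textbf{Two smaller points.}
\begin{itemize}
\item The incidence count (\emph{each fat monomial contains more than $W$ variables}) needs the lines to be multilinear. State that you first reduce every line of the given refutation modulo the Boolean axioms, which does not increase its number of monomials.
\item Your recurrence silently drops the $d$ in $\max(d,d_0+1,d_1)$. The induction \emph{degree at most $W+t$} closes when $d\le W$, which covers the paper's application ($d=\log^{\Theta(1)}n$). But when $d\ge W+t$ the term $d_0+1$ can equal $d+1$. For large $d$ you should either assume $d\le W$ or handle that case separately.
\end{itemize}
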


Now we begin working towards a proof of Theorem \ref{thm:polycalc}. Below we show that, even though an adversary can choose arbitrary embeddings from $\Sigma$ to $\mathbb{F}_2^{\log_2\vert\Sigma\vert}$ for each variable representing the secret $\mat s$ in Problem \ref{prob:LARP}, all of the constraints will have high rational degree with high probability. The lemma just formalizes the intuition that the random functions in $\mathcal{F}$ have far too much entropy to be summarized by a low degree polynomial.

\begin{lemma} \label{lem:gethighrational}
    Let $n$ be a parameter, and let $k = (\log n)^{\Theta(1)}$ and $m \leq n^{o(k)}$. Choose any alphabets $\Sigma, \Gamma$ satisfying $\vert\Sigma\vert = (nm)^{\log^{\Theta(1)}(nm)}$ and $\vert\Gamma\vert \leq \vert\Sigma\vert^{3k/4}$. Sample $m$ random functions $f_i : \Sigma^k \rightarrow \Gamma$. Then with probability $1 - \exp{-\vert\Sigma\vert^{\Omega(k)}}$, there does not exist an index $i \in [m]$, a value $\mat b_i \in \Gamma$, and a set of (bijective) embeddings $\phi_1, \ldots, \phi_k : \Sigma \rightarrow \mathbb{F}_2^{\log_2\vert\Sigma\vert}$ such that $f_i'$ has rational degree less than $\frac{k\log_2\vert\Sigma\vert}{100}$. Here $f_i'$ is the unique polynomial over $\mathbb{F}_2$ such that
    \begin{enumerate}
        \item $f_i'$ is supported on the variable set $\mat X'_{N_{\mat H}(i, 1), 1}, \ldots, \mat X'_{N_{\mat H}(i, 1), \log_2\vert\Sigma\vert}, \mat X'_{N_{\mat H}(i, 2), 1}, \ldots, \mat X'_{N_{\mat H}(i, k), \log_2\vert\Sigma\vert}$.
        \item For all $(\sigma_1, \ldots, \sigma_k) \in \Sigma^k$ such that $f_i(\sigma_1, \ldots, \sigma_k) = \mat b_i$, we have
        \[f_i'(\phi_{N_{\mat H}(i, 1)}(\sigma_1), \ldots, \phi_{N_{\mat H}(i, k)}(\sigma_k)) = 0.\]
        \item For all other inputs, $f_i'$ equals $1$.
    \end{enumerate}
\end{lemma}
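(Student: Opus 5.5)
The approach is a counting argument combined with a union bound, in the same spirit as the proof of Theorem \ref{thm:lowerboundembedding}. Write $w = k\log_2\vert\Sigma\vert$ and $D = w/100$. After fixing $i$, $\mat b_i$, and the embeddings, $f_i'$ is the indicator (over $\mathbb{F}_2^w$) of the complement of the embedded preimage set $A = \{x : f_i = \mat b_i\}$. So the target event is a statement about the random set $A$, which contains each of the $2^w$ points independently with probability $p = 1/\vert\Gamma\vert \geq \vert\Sigma\vert^{-3k/4} = 2^{-3w/4}$. Composing with a fixed bijection of $\mathbb{F}_2^w$ preserves this product distribution. Therefore for each fixed tuple $(i, \mat b_i, \phi_1,\ldots,\phi_k)$ it suffices to bound the probability that the indicator $g = 1_{\overline{A}}$ has rational degree $< D$. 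We then union bound over at most $m\cdot\vert\Gamma\vert\cdot(\vert\Sigma\vert!)^k \leq 2^{O(w\vert\Sigma\vert)}$ choices. This bound is only $\exp(\vert\Sigma\vert\,\mathrm{polylog})$, far below $\exp(\vert\Sigma\vert^{\Omega(k)})$.

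The key step characterizes low rational degree combinatorially. Suppose $pg = q$ with $p, q$ of degree $< D$, not both zero. On points of $A$ (where $g=0$) we get $q = 0$, and on points of $\overline A$ we get $p = q$. This splits into two cases.

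\emph{Case 1: $q \not\equiv 0$.} Then $q$ is a nonzero polynomial of degree $< D$ vanishing on all of $A$. By the Schwartz--Zippel/Reed--Muller distance bound (Lemma \ref{lem:RMdistance}), $q$ is nonzero on at least $2^{w-D}$ points, all of which must lie in $\overline A$.

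\emph{Case 2: $q \equiv 0$.} Then $p \not\equiv 0$ vanishes on all of $\overline A$, so $\overline A$ is contained in the zero set of $p$. That zero set misses at least $2^{w-D}$ points, so $\vert A\vert \geq 2^{w-D}$. Since $\mathbb{E}\vert A\vert = 2^w p$ could exceed $2^{w-D}$, Case 2 needs a sharper argument. We instead fix $p$ and require the random set $A$ to contain the nonzero set $Z_p$ of $p$, which has size $\geq 2^{w-D} = 2^{0.99w}$. This happens with probability $p^{2^{0.99w}}$. We then union bound over all $2^{\binom{w}{<D}}$ polynomials $p$.

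Case 1 is handled the same way: fix $q$ and require $A \cap Z_q = \emptyset$. This has probability $(1-p)^{\vert Z_q\vert} \leq \exp(-p\,2^{0.99w}) \leq \exp(-2^{0.24w})$, again union bounded over the choices of $q$.

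The main obstacle is checking that the number of polynomials, $2^{\binom{w}{\leq w/100}} \leq 2^{2^{H(0.01)w}}$ with $H(0.01)\approx 0.081$, is beaten by both tail bounds.
\begin{itemize}
\item Case 1: the tail is $\exp(-2^{0.24w})$ and $0.081 < 0.24$, so it wins.
\item Case 2: the tail is $2^{-(3w/4)2^{0.99w}}$, which wins trivially.
\end{itemize}
Each failure probability is therefore $\exp(-2^{\Omega(w)}) = \exp(-\vert\Sigma\vert^{\Omega(k)})$. This dominates the $2^{O(w\vert\Sigma\vert)}$ outer union bound over $(i,\mat b_i,\phi)$, because $\vert\Sigma\vert^{\Omega(k)} \gg \vert\Sigma\vert\log\vert\Sigma\vert$ when $k = \log^{\Theta(1)} n \to\infty$. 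The remaining care is in the constants: we need $\vert\Gamma\vert \le \vert\Sigma\vert^{3k/4}$ so that $p \ge 2^{-3w/4}$, together with $\binom{w}{\le w/100} \le 2^{H(0.01)w}$.
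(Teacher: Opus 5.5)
Your proof is correct and is essentially the paper's argument. You fix $(i,\mathbf{b}_i,\phi)$ and a low-degree witness, use the Reed--Muller distance bound to find at least $2^{0.99w}$ points where the random $f_i'$ is forced to fixed values, and union bound over witnesses and then over $(i,\mathbf{b}_i,\phi)$. The only difference is that the paper splits on $p\not\equiv 0$ rather than on $q$, forcing $f_i'=q$ on the support of $p$ with probability at most $(1-1/\vert\Gamma\vert)^{2^{0.99w}}$; that difference is cosmetic.

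One slip to fix: in Case 2, the inequality $1/\vert\Gamma\vert \geq 2^{-3w/4}$ points the wrong way for upper-bounding $(1/\vert\Gamma\vert)^{2^{0.99w}}$. Use $1/\vert\Gamma\vert \leq 1/2$ instead, which gives $2^{-2^{0.99w}}$ and still easily beats the $2^{2^{0.081w}}$ polynomials. This relies on $\vert\Gamma\vert\geq 2$, an assumption both arguments implicitly need anyway.
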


\begin{proof}
    First we give a sharp tail bound on the probability that, for a single choice of index $i$, value $\mat b_i$, embeddings $\phi_1, \ldots, \phi_k$, and low degree polynomials $p, q$, a random function $f_i$ will satisfy $pf_i = q$. Then we union bound over all choices of $i, \mat b_i, \phi_1, \ldots, \phi_k, p, q$ to complete the proof.
    \begin{claim} \label{claim:witnessnotimmune}
        Fix an index $i \in [m]$, a value $\mat b_i \in \Gamma$, a set of (bijective) embeddings $\phi_1, \ldots, \phi_k : \Sigma \rightarrow \mathbb{F}_2^{\log_2\vert\Sigma\vert}$, and two degree-$\leq \frac{k\log_2\vert\Sigma\vert}{100}$ polynomials $p, q : \mathbb{F}_2^{k\log_2\vert\Sigma\vert} \rightarrow \mathbb{F}_2$, at least one of which is nonzero. Then sample a random function $f_i : \Sigma^k \rightarrow \Gamma$, and define $f_i'$ as in the statement of the lemma. With probability $1 - \exp{-\Omega(\vert\Sigma\vert^{\frac{24}{100}k})}$, we have $p \cdot f_i' \neq q$.
    \end{claim}
    \begin{proof}
        First consider the case that $p$ is identically zero. Then $p \cdot f_i'$ is identically zero, in which case $p \cdot f_i' = q$ if and only if $q$ is identically zero. This contradicts the assumption in the claim.
        
        Now assume that $p$ is nonzero and $q$ is any (potentially identically zero) polynomial, and let $\mathcal{S}$ be the set of points for which $p$ equals one. For every point in $\mathcal{S}$, the evaluation of $p \cdot f_i'$ is equal to the evaluation of $f_i'$, so a necessary condition to have $p \cdot f_i' = q$ is for $f_i'$ to equal $q$ on all points in $\mathcal{S}$.
        
        Because of the degree bound on $p$, we have that the truth table of $p$ is a member of the Reed-Muller code RM$(k\log_2\vert\Sigma\vert, \frac{k\log_2\vert\Sigma\vert}{100})$. By Lemma \ref{lem:RMdistance}, the minimum distance of RM$(k\log_2\vert\Sigma\vert, \frac{k\log_2\vert\Sigma\vert}{100})$ is
        \[2^{k\log_2\vert\Sigma\vert - \frac{k\log_2\vert\Sigma\vert}{100}} = 2^{\frac{99}{100}k\log_2\vert\Sigma\vert},\]
        which implies that $\vert\mathcal{S}\vert \geq 2^{\frac{99}{100}k\log_2\vert\Sigma\vert}$. Since $\mat b_i$, the (bijective) embeddings $\phi_1, \ldots, \phi_k$, and the polynomial $p$ were all fixed before sampling $f_i'$, each point in $\mathcal{S}$ has the evaluation of $f_i'$ equal to zero independently with probability $1 - 1/\vert\Gamma\vert$. Since $q$ was also fixed before sampling $f_i$, the probability that $f_i'$ and $q$ have equal evaluations on all points in $\mathcal{S}$ is at most $(1 - 1/\vert\Gamma\vert)^{\vert\mathcal{S}\vert}$. Using that $\vert\Gamma\vert \leq \vert\Sigma\vert^{3k/4}$, the assumption that $\Sigma$ is sufficiently large, and the fact that $(1 - 1/x)^{x} < 1/2$ for sufficiently large $x$, the probability that $f_i$ and $q$ have equal evaluations for all points in $\mathcal{S}$ is upper bounded as
        \begin{align*}
            (1 - 1/\vert\Gamma\vert)^{2^{\frac{99}{100}k\log_2\vert\Sigma\vert}} \leq & (1 - \vert\Sigma\vert^{-3k/4})^{2^{\frac{99}{100}k\log_2\vert\Sigma\vert}} \\
            < & (1/2)^{2^{\frac{24}{100}k\log_2\vert\Sigma\vert}} \\
            = & 2^{-\vert\Sigma\vert^{\frac{24}{100}k}} \\
        \end{align*}
    \end{proof}

    All that's left is to union bound over all choices of $i \in [m]$, $\mat b_i \in \Gamma$, bijective embeddings $\phi_1, \ldots, \phi_k : \Sigma \rightarrow \mathbb{F}_2^{\log_2\vert\Sigma\vert}$, and degree-$\leq \frac{k\log_2\vert\Sigma\vert}{100}$ polynomials $p, q : \mathbb{F}_2^{k\log_2\vert\Sigma\vert} \rightarrow \mathbb{F}_2$ (we also count the extraneous case that both $p$ and $q$ are zero). Using the standard approximation that $\binom{x}{y} < (3x/y)^y$, there are
    \begin{enumerate}
        \item $m$ choices for the index $i$.
        \item $(\vert\Sigma\vert!)^k \leq \vert\Sigma\vert^{k\vert\Sigma\vert}$ choices for the embeddings $\phi_1, \ldots, \phi_k$
        \item $2^{\binom{k\log_2\vert\Sigma\vert}{\big(\frac{k\log_2\vert\Sigma\vert}{100}\big)}} \leq 2^{300^{\frac{k\log_2\vert\Sigma\vert}{100}}} < 2^{\vert\Sigma\vert^{\frac{k}{10}}}$ choices for the polynomial $p$.
        \item $2^{\binom{k\log_2\vert\Sigma\vert}{\big(\frac{k\log_2\vert\Sigma\vert}{100}\big)}} \leq 2^{300^{\frac{k\log_2\vert\Sigma\vert}{100}}} < 2^{\vert\Sigma\vert^{\frac{k}{10}}}$ choices for the polynomial $q$.
    \end{enumerate}
    Using that $k = (\log n)^{\Theta(1)} = (\log m)^{\Theta(1)}$ and $\vert\Sigma\vert = m^{\omega(1)}$, and assuming $\vert\Sigma\vert$ is sufficiently large, there are
    \[m \cdot \vert\Sigma\vert^{k\vert\Sigma\vert} \cdot 2^{\vert\Sigma\vert^{\frac{k}{10}}} \cdot 2^{\vert\Sigma\vert^{\frac{k}{10}}} < 2^{\vert\Sigma\vert^{k/9}}\]
    total choices. So by Claim \ref{claim:witnessnotimmune} and the definition of rational degree, the probability that there exists an index $i \in [m]$, a value $\mat b_i \in \Gamma$, and a set of (bijective) embeddings $\phi_1, \ldots, \phi_k : \Sigma \rightarrow \mathbb{F}_2^{\log_2\vert\Sigma\vert}$ such that $f_i'$ has rational degree less than $\frac{k\log_2\vert\Sigma\vert}{100}$ is at most
    \[2^{\vert\Sigma\vert^{k/9}} \cdot 2^{-\Omega(\vert\Sigma\vert^{\frac{24}{100}k})} = \exp{-\vert\Sigma\vert^{\Omega(k)}}.\]
\end{proof}

Now we are ready to prove the lower bound on polynomial calculus refutations.

\begin{proof}[Proof of Theorem \ref{thm:polycalc}]
    Condition on the event in Lemma \ref{lem:gethighrational} occurring for the set of functions $\mathcal{F}$, which occurs with probability $1 - \exp{-\vert\Sigma\vert^{\Omega(k)}}$. In this case, every function $f_i'$ derived from each $(f_i, \mat b_i)$ pair will have rational degree $\Omega(k\log_2\vert\Sigma\vert)$, regardless of what the $\mat b_i$ values are and regardless of what embeddings $\phi_1, \ldots, \phi_n : \Sigma \rightarrow \mathbb{F}_2^{\log_2\vert\Sigma\vert}$ the adversary picks.
    
    Now examine the polynomial system $f_1' = 0, \ldots, f_m' = 0$, which is the system (over $\mathbb{F}_2$) for which the adversary wishes to find a short refutation. Because $\log_2\vert\Sigma\vert = \log^{\Theta(1)}n$, we know that there will be $n\log_2\vert\Sigma\vert = n\log^{\Theta(1)}n$ variables. We know by assumption on the matrix $\mat H$ and by Lemma \ref{lem:gethighrational} that the polynomial system satisfies the preconditions of Lemma \ref{lem:getlargemonomial} for an expansion cutoff $t = n^{1 - o(1)}$, so any polynomial calculus refutation must have at least one nonzero monomial of degree at least $n^{1 - o(1)}$.

    Now suppose for contradiction that there exists a polynomial calculus refutation for this system having less than $\exp{n^{0.99}}$ nonzero monomials. Observe that, because we are working over $\mathbb{F}_2$, every polynomial $f_i'$ can have degree at most the number of variables it depends on, which is $\log^{\Theta(1)}n$. By Lemma \ref{lem:fromdegtosize}, all of this taken together with the assumption that the problem size is sufficiently large implies that there is a polynomial calculus refutation of degree at most
    \[\max\left(\log^{\Theta(1)}n, 2\bigg\lceil\sqrt{n\log^{\Theta(1)}\cdot n^{0.99}}\bigg\rceil + 1\right) = n^{0.995 + o(1)} < n^{1 - o(1)},\]
    which is not possible.
\end{proof}

\pagebreak
\bibliographystyle{alpha}
\bibliography{refs.bib}
\end{document}